\numberwithin{equation}{section}
\numberwithin{figure}{section}
\newtheorem{theorem}{Theorem}[section]
\newtheorem{lemma}[theorem]{Lemma}
\newtheorem{prop}[theorem]{Proposition}
\newtheorem{conj}[theorem]{Conjecture}
\newtheorem{corollary}{Corollary}[theorem]
\newtheorem{defn}{Definition}[section]
\newtheorem{probenv}{Problem}[section]
\newenvironment{varscope}{}{}
\mathchardef\mhyphen="2D % Define a "math hyphen"
\newcommand*{\eg}{e.g.\@\xspace}
\newcommand*{\ie}{i.e.\@\xspace}
\newcommand*{\etc}{%
    \@ifnextchar{.}%
        {etc}%
        {etc.\@\xspace}%
}
\newcommand*{\etal}{%
    \@ifnextchar{.}%
        {et al}%
        {et al.\@\xspace}%
}
\newcommand*{\iid}{%
    \@ifnextchar{.}%
        {i.i.d}%
        {i.i.d.\@\xspace}%
}
\makecommand{\nats}{{\mathbb N}}
\makecommand{\ints}{{\mathbb Z}}
\makecommand{\perm}{\sigma}
\makecommand{\varperm}{\varsigma}
\makecommand{\pile}{y}
\makecommand{\phase}{t}
\makecommand{\numphase}{T}
\makecommand{\numpile}{m}
\makecommand{\ordervar}{\psi}
\makecommand{\embed}{\phi}
\makecommand{\qshuffle}{{\rm \qsym {\text -} shuffle}}
\makecommand{\sshuffle}{{\rm \ssym {\text -} shuffle}}
\makecommand{\pilesfn}{{\rm piles}}
\makecommand{\qphase}{{\rm multi {\text -} phase}}
\makecommand{\sphase}{{\rm s {\text -} phases}}
\makecommand{\digit}{{\rm d}}
\makecommand{\Numpiles}{M}
\makecommand{\twodots}{\mathinner {\ldotp \ldotp}}
\makecommand{\schedules}{{\mathbf y}}
\makecommand{\cumread}{{\rm cumread}}
\makecommand{\shuffleOp}{{\rm shuffle}}
\makecommand{\pileType}{\chi}
\makecommand{\pileTypeInd}{\kappa}
\makecommand{\assignfn}{\pile}
\makecommand{\assignFn}{\assignfn}
\makecommand{\pileIndex}{p}
\makecommand{\condReverseFn}{r}
\makecommand{\assignTypeFn}{z}
\makecommand{\assignTypeSum}{Z}
\makecommand{\numPileSeq}{M}
\makecommand{\numPile}{\numpile}
\makecommand{\numPhase}{\numphase}
\makecommand{\numAccum}{\nats}
\makecommand{\reals}{{\mathbb R}}
\makecommand{\reversePerm}{\perm_{\rm J}}
\makecommand{\pileTypeVec}{{\boldsymbol\chi}}
\makecommand{\pileAssignments}{{\mathcal X}}
\makecommand{\indord}{{\rm ord}}
\makecommand{\arityFn}{{\rm arity}}
\makecommand{\xor}{\oplus}
\makecommand{\permsize}{n}
\makecommand{\nclause}{{n_{\rm c}}}
\makecommand{\nvar}{{n_{\rm v}}}
\makecommand{\radix}{r}
\makecommand{\logict}{{\rm T}}
\makecommand{\logicf}{{\rm F}}
\makecommand{\qsym}{{\mathbf q}}
\makecommand{\ssym}{{\mathbf s}}
\makecommand{\arbpile}{{\mathcal P}}
\def\NP/{NP}
\def\NPHard/{NP-Hard}
\def\NPComp/{NP-Complete}
\def\POLY/{P}
\makecommand{\SAT}{{\rm SAT}}
\makecommand{\notimplies}{\centernot\implies}
\makecommand{\prob}{{\mathbb P}}
\makecommand{\expect}{{\mathbb E}}
\makecommand{\normdist}{{\mathcal N}}
\makecommand{\fnDomain}{{\rm Dom}}
\makecommand{\fnRange}{{\rm Rng}}
\makecommand{\dfneq}{\triangleq}
\title{
Sorting by pile shuffles on queue-like and stack-like piles
%is (sometimes)
can be 
hard
}
\author{Kyle B. Treleaven \\
%\small \texttt{ktreleav@\{alum.mit.edu, gmail.com\}}
\small \texttt{ktreleav@alum.mit.edu}
}
\date{June 5, 2025}
\begin{document}

\maketitle

\begin{minipage}{0.9\textwidth}
\begin{abstract}

Inspired by a common technique for shuffling a deck of cards on a table without riffling,
we continue the study of a prequel paper
on the pile shuffle and its capabilities as a sorting device.
We study two sort feasibility problems
of general interest
concerning pile shuffle,
first introduced in the prequel.
These problems are characterized by:
(1) bounds on the number of sequential rounds of shuffle, and piles created in each round;
(2) the use of a heterogeneous mixture of queue-like and stack-like piles, 
as when
each round of shuffle may have a combination of face-up and face-down piles;
and
(3) the ability of the dealer to choose the types of piles used during each round of shuffle.
We prove
by a sequence of reductions from the Boolean satisfiability problem (SAT) that
the more general problem is \NPHard/.
We leave as an open question the complexity of its arguably more natural companion,
but discuss avenues for further investigation.
Our analysis leverages
a novel framework,
introduced herein,
which equates instances of shuffle
to members of a particular class of deterministic finite automata.
\end{abstract}

\end{minipage}

\vspace{2em}
{\bf Keywords.}
Pile shuffle, Sortable permutations, Computational complexity, \NPHard/.

\section{Introduction}
\label{sec:intro}

Pile shuffle is a common method for shuffling a deck of cards on a table without riffling,
and is more or less a non-mechanical version
of the \emph{shelf shuffle}~\cite{10.1214/12-AAP884, FULMAN2021112448}
used in shuffling machines in casinos:
A deck of cards is dealt into piles on an empty card table, and
the resulting piles are stacked up in some order to form the new deck.
We typically think of shuffle as a randomly driven process,
which may be repeated until the deck is ``shuffled enough''.

Considerable existing literature
explores the interesting combinatorics and probability theory questions
about the possible outcomes of randomized shuffles, including shelf shuffle~\cite{10.1214/12-AAP884, FULMAN2021112448}.
However,
other questions arise if
we may choose the actions within a shuffle deliberately to achieve a desired outcome.
In this paper
we continue a study
from a companion prequel paper~\cite{treleaven2025sortingpermutationspileshuffle},
of the capabilities
of pile shuffle
as a device
for sorting decks of cards.
We note however that sort pertains to randomization
in that
%by recognizing that
good mixing might be achieved most directly
by combining a sorting device with a randomized labeler.

%Pile shuffle is limited as a sorting device in that
%not every deck of cards can be sorted in a bounded number of ``rounds'' of shuffle on a bounded number of piles.
%
We study two sort feasibility problems
of general interest in pile shuffle,
first introduced in~\cite{treleaven2025sortingpermutationspileshuffle}.
Our study
%is novel in that we consider
considers a relatively general version of
pile shuffle,
allowing a heterogeneous mixture
of queue-like and stack-like piles, 
as when
each round of shuffle may use a combination of face-up and face-down piles:
\begin{probenv}[Repeated-round Dealer's choice pile shuffle sort]
\label{prob:repeated-round}
%The \emph{repeated-round heterogeneous pile shuffle sort problem} is
Can a given deck of cards be sorted into a predefined order
using $\numphase \geq 0$ rounds of pile shuffle,
with no more than $\numpile \geq 1$ piles created in each round?
The dealer may choose the \emph{type} of each pile (queue or stack) arbitrarily in each round of shuffle.
\end{probenv}
\begin{probenv}[Variable-round Dealer's choice pile shuffle sort]
\label{prob:variable-round}
%The \emph{variable-round heterogeneous pile shuffle sort problem} is,
Given per-round pile capacities $\left( \numpile_1, \ldots, \numpile_\numphase \right)$,
can a given deck of cards be sorted into a predefined order
using $\numphase$ rounds of pile shuffle,
where
$\numpile_\phase$ is the maximum number of piles allowed in round $\phase$,
$1 \leq \phase \leq \numphase$?
Again, the dealer may choose the type of each pile arbitrarily in each round of shuffle.
\end{probenv}

Problem~\ref{prob:repeated-round} represents a more or less organic scenario,
of shuffle in bounded time in physical piles, on a surface
whose shape does not change from round to round.
Problem~\ref{prob:variable-round} generalizes Problem~\ref{prob:repeated-round}
to a degree,
%---the former contains all the instances of the latter---%
in that the pile capacity can change from round to round.

In this paper,
we demonstrate
by a sequence of novel reductions from the Boolean satisfiability problem (SAT) that
at least one of those variants is \NPHard/.
In particular,
we prove the main result of the paper:
\begin{theorem}
\label{thm:variable-round-hard}
The variable-round Dealer's choice pile shuffle sort problem (Prob.~\ref{prob:variable-round}) is \NPHard/.
\end{theorem}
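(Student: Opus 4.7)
The plan is to exhibit a polynomial-time reduction from SAT (or a standard variant such as 3-SAT) to Problem~\ref{prob:variable-round}. Given a SAT instance with $\nvar$ variables and $\nclause$ clauses, I would construct a permutation (the input deck) together with a sequence $(\numpile_1,\ldots,\numpile_\numphase)$ of per-round pile capacities so that the constructed instance is sortable exactly when the formula is satisfiable. The dealer's only freedom in each round is the choice of queue vs.\ stack on each pile and the assignment of cards to piles; the reduction must convert these choices into a consistent Boolean assignment to the $\nvar$ variables and ensure that a valid sort exists precisely when the induced assignment satisfies every clause.

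First, I would use the deterministic-finite-automaton framework promised in the abstract to reframe pile shuffle symbolically: each round corresponds to a transition determined by the tuple of pile types the dealer selects, and sortability becomes an acceptance question in the associated DFA. Working at this level insulates the reduction from low-level bookkeeping about specific card positions and lets me focus on which choice-sequences are reachable. The fact that ``a sequence of reductions'' is advertised suggests introducing an intermediate combinatorial problem on this DFA class, which I would define so that SAT embeds into it in a direct way and it, in turn, embeds into pile shuffle sortability.

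Next, I would design gadgetry in two stages. The \emph{variable stage} uses one or a few dedicated rounds in which $\nvar$ distinguished piles play the role of truth assignments: the dealer's choice of queue vs.\ stack on pile $i$ encodes the Boolean value of $x_i$, and I would interleave tokens in the input deck so that the resulting intermediate order records exactly the chosen literal readings. The \emph{clause stage} uses subsequent rounds whose low capacities act as filters: for each clause I would inject a small pattern of cards that can only be routed through its round when at least one of its literals was read as true in the variable stage, making the clause gadget infeasible under any falsifying assignment.

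The principal technical obstacle, I expect, is enforcing \emph{consistency}: the dealer acts independently in every round, yet a single variable appears in many clauses that may be resolved in different rounds. To cope with this I would chain the clause rounds so that each variable's encoded value is propagated forward by marker cards whose relative order is locked in by the variable stage and cannot be rearranged without destroying sortability. A secondary difficulty is preventing cross-talk between gadgets; this should be manageable by a padding argument that isolates gadget sub-decks using sentinel values of sufficiently separated rank, together with per-round capacity bounds that forbid cards from distinct gadgets from mixing. Once these pieces are verified and the construction is shown to be polynomial in $\nvar+\nclause$, the claimed equivalence yields Theorem~\ref{thm:variable-round-hard}.
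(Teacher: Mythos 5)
Your high-level framing agrees with the paper: reduce from SAT, pass through the chain-automaton reformulation so that sortability becomes acceptance of a change profile by a dealer-chosen chain, and let the dealer's queue/stack choices on the piles of one distinguished round encode the truth assignment. But your clause architecture diverges in a way that creates, rather than solves, the central difficulty. You propose one (or a few) rounds per clause, and you correctly identify that this forces a consistency problem: the dealer chooses types independently in each round, so nothing a priori makes the assignment "read" by clause round $j$ agree with that read by clause round $j'$. Your proposed fix---marker cards that "propagate" the encoded values forward---is not a mechanism available in this model: the only coupling between rounds is the global requirement that the final deck be sorted, and the paper's entire automaton framework shows that the dealer's type choices enter only through which chain is instantiated, not through any card-level handshake between rounds. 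As described, the reduction would admit spurious satisfying shuffles in which different clause rounds decode different assignments.

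The paper's resolution is structurally different and is the idea your proposal is missing: it uses only \emph{three} rounds total. The assignment lives once, in the round-2 type vector $A \in \arbpile^{n+1}$; all $m$ clauses live in the \emph{input string} (the change profile of the constructed permutation), as concatenated clause gadgets. Because composition distributes over concatenation (Lemma~\ref{lemma:concat-lemma}), the composed chain $\designAlign \backslash A \backslash \qsym^m$ is literally $m$ identical copies of $\designAlign \backslash A$ laid end to end, so every clause gadget is tested against the \emph{same} assignment and consistency is automatic. Each clause gadget is an AND/OR gate realized by trajectory positions (Lemma~\ref{lemma:clause-gadget}), with a "disqualified" lane that, once entered, propagates to rejection. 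Beyond this, the paper must do substantial extra work your plan does not anticipate: forcing the third round to behave like all-queues despite dealer freedom (the $\gtQ$ gadget plus the dual/inverse equivalence of Lemma~\ref{lemma:equivs}), and forcing the first round to be $\designAlign$-aligned (the long $\gtAlign$ penalty words of Section~\ref{sec:variable-round-feasibility-complexity}). Without an analogue of these de-constraining steps, a reduction that merely assumes the dealer cooperates in the non-variable rounds does not establish hardness of Prob.~\ref{prob:variable-round}.
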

Unfortunately our study leaves as an open question whether repeated-round sort feasibility is \NPHard/.
However, the author's conjecture, which we discuss
at the end of the paper,
is that it is.
\begin{conj}
\label{conj:repeated-hard}
The repeated-round Dealer's choice pile shuffle sort problem (Prob.~\ref{prob:repeated-round}) is \NPHard/.
\end{conj}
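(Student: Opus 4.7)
The plan is a polynomial-time reduction from the variable-round problem of Theorem~\ref{thm:variable-round-hard} to the repeated-round problem, which combined with that theorem would establish the conjecture. Given a variable-round instance with capacities $(\numpile_1, \ldots, \numpile_\numphase)$ and target sort on deck $d$, I would construct a repeated-round instance with uniform capacity $\numpile = \Numpiles := \max_\phase \numpile_\phase$ and $\numphase' = O(\numphase \cdot \Numpiles)$ rounds, alternating ``work'' rounds that simulate the original rounds with ``staging'' rounds that set up padder cards to consume the excess capacity.

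The key construction is as follows. For each work round simulating original round $\phase$, introduce $\Numpiles - \numpile_\phase$ padder cards, each assigned a unique rank in the target sort order that forces it to follow a specific pile trajectory through all remaining rounds. In a preceding staging round, interleave the padders among the real cards so that, entering the next work round, the only placement of padders consistent with the target sort uses exactly $\Numpiles - \numpile_\phase$ distinct piles---one per padder---leaving exactly $\numpile_\phase$ piles available for the real cards. The dealer still chooses the type (queue or stack) of each padder pile, but by giving each padder a singleton pile across rounds, the type choice becomes immaterial for the padder and unable to propagate extra flexibility to the real cards.

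The main obstacle is the correctness direction from repeated-round solvability back to variable-round solvability: one must rule out ``sneaky'' dealer strategies that co-mingle padders with real cards to simulate operations the variable-round dealer could not perform. I would attack this via the DFA framework of the paper, maintaining an invariant after each staging round asserting that the automaton states reachable in the constructed instance are in bijection with those reachable by the variable-round dealer. The subtlest piece is the dealer's freedom over pile types in the work rounds themselves; a sound construction must either force a unique admissible type for each padder pile (by exploiting the padder's target rank) or prove the choice is symmetry-equivalent.

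The main risk is that this sort of capacity padding may be fundamentally limited: variable-round instances may exist whose pile-type choices are inherently impossible to force under uniform capacity, or the staging overhead may corrupt the clause-verification structure used in Theorem~\ref{thm:variable-round-hard}. As a backup I would abandon the reduction to variable-round and instead attempt a fresh direct reduction from \SAT{} tailored to the repeated-round setting, replacing the role of varied capacities with orchestrated sequences of identical gadget rounds whose compositions realize the same clause-verification combinatorics; such a direct attack would probably require new lemmas characterizing, via the DFA framework, which permutations are reachable by $\numphase$ rounds of uniform-capacity shuffle.
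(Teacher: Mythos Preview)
The statement you are attempting to prove is explicitly left \emph{open} in the paper: Conjecture~\ref{conj:repeated-hard} is a conjecture, not a theorem, and the paper provides no proof. Section~\ref{sec:on-conj} discusses partial progress and candidate strategies but concludes that ``the author has not yet discovered any viable combination of these two approaches.''

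Your primary plan---reducing the variable-round problem to the repeated-round problem via padder cards and staging rounds---is a genuinely different route from anything the paper attempts. The paper instead works directly from $\SAT$, proving that sort on $\designAlign^k \backslash \arbpile^{6k} \backslash \arbpile^{6k}$ (the problem $\probAlias{5}$ of Table~\ref{table:all-problems}) is \NPHard/; this is \emph{nearly} repeated-round (all three rounds have $6k$ piles) but still carries a forced first-round type assignment $\designAlign^k$. The paper's suggested next steps are (i) to remove that alignment constraint by adapting the penalization technique of Section~\ref{sec:variable-round-feasibility-complexity}, or (ii) to discover a repeatable pattern within the key word $\designAlign$ and the low-level gadgets so that the first round can be widened without forcing types. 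Your stated \emph{backup} plan---a fresh direct reduction from $\SAT$ tailored to identical gadget rounds---is in fact much closer in spirit to what the paper actually does and recommends.

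Regarding your primary plan: the obstacle you flag (ruling out ``sneaky'' dealer strategies that co-mingle padders with real cards) is real and, as far as the paper is concerned, unresolved in any form. The difficulty is structural rather than bookkeeping: the dealer's freedom to choose pile \emph{types} in every round is precisely the source of hardness, and padder cards in themselves do not curtail that freedom---you would need the target order to make every deviating placement and every deviating type choice fail simultaneously, which is essentially the same alignment-forcing problem the paper could not close for $\probAlias{5}$. Your invariant-via-DFA idea is plausible scaffolding, but you have not supplied the construction that would make the invariant hold, and the paper offers no such construction either. In short, your proposal is a reasonable research plan toward an open problem, not a proof, and the paper itself has none to compare it to.
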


The paper leverages
a novel framework,
introduced herein,
which equates instances of shuffle
to members of a particular class of deterministic automata.

\subsection{Literature Review}

Sorting is one of the fundamental problems in computer science~\cite{Knuth:art3},
with deep connections to combinatorics and computational complexity theory.
%%
%It has attracted significant research since the beginning of computing,
%no doubt because the theoretical properties of sorting algorithms have direct and significant impact
%on the performance of large-scale computer systems.
%
The most popular branches of the literature
tend to focus on the typical random-access model of computer memory,
where data at a particular memory address may be mutated at unit cost.
In such setting the most common objectives are to obtain space- and time- efficient sort algorithms,
optimizing resource consumption.

Sorting problems based on the physical constraints of storage present other unique challenges, and
are also of both practical and theoretical significance.
In physically-motivated settings like pile shuffle,
even asking whether a collection of items can be sorted at all (feasibility)
can become an interesting question.

\emph{Patience sort.}
%While pile shuffle and shelf shuffle~\cite{FULMAN2021112448} use extremely similar mechanisms,
Our study of pile shuffle sort
is arguably closest to that of Patience sort~\cite{chandramouli2014patience, burstein2006combinatorics},
in that it is similarly motivated by sorting in piles, rather than randomizing.
%them.
%
Patience sort
represents a greedy strategy for the so-called Floyd's Game, and
happens to be an efficient means to compute longest increasing subsequences.
Both Patience sort and sort by pile shuffle are so-called \emph{distribution sorts}~\cite[p.~168]{Knuth:art3}, in that
the sort occurs in a distribution phase, followed by a collection phase.
However, Patience sort is a kind of \emph{merge sort}---%
the piles are merged together into the output one card at a time---%
whereas pile shuffle sort is a so-called \emph{bucket sort}---%
the piles are concatenated, each as a whole, in a chosen order.
%
%\todo{kill?}{Indeed, one might call ours ``Impatience'' sort.}
%

While bucket sorts classically employ some kind of sub-sort within each bucket (sometimes recursively)
that is not always physically practical.
Instead, pile shuffle sort is a non- sub-sorted, a.k.a. \emph{pure}, bucket sort.
In lieu of sub-sorting,
we rely on
multiple rounds of shuffle
to augment the power of a bounded number of piles,
at the expense of more time to execute the shuffle.
In the motivating case of sort with a physical facility (\eg, a table of a specific size)
that is often a necessary trade-off.

\emph{Sorting networks.}
%\subsubsection{Sorting on networks of stacks and queues}
There is a significant body of literature about the sortable permutations of so-called sorting \emph{networks}~%
\cite{bona2002survey, adda7f6a9afa4f559318c219c37a8dfe, halperin_complexity_2008}.
%
%Indeed Patience sort can be cast within this framework~\cite{tarjan}.
While the study was founded originally on networks of queues and stacks, 
a wide variety of more complex shuffle devices---%
including ones inspired by card shuffling---%
have been studied~\cite{pudwell_sorting_2023,dimitrov_sorting_2022}.
Although bucket sorts like pile shuffle might also be cast in the network framework,
the fit is not entirely natural, and
the author is not aware of previous such treatments.

\emph{Pancake sort.}
The so-called ``Pancake'' sort~\cite{GATES197947} is another physically inspired problem,
of sorting a disordered stack of pancakes by repeatedly inserting a spatula at some point in the stack and flipping all pancakes above it.
Pancake sort appears in applications in parallel processor networks, and can provide an effective routing algorithm between processors~%
\cite{GARGANO1993315,4032188}.
Pancake sort has also been called an ``educational device'', and 
it was shown to be \NPHard/ in~\cite{BULTEAU20151556} by a reduction from $3$SAT.
In this paper
we prove
by reduction from SAT
that
at least some variants of Dealer's choice pile shuffle sort are \NPHard/.

\subsection{Previous Work}

In the prequel paper~\cite{treleaven2025sortingpermutationspileshuffle}
we formulated a mathematical model of pile shuffle, and
presented necessary and sufficient conditions for a single shuffle to sort an input permutation.
From those conditions we derived efficiently computable, linear-time formulas for:
(1) the minimum number of piles required for sortation
given specified pile types,
and
(2) transcription of a sorting shuffle on the minimum number of piles (a \emph{minimal} sort) in the same scenario.
%
%We showed that,
We confirmed that, in the homogeneous all-queues or all-stacks cases,
the number of piles required is in terms
of well-studied ascent and descent permutation statistics~\cite{bona_combinatorics_2004, butler_stirling-euler-mahonian_2023}
with deep connections to combinatorics.

The analysis of this paper is based
on a mathematical framework,
also introduced in~\cite{treleaven2025sortingpermutationspileshuffle},
by which
we may interpret the result
of a sequence of shuffles on given pile types
as an equivalent single-round shuffle on fixed-type ``virtual piles''.
The framework generalized many of our findings in~\cite{treleaven2025sortingpermutationspileshuffle} to the multi-round setting, and
it confirmed that
%repetition augments the power of pile shuffle, confirming that
$m$ piles over $T$ rounds has the capacity of $m^T$ piles in a single round.
The present paper is motivated by our discovery, also
%We also discovered
in~\cite{treleaven2025sortingpermutationspileshuffle},
that
if the dealer is allowed to choose the types of the piles arbitrarily
during some of multiple rounds of shuffle,
then deciding sort feasibility can be non-trivial.
In contrast,
we found that
sort feasibility remains tractable despite dealer choice during a single round of shuffle.

\subsection{Organization}

The rest of the paper is organized as follows:
In Section~\ref{sec:background} we define notation and summarize the necessary background for the paper.
In Section~\ref{sec:problem}
we reintroduce the mathematical model of pile shuffle from~\cite{treleaven2025sortingpermutationspileshuffle},
and state the objectives of the present paper.
In Section~\ref{sec:prev-results} we summarize
prerequisite results about sort with pile shuffle, again from~\cite{treleaven2025sortingpermutationspileshuffle}.
In Sections~\ref{sec:algebra} and~\ref{sec:chains} we build upon those results
to develop a mathematical framework central to the proof strategy of our main result:
We demonstrate a correspondence between instances of shuffle and members of a particular class of finite automata.
(Section~\ref{sec:algebra} is preliminary to the framework introduced in Section~\ref{sec:chains}.)
In Section~\ref{sec:proof-strategy} we outline the proof strategy for the remainder of the paper,
presenting a sequence of three (3) sort feasibility problems to which we reduce $\SAT$ subsequently.
We present the gadgets common to all proofs in Section~\ref{sec:gadgets},
and complete our three reductions in Sections~\ref{sec:Q1}, \ref{sec:Q2},
and~\ref{sec:variable-round-feasibility-complexity}.
We discuss potential strategies for strengthening our results to prove Conj.~\ref{conj:repeated-hard} in Section~\ref{sec:on-conj}.
Finally, we summarize our results and offer conclusions in Section~\ref{sec:conclusion}.

\section{Background}
\label{sec:background}

We summarize most of the necessary mathematical background in this section, and
introduce the notation used throughout the paper.
However, our discussion will assume familiarity with foundational computational complexity theory---%
including strings and languages, and deterministic finite automata (DFA)---%
which we do not cover here;
also, ideally, with decision problems and the \POLY/ and \NP/ problem classes,
although these are briefly reviewed in Appendix~\ref{sec:complexity}.
These topics are well covered in~\cite{sipserIntroductionTheoryComputation2021},
with strings, languages, and DFA covered as mathematical preliminaries.

\subsection{Notation}

We use the following notation throughout the paper.

\emph{Ranges.}
We will denote by $\nats$ the natural numbers starting from unity ($1$), and by $[n]$ the natural range $\{ 1, 2, \ldots, n \}$.
We will denote
by $[n] \pm k$ the shifted range $\{ i \pm k : i \in [n] \}$;
in particular, $[n] - 1 = \{ 0, 1, \ldots, n-1 \}$ is often useful.
We will denote by $\nats_0$ the naturals starting from zero instead of unity, \ie, the whole numbers.

\emph{Functions.}
We denote the domain of a function $f$ by $\fnDomain(f)$.
Given a set $A \subseteq \fnDomain(f)$, we may abuse notation and 
let $f(A)$ denote the image of the set $\{ f(x) : \ x \in A \}$.
Additionally, we use the notation $f(g) = h$
for the composition of functions 
$f: Y \to Z$ and $g: X \to Y$,
\ie, $h(x) = f(g(x))$ for all $x \in X$.

\emph{Indicator expressions.}
In some equations we use a so-called indicator notation, where
$[P]$ denotes the indicator function associated with a proposition $P$.
That is, $[P] = 1$ if the proposition is true, otherwise $[P] = 0$.
For example $f(x) = [ x \geq 3 ]$ means that $f(x) = 1$ over $x \geq 3$ and zero elsewhere.

%\emph{Modular arithmetic.}
%We use 2-modulo arithmetic sparingly, with notation $a \xor b \dfneq (a + b) \mod 2$.

\emph{Sequences and strings.}
We will denote the set of all sequences of length $n$ on a set $S$ by $S^{[n]}$, or more lazily by $S^n$.
In this paper we frequently denote sequences in string notation.
For example,
we may write $W = w_1 w_2 w_3$ to specify a sequence of three elements,
with $W(1) = w_1$, $W(2) = w_1$, and $W(3) = w_3$.
As much as possible
we will use upper case letters to denote strings and lower case letters to denote individual symbols.
We will denote the length of a string $W$ by $|W|$; in our example $|W| = 3$.

We will write $AB$ for the concatenation of $A$ and $B$, where
$A$ and $B$ may be any combination of symbols, strings, or languages.
As is fairly common,
we use superscripting in string form to denote repetition, and parentheses for grouping.
For example, we would write out $A^2(AB)^2b^3$ as $AAABABbbb$.
%if $ABC$ is the partition of a string into three segments, then $AB^3C = ABBBC$.
%
As usual we write $A^*$ for the \emph{Kleene closure}, or arbitrary repetition of $A$, which is always a language.
We denote the set of all strings on $S$ by $S^*$.

\emph{Permutations.}
\makecommand{\Perm}[1]{{[#1]!}}
A permutation of a finite set $X$ is a bijective mapping $\perm: X \to X$.
It is well known that there are $n!$ (factorial) permutations of a set of size $n$.
We will denote by $X!$ the set of all such permutations.
In particular,
we will denote by $[n]!$ the set of all permutations of the standard range $[n]$.
Abusing notation slightly,
we will denote the set of permutations of a standard range of \emph{any} length
(including zero)
by $[\nats_0]! \dfneq \cup_{n \geq 0} \, [n]!$.

The identity permutation
%on a domain $X$
is the special permutation $\perm_I(x) = x$ for all $x \in X$.
A composition of two permutations is also a permutation.
The inverse of a permutation $\perm$, denoted $\perm^{-1}$, is the permutation with the property
$\perm^{-1}(\perm) = \perm(\perm^{-1}) = \perm_I$.

\makecommand{\questionSet}{{\mathbb Q}}
\makecommand{\probVar}{\questionSet}
\makecommand{\questionVar}{q}
%\makecommand{\answerFn}{a}
\makecommand{\tsym}{\top}
\makecommand{\fsym}{\bot}

\makecommand{\problemVar}{\probVar}

\emph{Decision problems.}
We will denote a decision problem by 
the set $\questionSet$ of all its questions, with
an implicit subset $\questionSet^+ \subseteq \questionSet$
of those which are in the affirmative.

\makecommand{\assignvec}{{\mathbf x}}

\subsection{Boolean satisfiability ($\SAT$)}
The Boolean satisfiability problem ($\SAT$) is:
\begin{defn}[Boolean satisfiability]
Given a propositional logic formula in conjunctive normal form (CNF),
%---aka, AND-of-ORs---%
does a satisfying assignment of its variables exist?
\end{defn}
A CNF formula $\phi$ is
a set of clauses $\{ \phi_j \}_{j=1}^m$
on variables $\{ x_i \}_{i=1}^n$,
where each clause is defined by a subset of the literals:
The literals include
the positive literals $\{ x_i \}_{i=1}^n$ and the negative literals $\{ \neg x_i \}_{i=1}^n$.
A satisfying assignment (or \emph{solution}) to $\phi$ is a truth assignment---%
of either $\top$ (true) or $\bot$ (false) to each variable of the formula---%
that makes every clause in the formula true.
A clause $\phi_j$ is true
if $x_i = \top$ for some $x_i \in \phi_j$, or $x_i = \bot$ for some $\neg x_i \in \phi_j$.
%(Note the empty clause, containing no literals, cannot be true.)

%
In the paper we denote the set of all solutions of a formula $\phi$ by $\SAT(\phi)$.
Similarly, we denote by $\SAT(\phi_j)$ the set of solutions of a single clause $\phi_j$.
Therefore, $\SAT(\phi) = \cap_j \SAT(\phi_j)$.

$\SAT$ is famously \NPComp/.

%\clearpage
\section{Problem Statement}
\label{sec:problem}

In this section
we reintroduce the mathematical model of pile shuffle
presented in~\cite{treleaven2025sortingpermutationspileshuffle} and used throughout our study.
Then,
we state the objective of the paper.

Pile shuffle begins
with a deck of cards and an empty card table, and 
has two phases---distribution (or ``the deal''), and collection:
During the deal, until the deck is empty,
we place the next card from the top of the deck onto the table,
either directly on the table, creating a new \emph{pile}, or
on top of another pile previously created.
During the collection phase, 
after the deck has been fully dealt out,
we pick up each pile as a whole,
one at a time in some order,
and add it to the bottom of the new deck.

\emph{Deck representation:}
We assume a fixed assignment of labels from $[n]$ to $n$ distinct elements of a deck, so that
any deck ordering can be represented by a permutation $\perm \in [n]!$.
In particular, we represent the order of a given deck
by the permutation with the property that 
label $s$ is in the $\perm(s)$-th position
for every $s \in [n]$.
We call this the \emph{embedding} convention,
which is a departure from a perhaps more typical---and inverse---\emph{sequence} convention, where
$\perm(k)$ would be the label in the $k$-th position.
We use embedding notation throughout the study because of its property that
$s$ precedes $t$ in the deck if and only if $\perm(s) < \perm(t)$.
We say a deck is \emph{sorted} if it is represented by the identity permutation $\perm_I$.

\emph{Pile types:}
We consider piles of two types in this paper: queues and stacks.
A queue maintains the order of cards placed on it.
For example,
if we deal the sequence 1234 into a queue, and pick it back up, we obtain 1234 again.
On a card table, flipping the cards over during the deal creates this kind of behavior.
If label $s$ precedes $t$ in the input deck, and they are placed into the same queue together, then
$s$ precedes $t$ in the new deck also.
In contrast, a stack reverses the order of placement:
If $s$ precedes $t$ in the deck and they are placed into the same stack, then
$t$ precedes $s$ in the new deck;
if we deal 1234 into a stack, we obtain 4321 back.
Dealing cards into a pile \emph{without} flipping them over creates stack-like behavior.
(The cards in a single pile, or deck, must all face the same way.)

\emph{Pile assignment:}
The outcome of a pile shuffle depends on
(1) the input permutation, given,
(2) the pile types used,
(3) the assignment of cards onto piles during the deal, and
(4) the order 
in which the piles are retrieved
during collection.
If we number the piles in the order that they are picked up,
then the assignment of labels to piles can be described
by a function $\pile: [n] \to \nats$ of \emph{pile assignments},
assigning each label $s \in [n]$
to the $\pile(s)$-th pile collected.
Pile shuffle ensures that
the new deck $\varperm$ obeys
$\pile(s) < \pile(t) \implies \varperm(s) < \varperm(t)$,
for every pair of labels $s$ and $t$;
this is true regardless of the pile type(s) that are used.

\emph{Example:}
Suppose we shuffle a deck with label sequence $\perm^{-1} = 456123$ using pile assignments $\pile = 421242$ on stacks.
We can imagine dealing from a face-up deck of cards,
and using $\pile$ to determine
which pile to place
each card
%the card with the showing face 
into.
Note that $\pile$
is a function of
a card's label only, and ignores its position in the deck.
First we place
item $\perm^{-1}(1) = 4$ into pile $\pile(4) = 2$, then
item $\perm^{-1}(2) = 5$ into pile $\pile(5) = 4$, and so on.
After the deal we will see piles as below.
\[
\left.
\begin{array}{cccc}
  & 2 & & \\
  & 6 & & 1 \\
3 & 4 & & 5 \\
\hline
P1 & P2 & P3 & P4
\end{array}
\right.
\]
Note the number of non-empty piles used during shuffle is equal to the number of \emph{distinct} pile assignments,
in this case three.
Collecting the piles in increasing order (left-to-right) we obtain the new deck $\varperm$;
in this case $\varperm^{-1} = 326415$.
If the piles were queues rather than stacks, then we would obtain $\varperm^{-1} = 346251$.

\emph{Objective:}
The objective of this paper is
to
construct a proof of our main result,
Theorem~\ref{thm:variable-round-hard},
demonstrating that
the variable-round Dealer's choice pile shuffle sort problem (Prob.~\ref{prob:variable-round}) is \NPHard/.
We accomplish it
by a sequence of novel reductions from the Boolean satisfiability problem (SAT)
to related problems.
Our study leaves open whether repeated-round sort feasibility is \NPHard/.
However, we support the author's conjecture (Conj.~\ref{conj:repeated-hard})
by suggesting a proof strategy for future work.

\section{Previous Results}
\label{sec:prev-results}

In this section
we summarize the foundational results about the pile shuffle derived in~\cite{treleaven2025sortingpermutationspileshuffle}.
We focus on the general case of shuffling with a heterogeneous mixture of queue-like and stack-like piles,
with the homogeneous queue-only and stack-only shuffles as special cases.

\subsection{Sorting with a (single) pile shuffle}

\makecommand{\pileTypeSingle}{x}
\makecommand{\newPileFn}{{\rm sep}}

\makecommand{\numQueue}{q}
\makecommand{\numStack}{s}

\makecommand{\updateQuota}{f}
\makecommand{\updateQueues}{{\updateQuota_{\rm Q}}}
\makecommand{\updateStacks}{{\updateQuota_{\rm S}}}

\makecommand{\dpstate}{{\mathbf x}}

% :shrug:
\makecommand{\sortScenario}{\pileAssignments}

\makecommand{\descfn}{{\rm desc}}
\makecommand{\readings}{{\rm read}}
\makecommand{\ascruns}{{\rm ascrun}}
\makecommand{\ascsFn}{{\rm ascs}}
\makecommand{\descrunFn}{{\rm descrun}}

A single round of pile shuffle can be captured 
by a parametrized relation
between input and output permutations which
we denote by
\[
\shuffleOp \, \pileType \, \assignFn \, \perm = \varperm
.
\]
In this relation,
$\varperm \in \Perm{n}$ represents the result of
shuffling a deck $\perm \in \Perm{n}$ in a single round, using
pile assignments $\pile$
(previously described)
with
%an assignment of types to piles given by 
the pile \emph{types} given by $\pileType$.
A sequence $\pileType$ on the alphabet
%$$\arbpile \dfneq \{ \text{($\qsym$)ueue}, \, \text{($\ssym$)tack} \}$$
$$\arbpile \dfneq \{ \text{$\qsym$[ueue]}, \, \text{$\ssym$[tack]} \}$$
defines an assignment of types to piles
by the understanding that
for all $\pile \in \fnDomain(\pileType)$,
\[
\pileType(\pile) = \begin{cases}
\qsym, & \textrm{if pile $\pile$ is a queue,} \\
\ssym, & \textrm{if pile $\pile$ is a stack}
.
\end{cases}
\]
A shuffle on only queues (stacks) is the scenario $\pileType \in \qsym^*$ ($\pileType \in \ssym^*$).

\subsubsection{Results}

\makecommand{\itOne}{i}
\makecommand{\itTwo}{j}

A given execution of pile shuffle---%
defined by $(\pileType, \pile)$---%
is a sort if it produces as output the identity permutation,
$\varperm = \perm_I$.
We present the condition for a heterogeneous shuffle to sort its input permutation:
\begin{lemma}[Heterogeneous sort]
\label{lemma:hetero-sort}
%Let $\prec_\qsym \, = \, <$ and $\prec_\ssym \, = \, >$; \ie,
Let $\prec_\qsym$ be the binary relation defined by 
$\itOne\prec_\qsym\itTwo \iff \itOne<\itTwo$, and
let $\prec_\ssym$ be defined by 
$\itOne\prec_\ssym\itTwo \iff \itOne>\itTwo$.
A shuffle $(\pileType, \pile)$ sorts permutation $\perm\in\Perm{n}$
($\shuffleOp \, \pileType \, \assignFn \, \perm = \perm_I$)
if and only if
\begin{equation}
\label{eq:hetero-sort}
\pile(s+1)
\geq
\pile(s) + \left[
	\perm(s+1) \prec_{ \pileType(\pile(s)) } \perm(s)
\right]
\qquad \forall s \in [n-1]
.
\end{equation}
%for all $s \in [n-1]$.
\end{lemma}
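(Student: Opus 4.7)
My plan is to reduce the global sort condition $\shuffleOp \, \pileType \, \assignFn \, \perm = \perm_I$ to an equivalent local condition indexed over consecutive labels $s$ and $s+1$, and then verify that local condition by a short case analysis on the relationship between $\pile(s)$ and $\pile(s+1)$. The reduction rests on the observation that, since $\varperm$ is a permutation of $[n]$, having label $s$ precede label $s+1$ in the output for every $s \in [n-1]$ is sufficient to force $\varperm = \perm_I$; the converse direction is immediate.

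For a fixed $s$, I would analyze when $s$ precedes $s+1$ in the output by splitting into three cases based on the pile assignment. If $\pile(s) < \pile(s+1)$, the shuffle's inter-pile guarantee (labels in earlier-collected piles come above those in later piles) places $s$ before $s+1$ automatically. If $\pile(s) > \pile(s+1)$, the same guarantee places $s+1$ above $s$, so the shuffle cannot sort. If $\pile(s) = \pile(s+1) = \pileIndex$, the two labels lie in a common pile, and the outcome depends on $\pileType(\pileIndex)$: in a queue, labels appear in dealing order, so $s$ precedes $s+1$ iff $\perm(s) < \perm(s+1)$; in a stack, dealing order is reversed, so $s$ precedes $s+1$ iff $\perm(s) > \perm(s+1)$. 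In either same-pile subcase the ``bad'' alternative coincides with $\perm(s+1) \prec_{\pileType(\pileIndex)} \perm(s)$, by the very definitions of $\prec_\qsym$ and $\prec_\ssym$.

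To package the three cases into the single inequality~\eqref{eq:hetero-sort}, I would note that the per-$s$ requirement is: $\pile(s+1) > \pile(s)$ whenever $\perm(s+1) \prec_{\pileType(\pile(s))} \perm(s)$ holds, and $\pile(s+1) \geq \pile(s)$ otherwise. These collapse into $\pile(s+1) \geq \pile(s) + [\perm(s+1) \prec_{\pileType(\pile(s))} \perm(s)]$; note that when $\pile(s) < \pile(s+1)$ the inequality holds trivially regardless of the indicator. Quantifying over $s \in [n-1]$ delivers both directions of the equivalence simultaneously.

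The main obstacle I anticipate is bookkeeping rather than depth: in the embedding convention, $\perm(s)$ is the \emph{position} of label $s$ rather than the label at position $s$, which inverts the reader's default intuition about dealing order. I would therefore state explicitly up front that a card with smaller $\perm$-value is dealt earlier, so that the within-pile analyses for queues and stacks align cleanly with the asymmetric definitions of $\prec_\qsym$ and $\prec_\ssym$, and so that the indicator in~\eqref{eq:hetero-sort} is recognized as a faithful encoding of the pile-type-dependent bad ordering rather than a mere notational convenience.
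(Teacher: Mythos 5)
Your argument is correct and follows essentially the same route the paper sketches in its intuition paragraph after Lemma~\ref{lemma:hetero-sort} (the full proof being deferred to the prequel): reduce sortedness to the adjacent-label condition $\varperm(s)<\varperm(s+1)$, then case-split on $\pile(s)$ versus $\pile(s+1)$ and use the queue/stack within-pile ordering to identify the ``bad'' event with the indicator $\left[\perm(s+1) \prec_{\pileType(\pile(s))} \perm(s)\right]$. Your explicit caution about the embedding convention and the observation that the indicator is vacuous when $\pile(s+1)>\pile(s)$ are exactly the right bookkeeping points.
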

The condition~\eqref{eq:hetero-sort} can be checked efficiently
in a single scan of
the permutation.
%the permutation representing the starting deck state.
%
An intuition behind it is as follows:
A sort may never assign element $(s+1)$ to a lower-valued pile than $s$, \ie,
a pile collected earlier in the collection phase.
However, as long as either
$\pile(s)$ is a queue and $s$ precedes $(s+1)$ during the deal,
or else
$\pile(s)$ is a stack and $(s+1)$ precedes $s$ during the deal,
then $(s+1)$ may be placed in any pile $\pile(s+1) \geq \pile(s)$.
Otherwise---%
mathematically, whenever
$\perm(s+1) \prec_{ \pileType(\pile(s)) } \perm(s)$---%
then $(s+1)$
must be placed into a strictly higher-valued pile,
%so that it is 
collected later; \ie,
$\pile(s+1) \geq \pile(s) + 1$.

As an example, 
when dealing the sequence $1423$,
$2$ could be dealt into the same queue as $1$, since it is dealt after $1$,
but
$3$ cannot be dealt into the same queue as $4$, since $4$ precedes it;
ultimately, $4$ must be found in a higher-valued pile than $3$, collected later.
Conversely,
$2$ cannot be dealt into the same stack as $1$, whereas
$3$ could be dealt into the same stack as $4$.

A natural question is: what is the minimum number of piles needed to sort a given permutation?
Pile shuffle offers two variables for optimization,
the pile assignments $\pile$ and the pile types $\pileType$.
While they might typically be chosen together,
if the sequence $\pileType$ were fixed a priori,
then
squeezing~\eqref{eq:hetero-sort} yields a recurrence expression
of the minimum number of piles needed to sort the input permutation.
\begin{lemma}
\label{lemma:hetero-func-bound}
Suppose
$(\pileType, \pile)$ is a sort
of a permutation $\perm \in [n]!$,
\ie, it satisfies~\eqref{eq:hetero-sort},
starting
%w.l.g.
without loss of generality
from $\pile(1) \geq 1$.
Let 
$\pile^*$ be defined
(in terms of $\perm$ and $\pileType$ only)
by
\begin{equation}
\label{eq:min-piles-mixed}
\begin{cases}
\pile^*(1) = 1
\\
\pile^*(s+1) = \pile^*(s) + \left[
	\perm(s+1) \prec_{ \pileType(\pile^*(s)) } \perm(s)
\right]	& s \in [n-1]
.
\end{cases}
\end{equation}
Then 
%$(\pileType, \pile^*)$
$\pile^*$
is a minimal sort of permutation $\perm$
on $\pileType$,
\ie,
%in that 
%the number of piles
$\pile^*(n) \leq \pile(n)$
for any original sort
$\pile$
on $\pileType$.
\end{lemma}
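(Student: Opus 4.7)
The plan is to prove the pointwise bound $\pile^*(s) \leq \pile(s)$ for every $s \in [n]$ by induction on $s$; specializing to $s = n$ then gives the lemma. Along the way I would also verify that $\pile^*$ itself satisfies~\eqref{eq:hetero-sort}, so that it actually qualifies as a sort on $\pileType$ and not merely a function with a small last value. This latter check is essentially free: by construction $\pile^*(s+1) - \pile^*(s)$ equals the indicator that appears in~\eqref{eq:hetero-sort}, so the inequality there is met with equality at every step.

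For the induction, the base $s=1$ is immediate since $\pile^*(1) = 1$ and $\pile(1) \geq 1$ by hypothesis. In the inductive step I would split on whether the hypothesis $\pile^*(s) \leq \pile(s)$ is strict. In the \emph{equality} case $\pile^*(s) = \pile(s)$, the two recurrences refer to the same pile index and hence see the same pile type $\pileType(\pile^*(s)) = \pileType(\pile(s))$; the indicator driving the update of $\pile^*(s+1)$ and the indicator in~\eqref{eq:hetero-sort} applied to $\pile$ coincide, and a direct comparison yields $\pile^*(s+1) \leq \pile(s+1)$. In the \emph{strict} case $\pile^*(s) < \pile(s)$, the recurrence forces $\pile^*(s+1) \leq \pile^*(s) + 1 \leq \pile(s) \leq \pile(s+1)$, where the last inequality uses the monotonicity of $\pile$ implied by~\eqref{eq:hetero-sort} (the bracketed indicator there is always nonnegative).

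The only subtle point---what I would flag as the main obstacle---is the strict case. There, $\pileType(\pile^*(s))$ may well differ from $\pileType(\pile(s))$, so the indicator driving the recurrence for $\pile^*$ is not directly controlled by any instance of~\eqref{eq:hetero-sort} on $\pile$. The resolution is the trivial but essential observation that $\pile^*$ can grow by at most one at each step; that crude bound, together with the monotonicity of the original sort, carries the strict case without ever requiring a comparison between the pile types at the two differing positions. The proof therefore avoids any case analysis on queue versus stack and does not appeal to further structural properties of $\perm$.
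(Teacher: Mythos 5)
Your proof is correct. The paper itself does not reprove this lemma (it is imported from the prequel), but your argument is the natural one: pointwise induction showing $\pile^*(s) \leq \pile(s)$, with the equality case handled by matching pile types and the strict case handled by the unit-step bound $\pile^*(s+1) \leq \pile^*(s)+1$ together with the monotonicity of $\pile$; your observation that $\pile^*$ satisfies~\eqref{eq:hetero-sort} with equality, so it is itself a sort, correctly covers the other half of the claim.
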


For the homogeneous all-queues or all-stacks cases,
\cite{treleaven2025sortingpermutationspileshuffle}
presented versions of Lemma~\ref{lemma:hetero-func-bound}
in terms of well-studied ascent and descent permutation statistics.

Lemma~\ref{lemma:hetero-func-bound}
is fundamental to the investigations of this paper.
It demonstrates that
the types assignment $\pileType$---%
of some sort $(\pileType, \pile)$---%
acts
as a certificate/proof
that sort 
of a permutation $\perm$
is feasible on a given set of piles:
If a sort with $\pileType$ exists,
then
\eqref{eq:min-piles-mixed} recovers one efficiently in a linear scan.
Conversely,
if~\eqref{eq:min-piles-mixed} has no solution with $\pileType$---\ie, if $\pileType$ is a finite sequence and too short---then no other \eqref{eq:hetero-sort} sort with $\pileType$ may exist either.
%This shifts much of our attention in the sequel from pile assignments to pile \emph{type} assignments.

%
\begin{defn}
\label{defn:sort-with-types}
We say a pile types assignment $\pileType$ sorts a permutation $\perm$ if
there exists $\pile$ such that $(\pileType, \pile)$ sorts $\perm$.
\end{defn}

%\subsubsection{Dealer's choice pile shuffle sort}
%\label{subsec:dealer-choice-single}

%
If the dealer is free
to choose the type of each of $\numpile$ piles arbitrarily during the deal,
then
a given deck may be sorted if and only if there exists a type assignment $\pileType \in \arbpile^\numpile$
which sorts it.
%sort in the sense of Definition~\ref{defn:sort-with-types}.
%%
While the search space is exponential in the number of piles ($2^\numpile$ possible assignments),
\cite{treleaven2025sortingpermutationspileshuffle} showed
that
a minimal sort can be obtained, if one exists,
in time that is linear in the permutation length,
by
%a two-equation recurrence
combining~\eqref{eq:min-piles-mixed} with
a greedy strategy for choosing a minimizing types assignment $\pileType^*$.

\subsubsection{Demonstration}

In this section
we offer a demonstration of sorting with pile shuffle as guided by Lemma~\ref{lemma:hetero-func-bound}.
We start by writing an example permutation in the so-called two-line notation,
a two-row matrix where the permutation pre-image is enumerated across the top row, and
the matching image is written underneath it:
\[
\left\{\begin{array}{cccccccc}
1 & 2 & 3 & 4 & 5 & 6 & 7 & 8
\\
4 & 8 & 7 & 5 & 3 & 1 & 2 & 6
\end{array}\right\}
=
\left\{\begin{array}{ccc}
\ldots & s & \ldots
\\
\ldots & \perm(s) & \ldots
\end{array}\right\}
.
\]

We use the set-notation brackets rather than the typical parentheses to emphasize that the column order does not matter.
However, with the first row in the normal ascending order,
the action of~\eqref{eq:min-piles-mixed} can be shown 
by adding $\pile^*$ and $\pileType(\pile^*)$ as additional rows to the matrix;
in this example,
we let $\pileType = \qsym\ssym\qsym$, \ie,
a queue, then a stack, then another queue:
\[
\left\{\begin{array}{cccccccc}
1 & 2 & 3 & 4 & 5 & 6 & 7 & 8 \\
4 & 8 & 7 & 5 & 3 & 1 & 2 & 6 \\
1 & 1 & 2 & 2 & 2 & 2 & 3 & 3
\\
\qsym & \qsym & \ssym & \ssym & \ssym & \ssym & \qsym & \qsym
%\\
%0 & 1 & 0 & 0 & 0 & 1 & 0 & 0
\end{array}\right\}
=
\left\{\begin{array}{ccc}
\ldots & s & \ldots
\\
\ldots & \perm(s) & \ldots
\\
\ldots & \pile^*(s) & \ldots
\\
\ldots & \pileType(\pile^*(s)) & \ldots
%\\
%\ldots & \left[ \perm(s+1) \prec_{\pileType(\pile^*(s))} \perm(s) \right] & \ldots
\end{array}\right\}
.
\]
We can see that $\pile^*$ increments specifically at permutation \emph{descents} from queue-like piles,
or at permutation \emph{ascents} from stack-like piles.

Now if we rearrange the columns so that the second row appears in ascending order, then
the sequence representation of the deck appears in the top row,
\[
\left\{\begin{array}{cccccccc}
6 & 7 & 5 & 1 & 4 & 8 & 3 & 2 \\
1 & 2 & 3 & 4 & 5 & 6 & 7 & 8 \\
2 & 3 & 2 & 1 & 2 & 3 & 2 & 1 
\end{array}\right\}
=
\left\{\begin{array}{ccc}
\ldots & s & \ldots
\\
\ldots & \perm(s) & \ldots
\\
\ldots & \pile^*(s) & \ldots
\end{array}\right\}
.
\]
This column order supports the creation of a \emph{shuffle tableau}:
In the next tableau,
each label $s$ in a row $\pile^*(s)$ and column $\perm(s)$ indicates that
element $s$ is placed into pile $\pile^*(s)$ as the $\perm(s)$-th placement of the deal.
\[
\left.
\begin{tabular}{r|cccccccc}
\diagbox{$\pile^*(s)$}{$\perm(s)$}
	& 1 	& 2 	& 3 	& 4 	& 5 	& 6 	& 7 	& 8	\\
\hline
1	&	&	& 	& 1	&	&	& 	& 2	\\
2	& 6	& 	& 5	& 	& 4	& 	& 3	& 	\\
3	& 	& 7	&	&	& 	& 8	&	& 
\end{tabular}
\right.
\]

Finally,
the piles are collected in row order, top-to-bottom.
Queue-like piles are collected left-to-right, whereas stack-like piles are collected right-to-left.
Collecting the contents of the tableau in this way demonstrates that
$\pile^*$ indeed sorts the input permutation on $\pileType = \qsym\ssym\qsym$.

\subsection{Sorting in multiple sequential pile shuffles}

\makecommand{\typeSchedules}{{X}}

\makecommand{\vrpile}{{\hat\pile}}
\makecommand{\vrPileType}{{\hat\pileType}}
\makecommand{\vrPileTypeFn}{{\hat\pileTypeFn}}
%\makecommand{\vrPileTypeFn}{{\bar\pileTypeFn}}

\makecommand{\designpile}{{\tilde\pile}}
%\makecommand{\reverseFn}{{\rm reverse}}
\makecommand{\reverseFn}{{\rm rev}}

\makecommand{\pileTypeFn}{\pileTypeInd}
\makecommand{\designPileTypeFn}{{\tilde\pileTypeFn}}

\makecommand{\vrNumPile}{{\hat\numpile}}

Next
we present the key results about sorting in multiple sequential rounds of shuffle, which
augments the power of a bounded number of piles
at the expense of more time to execute the sort.
%In the motivating case of sort with a physical facility (\eg, a table of a specific size) that is usually a necessary trade-off.

\subsubsection{Modeling}

A multi-round pile shuffle is simply a sequence of basic shuffles where
the output of one round becomes the input of the next one:
A shuffle in $\numphase \geq 0$ rounds can be modeled by a pair
$(\typeSchedules, \schedules)$
of
a sequence of pile type assignments $\typeSchedules = \left( \pileType_1, \ldots, \pileType_\numphase \right)$
and
a sequence of pile assignments
%$\schedules = \left( \pile_\phase \right)_{\phase=1}^\numphase$,
$\schedules = \left( \pile_1, \ldots, \pile_\numphase \right)$.
These induce a sequence of permutations
(deck states)
according to the recurrence
\begin{align}
\label{eq:multi-round-defn}
\begin{cases}
\perm_0 = \perm, \\
\perm_{\phase} = \shuffleOp \, \pileType_\phase \, \pile_{\phase} \, \perm_{\phase - 1}
%	& \textrm{for $\phase \geq 1$}; \\
	& \textrm{for $1 \leq \phase \leq \numphase$}, \\
\varperm = \perm_\numphase
;
%\end{itemize}
\end{cases}
\end{align}
%
%such that
again, $\varperm \in \Perm{n}$ is the final deck order
resulting by
shuffling a deck starting in order $\perm \in \Perm{n}$ in this way.
We denote the relation
\begin{align*}
\label{eq:define-shuffle-multi}
\shuffleOp \, \typeSchedules \, \schedules \, \perm = \varperm.
\end{align*}
In the sequel,
we refer to sequences like $\typeSchedules$ as \emph{type schedules},
since they provide the individual type assignments used in each round of shuffle.

\begin{defn}
\label{defn:hetero-multi-types-sort}
We say a sequence $\typeSchedules$
of pile type assignments
in multiple rounds of shuffle
(a type schedule)
sorts a permutation $\perm$
if
there exists $\schedules$ such that
$\left( \typeSchedules, \schedules \right)$ sorts $\perm$.
\end{defn}

\subsubsection{Results}
\label{sec:multi-round-results}

The second key contribution of~\cite{treleaven2025sortingpermutationspileshuffle} was
%a mathematical reduction
a mathematical interpretation
% any instance of 
of multi-round shuffle on fixed pile types---\ie, given a type schedule $\typeSchedules$---%
as
%an equivalent 
single-round shuffle on fixed-type ``virtual piles'':

\begin{lemma}
\label{prop:hetero-multi-sort}
For any heterogeneous shuffle $(\typeSchedules, \schedules)$,
the derivation from~\cite{treleaven2025sortingpermutationspileshuffle}
(reproduced in Appendix~\ref{sec:multi-round-derivation})
obtains a single-round heterogeneous ``virtual'' shuffle $(\vrPileType_1, \vrpile_1)$
such that
\[
\shuffleOp \, \typeSchedules \, \schedules \, \perm = \varperm
\iff
\shuffleOp \, \vrPileType_1 \, \vrpile_1 \, \perm = \varperm
.
\]
\end{lemma}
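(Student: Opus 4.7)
The plan is to proceed by induction on the number of rounds $\numphase$. The base case $\numphase = 1$ is immediate, taking $(\vrPileType_1, \vrpile_1) = (\pileType_1, \pile_1)$. For $\numphase \geq 2$, it suffices to demonstrate how to collapse any two consecutive rounds of shuffle into a single equivalent round; iterating this reduction $\numphase - 1$ times (applied, say, to the last two rounds at each stage) then yields the claim for arbitrary $\numphase$.

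For the two-round collapse, I would associate to each label $s$ the pair $(\pile_1(s), \pile_2(s))$ recording the piles that $s$ traverses. Labels sharing the same pair necessarily emerge as a contiguous block in the final deck, so I define one virtual pile for each pair actually occurring in the image of $(\pile_1, \pile_2)$. The virtual pile index $\vrpile_1(s)$ is chosen to enumerate the virtual piles in their collection order: primarily by $\pile_2(s)$ (round-2 piles are collected in increasing index), and secondarily within a round-2 pile $p_2$ by $\pile_1(s)$ when $\pileType_2(p_2) = \qsym$, or by $-\pile_1(s)$ when $\pileType_2(p_2) = \ssym$ (since a stack reverses the arrival order of its contents). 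The virtual pile type is the ``composition'' of the two physical types along the path: $\qsym$ when $\pileType_1(\pile_1(s))$ and $\pileType_2(\pile_2(s))$ agree, and $\ssym$ when they differ---because two reversals cancel while exactly one persists.

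With $(\vrPileType_1, \vrpile_1)$ so defined, establishing the equivalence $\shuffleOp \, \typeSchedules \, \schedules \, \perm = \shuffleOp \, \vrPileType_1 \, \vrpile_1 \, \perm$ amounts to verifying, for any pair of labels $s \neq t$, that their relative order in the output of the physical two-round shuffle agrees with their relative order under the virtual single-round shuffle. The argument splits on whether $s$ and $t$ share a round-2 pile; if so, on whether they also share a round-1 pile; and then on the pertinent pile types. In each case the construction has been tuned to reproduce the correct outcome.

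The main technical obstacle---and the likeliest source of error---is the bookkeeping around the secondary-order reversal inside stack-typed round-2 piles, together with confirming that the proposed $\vrpile_1$ respects the basic shuffle invariant, namely that $\vrpile_1(s) < \vrpile_1(t)$ implies $s$ precedes $t$ in the virtual output. Once the two-round equivalence is verified, the induction on $\numphase$ completes the proof.
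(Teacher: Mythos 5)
Your construction is precisely the one the paper reproduces in Appendix~\ref{sec:multi-round-derivation}: the mixed-radix index $\vrpile_\phase = \designpile_\phase + \numpile_\phase\,\vrpile_{\phase+1}$ with the conditional reversal $\reverseFn$ in \eqref{eq:hetero-multi-embedding} is exactly your primary-by-later-round / secondary-by-earlier-round ordering (reversed under a stack), and the $\xor$ of type indicators in \eqref{eq:type-indicator-recurrence} is your ``types agree $\Rightarrow$ queue, differ $\Rightarrow$ stack'' rule, while iterating the two-round collapse from the last rounds forward is the paper's backward recurrence. So the proposal is correct and takes essentially the same approach, with the final pairwise order-verification left as a sketch here much as the paper defers it to the prequel.
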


%Again, 
$\vrpile_1$ can be thought of as an assignment
of the cards of the deck among a set of
``virtual piles''
in a corresponding virtual single-round shuffle
with pile types given by $\vrPileType_1$.
The equation~\eqref{eq:hetero-multi-embedding} defines $\vrpile_1$
%for each $\phase\in [\numphase]$
as into a co-domain
$[\vrNumPile_1] - 1$,
where
$\vrNumPile_1 = \prod_{\phase=1}^\numphase \numpile_{\phase}$.
Therefore
we observe that
$m$ piles over $T$ rounds has the capacity of $m^T$ piles in a single round.
Note that
we enumerate virtual piles starting from $\vrpile(1) = 0$,
instead of $\pile(1) = 1$.
That is because
it greatly simplifies
%which greatly simplifies 
both~\eqref{eq:hetero-multi-embedding} and the equation~\eqref{eq:type-indicator-recurrence}
for the pile types $\vrPileType_1$.

Remarkably,
we can fully precompute
the virtual pile types $\vrPileType_1$
given only the type schedule $\typeSchedules$.
(That is,
we needn't know $\schedules$ a priori.)
Moreover,
because~\eqref{eq:hetero-multi-embedding} is a reversible recurrence,
then
any virtual shuffle $\vrpile_1$ on $\vrPileType_1$---%
\eg, a minimal sort obtained via~\eqref{eq:min-piles-mixed}---%
can be transformed
into a 
%corresponding
multi-round shuffle $\schedules$ on $\typeSchedules$ with the same input-output relation.
Therefore, Lemma~\ref{prop:hetero-multi-sort} brings to bear
the full power of Lemma~\ref{lemma:hetero-func-bound},
confirming the role of the type schedule ($\typeSchedules$)
as a generalized linear-time checkable certificate of sort feasibility
%---through our virtual piles interpretation---%
in the multi-round setting.

%
%We will discuss virtual piles again in greater detail in Section~\ref{sec:algebra}.

\subsubsection{Repeated-round sort with a \emph{fixed} number of piles}

Although
%we are conjecturing
Conj.~\ref{conj:repeated-hard} suggests
that 
the repeated-round Prob.~\ref{prob:variable-round}
is \NPHard/,
if we restrict it
to an a priori fixed number $\numpile$ of piles,
%so that only the number $\numphase$ of rounds of shuffle is instance data,
then a brute force search over all possible assignments decides feasibility in (technically) polynomial time:
If
there are greater than
$\numphase' = \left\lceil \log_\numpile(n) \right\rceil$ rounds, then
a permutation of length $n$ is trivially sortable with $\numpile^{\numphase'} \geq n$ virtual piles.
Otherwise,
we need check only as many as
$\left( 2^\numpile \right)^{\numphase'}
\in O\left( n^{
%\frac{ \numpile}{\log_2 \numpile
(\numpile / \log_2 \numpile)
} \right)$
certificates, each in $O(n)$ time.
While this approach is technically polynomial-time in the permutation length,
it suffers exponential growth in the number of piles $\numpile$.
This is
%similar to
reminiscent of
the way SAT is technically polynomial-time on any fixed set of variables, even though it is \NPHard/ in general.

\makecommand{\invert}{\bar}
%\makecommand{\invert}{\droang}

%\makecommand{\dual}{*}
\makecommand{\dual}{\ssym}

%\section{An algebra for pile type compositions}
\section{An algebra for virtual piles types}
\label{sec:algebra}

The ``virtual shuffle'' interpretation
just presented
proves fundamental to the analysis of multi-round pile shuffle.
However,
the characterizing equation~\eqref{eq:type-indicator-recurrence} is convoluted and generally unintuitive.
In this section we present a small symbolic algebra
which concisely and intuitively captures its effects.
In addition to providing useful insight
about the formation of virtual piles
corresponding to the concatenation of multiple rounds of shuffle,
the algebra and its notation will be relied upon heavily
in the sequel.

\begin{defn}
\label{defn:factored-schedules}
Given a type schedule
$\typeSchedules = \left( \pileType_1, \pileType_2, \ldots, \pileType_\numphase \right)$,
we will denote by
$\pileType_1 \backslash \pileType_2 \backslash \ldots \backslash \pileType_\numphase$
the virtual pile type assignments $\vrPileType_1$ 
corresponding to $\typeSchedules$ in the sense of Lemma~\ref{prop:hetero-multi-sort}.
\end{defn}

The definition
introduces a new backslash ($\backslash$) operator with several noteworthy properties
that can be verified with~\eqref{eq:type-indicator-recurrence}.
We omit the proofs, which are purely technical.
\begin{defn}[Inverse]
\makecommand{\word}{\pileType}
\makecommand{\pt}{x}

We will denote by $\invert \word$ the \emph{inverse} of a type sequence $\word = \pt_1 \ldots \pt_m$, 
defined by
$\invert \qsym = \ssym$, $\invert \ssym = \qsym$,
and
$\invert \word = \invert \pt_1 \ldots \invert \pt_m$;
that is,
each pile switches to the opposite type.
For example, if $\word = \qsym\qsym\ssym$, then $\invert \word = \ssym\ssym\qsym$.
\end{defn}
\begin{defn}[Dual]
\makecommand{\word}{\pileType}
\makecommand{\pt}{x}

We will denote by $\word^\dual$ the \emph{dual} of $\word$, which is the combination of inversion and sequence reversal,
$\word^\dual = \invert \pt_m \ldots \invert \pt_1$.
For example, if $\word = \qsym\qsym\ssym$, then $\word^\dual = \qsym\ssym\ssym$.
(Pure sequence reversal can be captured by $\invert \word^\dual = \ssym\qsym\qsym$.)
\end{defn}
The inverse and dual are both self-inverting, in the sense that
${\invert {\invert \pileType}} = \pileType$ and $(\pileType^\dual)^\dual = \pileType$.

\makecommand{\word}{\pileType}

\begin{lemma}[Singleton compositions]
\label{label:singleton-compose-lemma}

\vphantom{new-line}

\begin{enumerate}
\item $\word \backslash \qsym = \qsym \backslash \word = \word$,
\item $\word \backslash \ssym = \word^\dual$, and
\item $\ssym \backslash \word = \invert\word$.
\end{enumerate}

\end{lemma}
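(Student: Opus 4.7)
My plan is to verify each of the three identities directly from the recurrence~\eqref{eq:type-indicator-recurrence} characterizing the virtual pile types $\vrPileType_1$ in Lemma~\ref{prop:hetero-multi-sort}. Each identity composes a multi-pile shuffle with a \emph{single}-pile shuffle (of type either $\qsym$ or $\ssym$), which makes the recurrence easy to unroll symbolically. Better still, each case admits an intuitive geometric description of how a one-pile round of shuffle interacts with the other round, and I would use those descriptions to motivate the algebraic transformation before verifying it mechanically.

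For case~1, I would argue as follows. In $\pileType \backslash \qsym$, a single queue in the second round collects all cards from round~1 and pours them back in the same order, so it acts as the identity; the virtual piles are precisely those of round~1 with their original types $\pileType$. Symmetrically, $\qsym \backslash \pileType$ uses a single queue in round~1, which leaves the input deck unchanged, so the virtual piles are determined by round~2 alone with types $\pileType$. For case~2, a single stack as round~2 ($\pileType \backslash \ssym$) has two effects on the virtual piles inherited from round~1: the collection order of those piles is reversed (the pile placed ``on top'' of the stack last comes out first), and the contents of each individual pile are themselves re-reversed, so round-1 queues act like virtual stacks and vice versa. The combination of pile-order reversal and per-pile type inversion is exactly the dual $\pileType^\dual$. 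For case~3, a single stack as round~1 ($\ssym \backslash \pileType$) first inverts the input sequence before distributing it to round-2 piles. From the virtual standpoint, feeding the reversed sequence into a queue (respectively stack) has the same effect on relative card order as feeding the original sequence into a stack (respectively queue), while the ordering of the round-2 piles themselves is preserved; only the types flip, yielding $\invert\pileType$.

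The conceptual content is straightforward; the main work will lie in reconciling these geometric descriptions with the exact indexing convention of~\eqref{eq:hetero-multi-embedding}, where virtual piles are numbered starting at zero and the total pile count is $\vrNumPile_1 = \prod_\phase \numpile_\phase$. In case~2 I would need to verify that the index map on round-1 piles induced by the single outer stack really does realize the pile-reversal-plus-type-inversion described above, and in case~3 that the single inner stack only toggles the type parity of each round-2 pile without permuting them. Each verification should reduce to a single unrolling of the recurrence, which is why the statement can honestly describe the proofs as purely technical.
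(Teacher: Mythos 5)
Your approach is exactly the one the paper intends: the paper omits the proof, stating only that the identities "can be verified with~\eqref{eq:type-indicator-recurrence}" and are "purely technical," and your plan of unrolling that recurrence for each of the three single-pile compositions (and your case-by-case computations of the parity bit $\vrPileTypeFn_2$ and the reversal $\reverseFn$) is correct and complete. The geometric descriptions you add are consistent with the algebra and are a helpful supplement rather than a deviation.
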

%
%For the compositions with a single queue specifically, \ie,
%$\word \backslash \qsym = \qsym \backslash \word = \word$,
%we appeal to the intuition that
%a round of shuffle with a single queue does not change its input.

%
\begin{lemma}%[Composition distributes concatenation]
\label{lemma:concat-lemma}

\makecommand{\pt}{x}

Composition distributes concatenation, \ie,
\[
\pileType \backslash ( \pt_1 \ldots \pt_m )
=
(\pileType \backslash \pt_1) \ldots ( \pileType \backslash \pt_m )
,
\]
for any $\pileType \in \arbpile^*$, $\pt_1 \ldots \pt_m \in \arbpile^*$.
\end{lemma}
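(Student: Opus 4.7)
The plan is to proceed by induction on $m$, the length of the second operand, using the virtual-pile semantics underlying Definition~\ref{defn:factored-schedules} and Lemma~\ref{prop:hetero-multi-sort}. The base cases are immediate: $m = 0$ gives the empty sequence on both sides, and $m = 1$ is exactly Lemma~\ref{label:singleton-compose-lemma}. For the inductive step it suffices to prove a two-piece split $\pileType \backslash (AB) = (\pileType \backslash A)(\pileType \backslash B)$ for arbitrary $A, B \in \arbpile^*$, since iterating this with $A = x_1 \ldots x_{m-1}$ and $B = x_m$ reduces the claim to the inductive hypothesis.

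To establish the split, I would appeal directly to the virtual-pile interpretation. The virtual piles for the two-round shuffle with round-1 types $\pileType$ and round-2 types $AB$ correspond bijectively to pairs $(p_1, p_2)$ of a round-1 pile and a round-2 pile, and are enumerated in collection order — outer by $p_2$ (since round-2 piles are picked up last), then within each round-2 pile by $p_1$ in an order determined by the type of that pile. Because the $|A|$ round-2 piles drawn from $A$ are numbered strictly below those from $B$, the virtual-pile enumeration decomposes cleanly into a contiguous $A$-block followed by a contiguous $B$-block, with no cross-talk between them. Within each block the sub-enumeration over $p_1$ is determined entirely by $\pileType$ together with the single round-2 pile in question, so Definition~\ref{defn:factored-schedules} applied to the restricted schedules $(\pileType, A)$ and $(\pileType, B)$ identifies these two blocks as $\pileType \backslash A$ and $\pileType \backslash B$, respectively. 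Concatenating them yields the right-hand side.

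The main obstacle is discharging this block-decomposition rigorously from the characterizing recurrence~\eqref{eq:type-indicator-recurrence}. Concretely, one must verify that the virtual pile type $\vrPileType_1$ at an index corresponding to a pair $(p_1, p_2)$ with $p_2 \leq |A|$ depends only on $\pileType$ and the prefix $A$, and symmetrically that $p_2 > |A|$ depends only on $\pileType$ and $B$ (up to an index shift). This independence is intuitively clear — round-2 piles process their contents independently and the piles themselves are simply adjoined side-by-side — but it does require careful index bookkeeping in the recurrence. Once the block independence is in hand, the identifications with $\pileType \backslash A$ and $\pileType \backslash B$ are immediate from Definition~\ref{defn:factored-schedules}, and distributivity follows.
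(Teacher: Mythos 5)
Your proposal is correct and is precisely the verification the paper has in mind: the paper omits the proof of this lemma altogether, remarking only that the properties of $\backslash$ ``can be verified with~\eqref{eq:type-indicator-recurrence},'' and your block decomposition of the virtual piles by round-2 pile index is exactly that verification. The only thing worth noting is that the induction, the two-piece split, and the residual ``index bookkeeping'' you flag as the main obstacle are all unnecessary: writing $\numpile_1 = |\pileType|$ and taking the two-round schedule $(\pileType, x_1\ldots x_m)$, the recurrence~\eqref{eq:type-indicator-recurrence} collapses to $\vrPileTypeFn_2(q) = \pileTypeFn_2(q)$ and $\vrPileTypeFn_1(r + \numpile_1 q) = \pileTypeFn_1\bigl(\reverseFn_{\numpile_1}^{\pileTypeFn_2(q)}(r)\bigr) \oplus \pileTypeFn_2(q)$, which for fixed $q$ depends only on $\pileType$ and the single symbol $x_{q+1}$ and coincides with the type indicator of $\pileType \backslash x_{q+1}$ by Lemma~\ref{label:singleton-compose-lemma}, so reading off the blocks $q = 0, \ldots, m-1$ in order yields the identity in one step.
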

For example,
$\qsym\qsym\ssym \backslash \qsym\ssym
= (\qsym\qsym\ssym \backslash \qsym)(\qsym\qsym\ssym \backslash \ssym)
= (\qsym\qsym\ssym)(\qsym\ssym\ssym)
= \qsym\qsym\ssym \qsym\ssym\ssym
$.

%
%We observe that
Given Lemmas~\ref{label:singleton-compose-lemma} and~\ref{lemma:concat-lemma},
any composition
$\pileType_1 \backslash \pileType_2 \backslash \ldots \backslash \pileType_\numphase$
can be systematically reduced by pairs to compute~\eqref{eq:type-indicator-recurrence},
by using the lemmas alternatingly starting from the right.
For example, 
\begin{align*}
\qsym \qsym \ssym \backslash \qsym\ssym\ssym \backslash \ssym
&= \qsym \qsym \ssym \backslash (\qsym\ssym\ssym \backslash \ssym)
\\
&= \qsym \qsym \ssym \backslash \qsym\qsym\ssym
\\
&= (\qsym \qsym \ssym \backslash \qsym) (\qsym \qsym \ssym \backslash \qsym) (\qsym \qsym \ssym \backslash \ssym) 
%.
= (\qsym \qsym \ssym) (\qsym \qsym \ssym) (\qsym \ssym \ssym)
\\
&= \qsym \qsym \ssym \qsym \qsym \ssym \qsym \ssym \ssym
.
\end{align*}
In fact,
the composition operator is associative.
Any contiguous sub-recurrence
of~\eqref{eq:multi-round-defn}
%, \eg,
%on
%%$(\perm_i, \ldots, \perm_j)$
%$\left(
%%(\pileType_\phase)_{\phase=i}^j,
%(\pileType_i, \ldots, \pileType_j), \,
%(\pile_i, \ldots, \pile_j)
%\right)$,
%%$\left(
%%(\pileType_\phase, \pile_\phase)_i^j
%%\right)$,
defines a multi-round sub-shuffle,
equally subject to Lemma~\ref{prop:hetero-multi-sort}.
Therefore we obtain the same result if we reduce from the left instead,
or indeed by any other sequence of adjacent pairs.
%in general.

%\clearpage

%\makecommand{\problemVar}{{\mathbf P}}
\makecommand{\scenarioParams}{\theta}

\section{Type schedules as finite automata}
\label{sec:chains}

\makecommand{\profile}{\delta}
\makecommand{\profileAlphabet}{\Delta}
\makecommand{\profileFn}{\profileAlphabet}
\makecommand{\step}{b} % "bima" (roughly) is Greek for "step"

\makecommand{\endState}{{\it end}}

\makecommand{\dfaVar}{\pi}
\makecommand{\dfaSet}{\Pi}
\makecommand{\dfaFn}{\dfaSet}

The main objective of the paper is
to prove Theorem~\ref{thm:variable-round-hard},
asserting that
variable-round Dealer's choice pile shuffle sort
(Prob.~\ref{prob:variable-round})
is \NPHard/.
We will accomplish that by reducing $\SAT$ to each of a sequence of problems,
ultimately leading to Prob.~\ref{prob:variable-round}.
In order to do so
we exploit a 
%fundamental
useful
correspondence between type schedules---%
the certificates of sort feasibility---%
and a class of finite automata
introduced in this section.

Let us examine again
the recurrence equation~\eqref{eq:min-piles-mixed}
obtaining pile assignments $\pile^*$
of a minimal sort
of a permutation $\perm \in [n]!$
on pile types given by $\pileType$.
The right-hand side of the recurrence is determined strictly
by the direction of change at each place in the permutation, \ie
whether it is 
increasing or decreasing.
%an ascent or a descent.
Therefore we introduce the concept of a \emph{change profile}:
\begin{defn}
The \emph{change profile}
$\profileFn(\perm) =: \profile$
of a permutation $\perm \in [n]!$
is the sequence
on the alphabet 
$$\profileAlphabet \dfneq \{a[scent], d[escent]\}$$
defined for all $s \in [n - 1]$ by
\[
%\text{$\profile(s) = a$ if $\perm(s+1) > \perm(s)$ or else $\profile(s) = d$.}
\profile(s) = \begin{cases}
a,	& \perm(s+1) > \perm(s),	\\
d,	&
%\text{otherwise.}
\perm(s+1) < \perm(s)
.
\end{cases}
\]
\end{defn}
For example,
the change profile
of $\perm = 4 8 7 5 3 1 2 6$
is $\profile = addddaa$.

If $\profile$ is the change profile of a permutation $\perm$, then
the recurrence of~\eqref{eq:min-piles-mixed} can be written as
\begin{align}
\label{eq:trajectory-regular}
\pile^*(s+1) =
\pile^*(s)
+ \left[
	\pileType(\pile^*(s)) = \qsym, \,
	\profile(s) = d
\right]
+ \left[
	\pileType(\pile^*(s)) = \ssym, \,
	\profile(s) = a
\right]
.
\end{align}
\eqref{eq:trajectory-regular} has a useful interpretation
as the state update equation governing
a deterministic finite automaton (DFA) over the alphabet $\profileAlphabet$, where
$s$ represents discrete time, and
$\pile^*$ represents the trajectory of the state under input $\profile$.
\eqref{eq:trajectory-regular} describes the dynamics of a DFA
defined entirely by $\pileType$:
It is made of a chain of states, %\ie
one of a corresponding type for each type assignment of $\pileType$.
For example, $\pileType = \qsym\qsym\ssym\qsym$ defines a chain of five ($5$) nodes, shown in Fig.~\ref{fig:dfa-example-1}.

\begin{figure}[h!]
\centering
\begin{tikzpicture}

	\tikzset{
		->, %makes the edges directed
		>=stealth', %makesthearrowheadsbold
		node distance=1.5cm, %specifiestheminimumdistancebetweentwonodes.Changeifnecessary. 
		every state/.style={thick,fill=gray!10}, %setsthepropertiesforeach’state’node 
		initial text=$ $, %setsthetextthatappearsonthestartarrow 
	}
	
	\makecommand{\qnode}[1]{
	\node[state, accepting, right of ]
	}

	\node[state,initial,accepting] (q1) [label=below:$\qsym$] {$1$};
	\node[state,accepting,right of=q1] (q2) [label=below:$\qsym$] {$2$};
	\node[state,accepting,right of=q2] (q3) [label=below:$\ssym$] {$3$};
	\node[state,accepting,right of=q3] (q4) [label=below:$\qsym$] {$4$};
	\node[state,right of=q4] (end)
	%[label=below:$\endState$]
	{$5$};

	\draw 
		(q1) edge[loop above] node{$a$} (q1) 
		(q1) edge[above] node{$d$} (q2)
		(q2) edge[loop above] node{$a$} (q2) 
		(q2) edge[above] node{$d$} (q3)
		(q3) edge[loop above] node{$d$} (q3) 
		(q3) edge[above] node{$a$} (q4)
		(q4) edge[loop above] node{$a$} (q4) 
		(q4) edge[above] node{$d$} (end)
		(end) edge[loop right] node{$a$, $d$} (end)
;

\end{tikzpicture}
\caption{The deterministic automata (chain), $\dfaFn(\qsym\qsym\ssym\qsym)$.}
\label{fig:dfa-example-1}
\end{figure}

Queue states (piles) consume arbitrarily long ascending ``runs'' of a permutation,
but no descents, while stack states consume descending runs but no ascents:
Accordingly, in Fig.~\ref{fig:dfa-example-1},
the $\qsym$ nodes have $a$ as a self loop while $d$ transitions to the next node in the chain;
conversely,
the $\ssym$ node has $d$ as a self-loop and 
$a$ transitions to the next node.
A final terminal node, appended to the end of the chain, returns to itself on either $a$ or $d$.

We have seen that sort is feasible on $\pileType$
if and only if~\eqref{eq:min-piles-mixed} has a solution.
The trajectory of the change profile $\profileFn(\perm)$
starting from the beginning of the chain ($\pile^*(1) = 1$)
is a solution to~\eqref{eq:min-piles-mixed} as long as it
ends in one of the pile states (any but the terminal node).
%Otherwise, \ie, 
If the terminal state is reached, however, then
\eqref{eq:min-piles-mixed} has no solution;
$\pileType$ is too short to sort $\perm$.
Accordingly,
we mark only the pile states as \emph{accepting} states (double outline in Fig.~\ref{fig:dfa-example-1}),
but not the terminal state.
Then we obtain the following key results:
\begin{prop}
\label{prop:sort-by-automata}
A permutation $\perm$ can be sorted
by a single-round shuffle with the pile type assignments $\pileType$ if and only if
its change profile $\profileFn(\perm)$ is accepted 
by the corresponding automaton, which we denote $\dfaFn(\pileType)$.
%($\profile$ is therefore a member of an associated regular language.)
%(The equivalence between regular languages and finite automata is well known.)
\end{prop}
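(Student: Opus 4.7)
The plan is to exploit the observation, already implicit in the construction of $\dfaFn(\pileType)$, that reading the change profile $\profileFn(\perm) = \profile$ on $\dfaFn(\pileType)$ simulates precisely the recurrence~\eqref{eq:trajectory-regular}, which is itself just a restatement of~\eqref{eq:min-piles-mixed}. Concretely, if $\pile^*$ denotes the state trajectory obtained by starting the DFA in state $1$ and feeding it the symbols $\profile(1), \profile(2), \ldots, \profile(n-1)$, then $\pile^*$ agrees with the sequence defined by~\eqref{eq:min-piles-mixed} applied to $\perm$ and $\pileType$ for as long as the trajectory remains in a pile state; once $\pile^*$ would exceed $|\pileType|$, the DFA absorbs into the (non-accepting) terminal state and remains there.

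For the ``only if'' direction, suppose $\perm$ is sortable on $\pileType$ per Definition~\ref{defn:sort-with-types}. Then there exists $\pile$ taking values in $\{1, \ldots, |\pileType|\}$ such that $(\pileType, \pile)$ satisfies~\eqref{eq:hetero-sort}. By Lemma~\ref{lemma:hetero-func-bound}, the minimal trajectory $\pile^*$ satisfies $\pile^*(n) \leq \pile(n) \leq |\pileType|$ and, being monotone non-decreasing in $s$, stays within $\{1, \ldots, |\pileType|\}$ throughout. Hence the DFA never enters the terminal state while reading $\profile$; it finishes in a pile (accepting) state, so $\dfaFn(\pileType)$ accepts $\profileFn(\perm)$.

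For the ``if'' direction, suppose $\dfaFn(\pileType)$ accepts $\profileFn(\perm)$. Then the trajectory $\pile^*$ terminates in some pile state $\pile^*(n) \in \{1, \ldots, |\pileType|\}$ rather than the terminal state, and by monotonicity $\pile^*$ takes all its values in that range. Consequently $\pile^*$ satisfies~\eqref{eq:min-piles-mixed} throughout, which is the equality version of~\eqref{eq:hetero-sort} and thus trivially implies it. Therefore $(\pileType, \pile^*)$ is a sort of $\perm$, and $\pileType$ sorts $\perm$ per Definition~\ref{defn:sort-with-types}.

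The only real subtlety is the bookkeeping around the terminal state: one must verify that ``the DFA reaches the end state at some step $s$'' corresponds exactly to ``$\pile^*$ would exceed $|\pileType|$ at step $s$'' in~\eqref{eq:min-piles-mixed}. This is immediate from how the chain is constructed---the transitions out of state $|\pileType|$ on the non-self-loop symbol point to the absorbing terminal---but it is the one place where the otherwise mechanical correspondence between DFA dynamics and the recurrence needs to be stated carefully, since the recurrence itself does not intrinsically ``know'' about the boundary $|\pileType|$.
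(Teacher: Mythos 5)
Your proposal is correct and follows essentially the same route as the paper, which proves the proposition implicitly in the discussion preceding it: the chain $\dfaFn(\pileType)$ is constructed precisely so that its state trajectory on $\profileFn(\perm)$ simulates the recurrence~\eqref{eq:min-piles-mixed}, and acceptance (ending in a pile state rather than the terminal state) is equivalent to that recurrence having a solution, which by Lemma~\ref{lemma:hetero-func-bound} is equivalent to sortability on $\pileType$. Your explicit handling of the monotonicity of $\pile^*$ and the terminal-state bookkeeping is a welcome tightening of what the paper leaves informal, but it is the same argument.
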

\begin{corollary}
A permutation $\perm$
can be sorted by a multi-round shuffle with the type schedule $\typeSchedules$ if and only if
$\profileFn(\perm)$ is accepted by $\dfaFn(\typeSchedules) \dfneq \dfaFn(\vrPileType_1)$
($\vrPileType_1$ of Lemma~\ref{prop:hetero-multi-sort}).
\end{corollary}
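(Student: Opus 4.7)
The plan is to chain together Lemma~\ref{prop:hetero-multi-sort} with Proposition~\ref{prop:sort-by-automata}; the corollary really is a one-line consequence, so the main job is just to unpack the definitions (in particular Def.~\ref{defn:hetero-multi-types-sort}) and verify that the two existential quantifiers line up.

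More concretely, I would first invoke Def.~\ref{defn:hetero-multi-types-sort} to unfold ``$\typeSchedules$ sorts $\perm$'' as: there exists $\schedules$ such that $\shuffleOp \, \typeSchedules \, \schedules \, \perm = \perm_I$. Lemma~\ref{prop:hetero-multi-sort} applied to this $\schedules$ produces a virtual single-round shuffle $(\vrPileType_1, \vrpile_1)$ with $\shuffleOp \, \vrPileType_1 \, \vrpile_1 \, \perm = \perm_I$; crucially, the virtual types $\vrPileType_1$ are a deterministic function of $\typeSchedules$ alone, so only $\vrpile_1$ is parametrized by the choice of $\schedules$. Conversely, because the map of~\eqref{eq:hetero-multi-embedding} is reversible, any single-round sort $\vrpile_1$ on $\vrPileType_1$ pulls back to some multi-round $\schedules$ on $\typeSchedules$. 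Hence ``$\typeSchedules$ sorts $\perm$'' is equivalent to ``$\vrPileType_1$ sorts $\perm$'' in the sense of Def.~\ref{defn:sort-with-types}.

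Next I would apply Proposition~\ref{prop:sort-by-automata} to the single-round scenario $\vrPileType_1$: this states that $\vrPileType_1$ sorts $\perm$ iff $\profileFn(\perm)$ is accepted by $\dfaFn(\vrPileType_1)$. Combining the two equivalences and using the definition $\dfaFn(\typeSchedules) \dfneq \dfaFn(\vrPileType_1)$ yields the claim.

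There is no real obstacle; the only subtlety worth being careful about is that Lemma~\ref{prop:hetero-multi-sort} is stated in a form ($\shuffleOp \, \typeSchedules \, \schedules \, \perm = \varperm \iff \shuffleOp \, \vrPileType_1 \, \vrpile_1 \, \perm = \varperm$) that makes a \emph{given} $\schedules$ correspond to a \emph{particular} $\vrpile_1$, so one should confirm both directions of the bijection between $\schedules$ and $\vrpile_1$ (as the excerpt notes, this follows from reversibility of~\eqref{eq:hetero-multi-embedding}) in order to conclude the equivalence at the level of the existential quantifier over pile assignments rather than just a fixed one.
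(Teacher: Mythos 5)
Your proposal is correct and matches the paper's approach exactly: the paper dismisses this corollary as ``a trivial application of Lemma~\ref{prop:hetero-multi-sort} to Prop.~\ref{prop:sort-by-automata},'' which is precisely the chain of equivalences you spell out. Your extra care about the bijection between $\schedules$ and $\vrpile_1$ (via reversibility of~\eqref{eq:hetero-multi-embedding}) is the right subtlety to flag, and the paper's surrounding discussion in Section~\ref{sec:multi-round-results} confirms it.
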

The proof of the corollary is a trivial application of Lemma~\ref{prop:hetero-multi-sort} to Prop.~\ref{prop:sort-by-automata}.

\makecommand{\dynamicsFn}{f}
\makecommand{\localStateVar}{\pile^*}
%\makecommand{\localStateVar}{\pile}

For completeness, we give a formal definition of $\dfaFn$:
\begin{defn}[Sort certificate chain]
The ``chain'' automaton $\dfaFn(\pileType)$ of a types assignment $\pileType$
on $\numpile$ piles
is defined over the alphabet $\profileAlphabet$ as follows:
It is on states $[\numpile + 1]$,
starting from $\pile^*(1) = 1$, and accepting the states $[\numpile]$.
It is governed by 
$\pile^*(s + 1) = \dynamicsFn_{\dfaFn(\pileType)}(\pile^*(s), \profile(s))$,
where
\[
	\dynamicsFn_{\dfaFn(\pileType)}(\localStateVar, \profile) = \begin{cases}
		\localStateVar + \left[
			\pileType(\localStateVar) = \qsym, \,
			\profile = d
		\right]
		+ \left[
			\pileType(\localStateVar) = \ssym, \,
			\profile = a
		\right]
		& \localStateVar \leq \numpile
		\\
		\numpile + 1 & \localStateVar = \numpile + 1
		.
	\end{cases}
\]
\end{defn}

\subsection{Objective reformulation}

\makecommand{\questionCons}{{\mathcal Q}}
\makecommand{\allowedSched}{\sortScenario}

\makecommand{\paramVar}{\theta}
\makecommand{\paramSet}{\Theta}

\makecommand{\word}{W}
%\makecommand{\alphabet}{\Sigma}
\makecommand{\availableChains}{\Pi}
\makecommand{\dfa}{\pi}

\makecommand{\sortScript}{{\rm SORT}}
\makecommand{\chainScript}{{\rm CHAIN}}

Prop.~\ref{prop:sort-by-automata} establishes a useful correspondence
between type schedules and certain chain-like discrete automata.
It allows us to reframe questions of sort feasibility---%
such as Prob.~\ref{prob:repeated-round} and Prob.~\ref{prob:variable-round}---%
in terms of change profiles and accepting chains.

The sort feasibility problems of this paper all have instances (questions)
expressible in the following form.
\begin{defn}%[Sorting pile types]
$\questionCons_\sortScript(\perm, \allowedSched)$~---~%
Given a permutation $\perm$ and a set $\allowedSched$ of permissible type schedules,
is there a legal shuffle $(\typeSchedules, \schedules)$---%
in the sense that $\typeSchedules \in \allowedSched$---%
which
sorts $\perm$?
(The definition is in terms of multi-round shuffle without loss of generality.)
\end{defn}
%The constraints of an instance of shuffle are fully captured by the set $\allowedSched$.
%
\begin{defn}
We will call the set of all such questions $\sortScript$;
an impractically large problem
which
includes both
Prob.~\ref{prob:repeated-round} and Prob.~\ref{prob:variable-round}
as subsets.
\end{defn}
Prop.~\ref{prop:sort-by-automata} motivates questions of the related form:
\begin{defn}%[Accepting chains]
$\questionCons_\chainScript(\profile, \allowedSched)$~---~%
Given a change profile $\profile$ and a set $\allowedSched$ of permissible type schedules,
is there a schedule $\typeSchedules\in\allowedSched$
such that the chain DFA $\dfaFn(\typeSchedules)$ accepts $\profile$?
\end{defn}
\begin{defn}
We will call the set of all such questions $\chainScript$.
\end{defn}

\makecommand{\clauseVar}{\phi}
\makecommand{\formulaVar}{\clauseVar}

In the sequel, we analyze a sequence of pile shuffle sort feasibility problems,
ultimately leading to 
%variable-round Dealer's choice sort, \ie,
Prob.~\ref{prob:variable-round}.
For each such variant $\probVar \subset \sortScript$,
we can reformulate it as
% (reduce it to) 
an accepting chain problem,
\[
\probVar'
=
\left\{
\questionCons_\chainScript \left(
  \profileFn(\perm), \allowedSched
\right)
\, : \,
\questionCons_\sortScript(\perm, \allowedSched) \in \probVar
\right\}
.
\]
$\probVar$ and $\probVar'$ are co-reducible in that
$\probVar'$ can be reduced back to $\probVar$
by any means of obtaining a permutation $\perm$ with a given change profile.
Then our objective will become,
for each such pair,
to develop a polynomial-time reduction of ${\rm SAT}$ to $\probVar'$,
proving that $\problemVar'$ (and therefore $\problemVar$) is \NPHard/.
$\SAT$ is the set of all questions of the following form:
\begin{defn}%[Satisfying variable assignments]
$\questionCons_{{\rm SAT}}(\formulaVar)$~---~%
Given a CNF formula $\formulaVar$,
is there a satisfying assignment of its variables, \ie, $\exists \, \bf x \in \SAT(\phi)$?
\end{defn}
We note that
sort feasibility is generally no harder than NP since
the type schedule $\typeSchedules$
%of all (real) pile types, 
is a checkable certificate.

%%
%To reduce $\SAT$ to $\questionSet'$
%we should demonstrate a polynomial-time algorithm which,
%for any instance $\questionCons_\SAT(\phi) \in \SAT$,
%produces a corresponding instance $\questionCons_\chainScript(\profile, \sortScenario) \in \problemVar'$
%which is affirmitive if and only if $\phi$ is satisfiable.
%%We denote the condition by ${\bf x} \in \SAT(\formulaVar)$.
%%

%\emph{Permutation synthesis:}
%Many permutations will share a given change profile in general, but
%it is easy to obtain one:
%A simple construction given a profile $\profile$ is as follows:
%Start with an empty list.
%Add unity ($1$) to it. Then, for every step $s \in [n-1]$,
%if $\profile(s) = a$, then append $(s+1)$ to the list, otherwise prepend it to the list.
%Finally, invert the list as if it were a permutation sequence $\perm^{-1}$ to obtain $\perm$.

\subsection{Trajectories of change profiles on chain automata}

\makecommand{\word}{W}

Developing our reductions of $\SAT$
to a sequence of accepting-chain problems
will require detailed analysis of the trajectories of certain change profiles
on various chain automata.
Therefore, to end this section we provide a small amount of mathematical notation
for categorizing strings according to their trajectories on certain chains.

Let $\dynamicsFn_\dfaVar(k, \word)$
denote the state
reached after a chain $\dfaVar$
starting from
state $k$
consumes a string $\word$.
We will denote by $k_1 \to_\dfaVar k_2$ the set of all strings
whose trajectories starting from $k_1$ on $\dfaVar$ end at $k_2$,
\ie,
all $\word$ such that $\dynamicsFn_\dfaVar(k_1, \word) = k_2$.
The subscript may be omitted when the chain $\dfaVar$ is implicit.
We will denote by $k_1 \to_\dfaVar \, \geq k_2$ the strings
for which $\dynamicsFn_\dfaVar(k_1, \word) \geq k_2$,
\ie, whose trajectories starting from $k_1$ end \emph{no earlier} than $k_2$.
(By definition,
$k_1 \to_\dfaVar k_2 \implies k_1 \to_\dfaVar \, \geq k_2$.)
Finally,
we denote by
$\geq k_1 \to_\dfaVar \, \geq k_2$ the strings
where $k \geq k_1 \implies \dynamicsFn_\dfaVar(k, \word) \geq k_2$,
\ie, whose trajectories starting no earlier than $k_1$ end no earlier than $k_2$.
(By definition,
$\geq k_1 \to_\dfaVar \, \geq k_2 \implies k_1 \to_\dfaVar \, \geq k_2$.)

The change profiles of our reductions will be constructed by the concatenation
of a number of \emph{words} (strings), or \emph{gadgets}.
%, \ie, $\profile = W_1 W_2 \ldots W_n$.
%
Each word will represent a claim about the chain
which is tested when the word is played over a segment of the chain,
based on where its trajectory starts.
Information about the result of that test is encoded in the position where the trajectory of the word ends,
which is passed forward as the starting position for the next word/test.
%%
%The final word of the profile tests the claim ``the chain accepts the change profile'';
%tautologically,
%the trajectory ends in some accepting state if the claim is true, whereas
%it reaches the non-accepting end state if the claim is false.
%
As we compose change profiles from smaller strings,
we rely on a key property:
% of trajectories on chain automata.
% for composition of the basic gadgets.
%
\begin{lemma}[Non-backtracking lemma]
\label{lemma:no-backtrack}
The trajectory 
of a string $\word \in \profileAlphabet^*$
cannot end earlier
on any $\dfaVar \in \dfaFn(\arbpile^*)$
by starting later, \ie,
$k_1 \to_\dfaVar \geq k_2
\implies
\geq k_1 \to_\dfaVar \geq k_2
$.
\end{lemma}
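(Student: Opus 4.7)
My plan is to derive the lemma from a general monotonicity property of chain automata: for any $\dfaVar \in \dfaFn(\arbpile^*)$, any string $\word \in \profileAlphabet^*$, and any states $k \leq k'$ of $\dfaVar$, that $\dynamicsFn_\dfaVar(k, \word) \leq \dynamicsFn_\dfaVar(k', \word)$. Once this is in hand, the non-backtracking lemma is an immediate corollary: if $\word \in k_1 \to_\dfaVar \,\geq k_2$, i.e.\ $\dynamicsFn_\dfaVar(k_1, \word) \geq k_2$, then for every $k \geq k_1$ monotonicity yields $\dynamicsFn_\dfaVar(k, \word) \geq \dynamicsFn_\dfaVar(k_1, \word) \geq k_2$, so $\word \in \,\geq k_1 \to_\dfaVar \,\geq k_2$.

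First I would reduce monotonicity to the single-symbol case: for $k \leq k'$ and any $b \in \profileAlphabet$, show $\dynamicsFn_\dfaVar(k, b) \leq \dynamicsFn_\dfaVar(k', b)$. This falls straight out of the piecewise definition of $\dynamicsFn_{\dfaFn(\pileType)}$: at every non-terminal state $k \in [\numpile]$ the transition is either $k \mapsto k$ (self-loop) or $k \mapsto k + 1$ (advance by one), fully determined by $\pileType(k)$ and $b$; the terminal state $\numpile + 1$ absorbs both symbols. A short case analysis then closes the single-symbol step. If $k = k'$, equality is trivial. If $k' = \numpile + 1$, then $\dynamicsFn_\dfaVar(k', b) = \numpile + 1 \geq \dynamicsFn_\dfaVar(k, b)$. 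Otherwise $k < k' \leq \numpile$, and $\dynamicsFn_\dfaVar(k, b) \leq k + 1 \leq k' \leq \dynamicsFn_\dfaVar(k', b)$.

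I would then extend to arbitrary $\word$ by induction on $|\word|$. The base case $|\word| = 0$ is immediate. For the inductive step, write $\word = \word' b$, apply the inductive hypothesis to obtain $\dynamicsFn_\dfaVar(k, \word') \leq \dynamicsFn_\dfaVar(k', \word')$, and then apply the single-symbol monotonicity at this pair of states, using the fact that $\dynamicsFn_\dfaVar(k, \word' b) = \dynamicsFn_\dfaVar\bigl(\dynamicsFn_\dfaVar(k, \word'), b\bigr)$ by the determinism of the automaton.

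I do not anticipate any substantive obstacle. The content of the argument is just the formalization of the intuition that a chain is a one-way corridor in which trajectories may only advance or stand still, so starting farther along cannot put one behind; the only technical care needed is to handle the absorbing terminal state separately, which is why the case analysis begins there.
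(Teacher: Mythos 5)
Your proposal is correct. It proves the stronger intermediate fact that the transition map of any chain is monotone in its state argument, $k \leq k' \implies \dynamicsFn_\dfaVar(k, \word) \leq \dynamicsFn_\dfaVar(k', \word)$, by induction on $|\word|$ with a single-symbol case analysis, and then reads off the lemma as an immediate corollary. The paper instead argues by contradiction: if a trajectory starting at $k_1' \geq k_1$ ended strictly before $k_2$ while the one from $k_1$ ends at or beyond $k_2$, the two trajectories would have to coincide at some intermediate state (a discrete intermediate-value argument using the unit step bound), after which determinism forces identical endpoints, a contradiction. Both arguments rest on exactly the same two structural facts --- each transition advances by $0$ or $1$, and the automaton is deterministic --- so the difference is one of packaging rather than substance. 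Your route is arguably the cleaner one: the induction is completely mechanical, it avoids the slightly delicate ``the trajectories must cross'' step (which in the paper requires observing that the difference of the two state sequences changes by at most one per step), and the monotonicity statement it establishes is reusable elsewhere. The only point to be careful about, which you handle correctly, is the absorbing terminal state $\numpile + 1$, where the step-size-one bound no longer forces $\dynamicsFn_\dfaVar(k', b) \geq k'$ to be informative and one simply uses that the terminal state maps to itself.
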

\begin{proof}
There is a fairly simple proof by contradiction.
Suppose $\word \in k'_1 \to k'_2$
for some $k_1' \geq k_1$, but $k_2' < k_2$.
Then the trajectories of $\word$ from $k_1$ and $k_1'$ cross each other.
Since the maximum step size on any chain is unity (1), then 
there must be a partition of $W$ into $W_1 W_2$, where,
after $W_1$, the trajectories from both $k_1$ and $k'_1$ simultaneously visit 
the same position $k'' \in [k'_1, k'_2]$.
Then $W_2$ goes from $k''$ to two different endpoints,
drawing the contradiction
%which is impossible,
since 
chains are deterministic.
\end{proof}

\section{Factored instances and our proof strategy}
\label{sec:proof-strategy}

\makecommand{\assignvec}{{\mathbf x}}
\makecommand{\scenarioParams}{\theta}

%\makecommand{\problemVar}{{\mathbf P}}
\makecommand{\problemVar}{\probVar}
%\makecommand{\designAlign}{\Phi}
%\makecommand{\designAlign}{\pileType^\dagger}
\makecommand{\designAlign}{\ddot\pileType}
%\makecommand{\designAlign}{\mathring{\pileType}}

%
As previously stated, 
we prove our main result
through reductions from the \NPHard/ Boolean satisfiability problem (SAT)
to a series of pile shuffle sort feasibility problems.
Given the framework introduced in Sections~\ref{sec:algebra} and~\ref{sec:chains},
we introduce the problems used to build our proof in this section.

\subsection{Factored chains}

The forthcoming variants of sort feasibility are distinguished
by constraints imposed in various rounds of shuffle,
ultimately captured by the allowable type schedules (the set $\sortScenario$) in each problem instance.
For sets 
%of type schedules
that factor
as the concatenation of $n$ rounds of shuffle
with the allowable type assignments given
in each round $k \in [n]$
independently,
\ie,
\[
	\sortScenario = \left\{
		\left( \pileType_1, \ldots, \pileType_n \right)
		\, : \,
		\pileType_1 \in \sortScenario_1,
		\ldots,
		\pileType_n \in \sortScenario_n
	\right\}
	,
\]
we will express them
with the notation
$\sortScenario \dfneq \sortScenario_1 \backslash \ldots \backslash \sortScenario_n$.
The notation is inspired by the fact that
$\dfaFn(\typeSchedules) = \dfaFn\left( \pileType_1 \backslash \ldots \backslash \pileType_n \right)$
for every
$
\typeSchedules
=
\left( \pileType_1, \ldots, \pileType_n \right)
%=
\in \sortScenario_1 \backslash \ldots \backslash \sortScenario_n$.
We refer to problem instances that can be represented in this way as \emph{factored} instances
and
to the corresponding chains as factored chains.
We note that repeated-round sort feasibility (Prob.~\ref{prob:repeated-round}) is
\[
%	\sortScript\mhyphen\left(\backslash \arbpile^\numpile\right)^\numphase
%	=
	\left\{
	\questionCons_\sortScript(
		\perm, \,
		\overbrace{
			\arbpile^m \backslash \ldots \backslash \arbpile^m
		}^{\times \numphase}
	)
	\, : \,
	\perm \in [\nats_0]!,
	\quad
	\numphase \geq 0,
	\,
	\numpile \geq 1
	\right\}
	,
\]
and the variable-round problem (Prob.~\ref{prob:variable-round}) is
\[
%	\sortScript\mhyphen\arbpile^{m_1} \backslash \ldots \backslash \arbpile^{m_\numphase}
%	=
	\left\{
	\questionCons_\sortScript\left(
		\perm, \arbpile^{m_1} \backslash \ldots \backslash \arbpile^{m_\numphase}
	\right)
	\, : \,
	\perm \in [\nats_0]!, \quad
	\numphase \geq 0,
	\,
	\numpile_1 \geq 1, \ldots, \numpile_\numphase \geq 1
	\right\}
	.
\]

The problems we present next
are motivated
by variable-round sort feasibility (Prob.~\ref{prob:variable-round}),
but allow us to begin with simpler reductions by
imposing additional constraints in certain rounds of shuffle.
We limit our attention to problems in three ($3$) rounds,
which is sufficient to demonstrate \NPHard/ness more generally.

%%
%We refer to the three rounds, in respective order, as
%the \emph{alignment} round, the \emph{assignment} round, and the \emph{collection} round.
%%

\subsection{Outline}

\makecommand{\Rom}[1]{{\rm \MakeUppercase{\romannumeral #1}}}
\makecommand{\probAlias}[1]{\questionSet_\Rom{#1}}

We will begin progress toward our main result
with a reduction from $\SAT$
to the following artificially constrained
sort feasibility problem.
Introducing
our paper's ``\emph{key}'' type sequence
$$\designAlign = \qsym\qsym\qsym\qsym\ssym\ssym,
$$
our first problem, denoted $\probAlias{1}$, has allowable schedules of the form
$\designAlign \backslash \arbpile^{m_2} \backslash \qsym^{m_3}$,
with $m_2$ and $m_3$ as problem parameters:
An instance 
%of $\probAlias{1}$
%($\probAlias{1}'$)
permits any type schedule in exactly three rounds such that:
(1) the first round type assignment is precisely $\designAlign$, using $|\designAlign| = 6$ piles,
(2) the second round is on $\numpile_2$ piles of any combination of types, and
(3) the third round is on $\numpile_3$ queues.
\[
	\makecommand{\theWidth}{m}
	\probAlias{1}
	\dfneq
	\left\{
	\questionCons_\sortScript\left(
		\perm,
		\designAlign \backslash \arbpile^{\theWidth_2} \backslash \qsym^{\theWidth_3}
	\right)
	\, : \,
	\perm \in [\nats_0]!, \quad
	\theWidth_2, \theWidth_3 \geq 1
	\right\}
	.
\]
(Our factor notation inherits from string notation
that
an assignment $\pileType$ may stand in as the singleton set of itself.)

The details of $\probAlias{1}$
% ($\probAlias{1}$') 
and other problems of the paper are summarized
in Table~\ref{table:all-problems}.
Each row of the table corresponds to a co-reducible pair of problems
in a function-parametrized family
defined by
\begin{equation}
\label{eq:func-sort}
\makecommand{\theFunc}{f}
\makecommand{\theParam}{\theta}
\makecommand{\theDomain}{\fnDomain}
\sortScript\mhyphen\theFunc
\dfneq \Big\{
	\questionCons_\sortScript(\perm, \theFunc(\theParam))
	\, : \,
	\perm \in [\nats_0]!
	, \,
	\theParam \in \theDomain(\theFunc)
\Big\}
%,
\end{equation}
and
\makecommand{\theFunc}{f}
\begin{equation}
\label{eq:func-chain}
\makecommand{\theParam}{\theta}
\makecommand{\theDomain}{\fnDomain}
\chainScript\mhyphen\theFunc
\dfneq \Big\{
	\questionCons_\chainScript(\profile, \theFunc(\theParam))
	\, : \,
	\profile \in \profileAlphabet^*
	, \,
	\theParam \in \theDomain(\theFunc)
\Big\}
.
\end{equation}
The domain and factor notation
of the parametrizing function $\theFunc$
are given in the second and third columns of Table~\ref{table:all-problems}, respectively.
The first column lists problem aliases used in the paper.
For example, $\probAlias{1}$---%
and its sister $\probAlias{1}' \subset \chainScript$---%
correspond to the first row of the table.
%
%We will denote by $\probAlias{1}'$ the sister of $\probAlias{1}$ in .
%
\begin{table}[]
%\makecommand{\theFunc}{\sortScenario}
\makecommand{\theFunc}{f}
%\makecommand{\theParam}{\theta}
\makecommand{\theDomain}{\fnDomain}
\makecommand{\theProbs}[1]{$\probAlias{#1}$ ($\probAlias{#1}'$)}

\centering
\begin{tabular}{|r|c|c|}
\hline
\text{
%Variant -
$\sortScript$-$\theFunc$ ($\chainScript$-$\theFunc$) Alias
}
	%& Param. space - $\Theta$
	& $\fnDomain(\theFunc)$
		& $\theFunc$ %(\theta)$
\\ \hline\hline
\theProbs{1}	& \multirow{3}{*}{$m_2, m_3 \geq 1$}	
	& $\designAlign \backslash \arbpile^{m_2} \backslash \qsym^{m_3}$
	\\ \cline{1-1} \cline{3-3} 
\theProbs{2}	&                        
	& $\designAlign \backslash \arbpile^{m_2} \backslash \arbpile^{m_3}$
	\\ \cline{1-1} \cline{3-3} 
\theProbs{3}	&
	& $\arbpile^6 \backslash \arbpile^{m_2} \backslash \arbpile^{m_3}$
	\\ \hline
\theProbs{4}	& $\numpile \geq 1$
	& $\arbpile^\numpile \backslash \arbpile^\numpile \backslash \arbpile^\numpile$
	\\ \hline
\theProbs{5}	& $k \geq 1$
	& $\designAlign^k \backslash \arbpile^{6k} \backslash \arbpile^{6k}$
	\\ \hline
\hline
\text{Prob.~\ref{prob:repeated-round}}
	& $\numphase \geq 0$, $\numpile \geq 1$
	& $\overbrace{
			\arbpile^m \backslash \ldots \backslash \arbpile^m
		}^{\times \numphase}$
	\\ \hline
\text{Prob.~\ref{prob:variable-round}}
	& $\numpile_1, \ldots, \numpile_\numphase \geq 1$
	& $\arbpile^{\numpile_1} \backslash \ldots \backslash \arbpile^{\numpile_\numphase}$
	\\ \hline
\end{tabular}
\caption{Problems of interest in pile shuffle sort feasibility.}
\label{table:all-problems}
\end{table}

Although $\probAlias{1}$ is instrumental
in the development of our thesis,
it is not of general interest in itself, because
the first and third rounds are artificially restricted.
Therefore, subsequently,
we work to remove its restrictions to generalize our results.
Our first relaxation is
to remove the third-round constraint, by reducing SAT to sort feasibility on permissible sets
of the form $\designAlign \backslash \arbpile^{m_2} \backslash \arbpile^{m_3}$
($\probAlias{2}$ in Table~\ref{table:all-problems}).
Finally,
we relax the first-round restriction by reducing SAT
to sort feasibility on arbitrary pile types in all three rounds, via
permissible sets of the form $\arbpile^6 \backslash \arbpile^{m_2} \backslash \arbpile^{m_3}$
($\probAlias{3}$).
Our main result, Theorem~\ref{thm:variable-round-hard}, follows from the last reduction
since variable-round sort feasibility---%
Prob.~\ref{prob:variable-round}, also shown in Table~\ref{table:all-problems}---%
contains $\probAlias{3}$ as a strict subset.

After proving Theorem~\ref{thm:variable-round-hard},
we will discuss potential approaches for reducing $\SAT$
to
sort feasibility on type schedules of the form
$\arbpile^m \backslash \arbpile^{m} \backslash \arbpile^{m}$
($\probAlias{4}$),
or shuffle in three identical (i.e., repeated) rounds---a subset of Prob.~\ref{prob:repeated-round}---%
which would prove Conjecture~\ref{conj:repeated-hard}.

\subsection{Trajectories on factored chains}

We have narrowed our focus to sort on factored instances,
in particular
where the number of piles in each round $\phase\in [\numphase]$ is fixed;
that is,
it is the same across all $\pileType \in \sortScenario_\phase$.
In examining the trajectories of change profiles on such instances' chains,
we will find it useful to group their non-terminal states into ``frames'':
Let us denote the number of piles in round $\phase$ by $\| \sortScenario_\phase \|$.
Borrowing music terminology, we refer to the individual states of a chain as \emph{beats}.
Every $\|\sortScenario_1\|$ beats will form a \emph{measure}, and
every $\|\sortScenario_2\|$ measures create a \emph{phrase}.
Thus the chain will consist of $\|\sortScenario_3\|$ phrases followed by the terminal state, which remains ungrouped.
(In the two-round case, we simply group all the measures into a single phrase.)
We illustrate with the example of the chain $\qsym\qsym\ssym \backslash \qsym\ssym \backslash \qsym\qsym$ in Fig.~\ref{fig:chain-parts}.
\makecommand{\exampleState}{0 \backslash 1 \backslash 1}
\begin{figure}[h!]
\includegraphics[width=\linewidth]{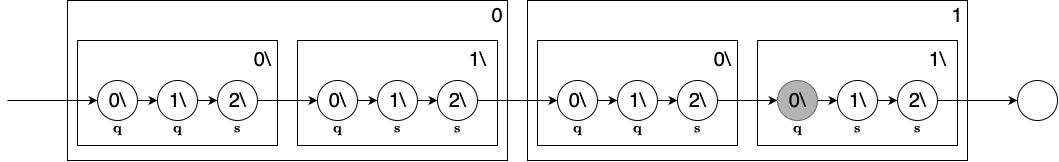}
\caption{Grouping of the chain $\qsym\qsym\ssym \backslash \qsym\ssym \backslash \qsym\qsym$
into two phrases, each of two measures, each of three beats.
A beat with coordinates $0 \backslash 1 \backslash 1$ is shaded.
}
\label{fig:chain-parts}
\end{figure}

This hierarchical grouping of states allows us
to describe positions on a chain using a coordinate notation:
%For example,
We will write $k_1 \backslash k_2 \backslash k_3$
to denote the $k_1$-th beat of the $k_2$-th measure of the $k_3$-th phrase, a short-hand for the position
$k_1 + \|\sortScenario_1\| \, (k_2 + \|\sortScenario_2\| \, k_3)$.
For example, the shaded state in Fig.~\ref{fig:chain-parts} can be written as $\exampleState$.
We use zero-based numbering in this setting, so that origins coincide, \ie, $0 \backslash 0 \backslash 0 = 0$,
and
we use as many coordinates as we have levels of grouping, typically two or three.

\makecommand{\profile}{\delta}
\makecommand{\profileAlphabet}{\Delta}
\makecommand{\step}{b} % "bima" (roughly) is Greek for "step"

\section{Elements of our reductions: Low-level gadgets}
\label{sec:gadgets}

\makecommand{\clause}{\phi}

% Segments
\makecommand{\gtClause}{{clause}}
\makecommand{\gtStart}{start\mhyphen clause}
\makecommand{\gtTest}{test}
\makecommand{\gtPos}{pos}
\makecommand{\gtNeg}{neg}
\makecommand{\gtDk}{dk}
\makecommand{\gtLastTest}{{end\mhyphen \gtTest}}
\makecommand{\gtLastPos}{endpos}
\makecommand{\gtLastNeg}{endneg}
\makecommand{\gtLastDk}{enddk}

\makecommand{\gtNext}{next}

\makecommand{\formSeg}{{formula}}

\makecommand{\gtQ}{{force\mhyphen q}}
%\makecommand{\noInvertSeg}{{noinv}}
%\makecommand{\noInvertSeg}{{req\mhyphen q}}
%\makecommand{\noInvertSeg}{{need\mhyphen q}}

\makecommand{\gtAlign}{align}

\makecommand{\gtPass}{pass}

% Beats
\makecommand{\bStart}{start}
\makecommand{\bChainDisq}{chain\mhyphen disq}
\makecommand{\bActd}{actd}
\makecommand{\bNActd}{\overline{actd}}
\makecommand{\bClauseDisq}{clause\mhyphen disq}
\makecommand{\bEnd}{end}

% clause def for bulk of paper
\makecommand{\clauseSeg}{\gtClause}
\makecommand{\mainClause}{\clauseSeg}

% LEGACY
\makecommand{\chainDisq}{\bChainDisq}
\makecommand{\litSeg}{\gtTest}
\makecommand{\lastLitSeg}{\gtLastTest}

We embark on the proof of Theorem~\ref{thm:variable-round-hard}
by presenting the main gadget
used throughout our reductions
to represent the clauses of a CNF formula.
In the sequel
we will demonstrate how to combine instances of the clause gadget to represent whole formulae.
\begin{defn}[Clause gadget]
\label{def:clause-word}
Given a CNF clause $\clause_j$ 
on $n \geq 1$ variables,
we define an embedding of the clause
within $\profileAlphabet^*$
by
\[
\mainClause(\clause_j)
\dfneq
\gtStart
\, \litSeg_1(\clause_j)
\, \litSeg_2(\clause_j)
\ldots \litSeg_{n-1}(\clause_j)
\, \lastLitSeg_n(\clause_j)
,
\]
where:
\begin{flushleft}
\[
\begin{array}{ll}

\gtStart & \dfneq daaaadda, \\

\litSeg_i(\clause_j) &\dfneq \begin{cases}
\gtPos	& x_i \in \clause_j,
\\
\gtNeg	& \neg x_i \in \clause_j,
\\
\gtDk	& \textrm{otherwise},
\end{cases}
\\

\gtPos &\dfneq aaddddadaaadad, \\
\gtNeg &\dfneq daddddadaaadad, \\
\gtDk &\dfneq adadddadaaadad, \\

\lastLitSeg_i(\clause_j) &\dfneq \begin{cases}
\gtLastPos	& x_i \in \clause_j,
\\
\gtLastNeg	& \neg x_i \in \clause_j,
\\
\gtLastDk	& \textrm{otherwise},
\end{cases}
\\

\gtLastPos &\dfneq aaddddaddddaaaaddaaaadddda, \\

\gtLastNeg &\dfneq daddddaddddadaaadaaaadaaaadddda,	\\

\gtLastDk &\dfneq daaddddaddddaaaaddaaaadddda.

\end{array}
\]
\end{flushleft}

\end{defn}

In the rest of this section,
%Next, 
we characterize the trajectories of $\mainClause$ and its parts
on chains in the family $\designAlign \backslash \arbpile^*$.
(We omit the $\dfaFn$ in $\dfaFn(\sortScenario_1 \backslash \ldots \backslash \sortScenario_m)$
when context makes it implicit.)
If a chain 
%$\pi$
is in $\pileType \backslash \arbpile^*$
for some types assignment $\pileType$, then
we say it is \emph{aligned} to $\pileType$.
%, or $\pileType$-aligned.
%
Chains aligned to the word
$\designAlign = \qsym\qsym\qsym\qsym\ssym\ssym$
% is a special type sequence which,
% along with its dual $\designAlign^\dual$,
provide useful interactions with the components of $\mainClause$.
We are particularly interested in chains which
contain the products of composing $\designAlign$
with embeddings of variable assignments
defined as follows:
%
%\makecommand{\assignEmbed}{\pileType}
\makecommand{\assignEmbed}{\Psi}
\begin{defn}[Assignment embeddings]
%Abusing notation, we define
Let
$\assignEmbed(\top) \dfneq \qsym$ and 
$\assignEmbed(\bot) \dfneq \ssym$.
For any variable assignment $\assignvec = x_1 \ldots x_n \in \{\top, \bot\}^n$,
we define $\assignEmbed(\assignvec) \dfneq \assignEmbed(x_1) \ldots \assignEmbed(x_n)$.
\end{defn}

The key result about $\mainClause$ is the following technical lemma,
which places
its instances 
among various word categories
(based on their trajectories)
using the previously-defined arrow notation.
The lemma refers to three specific beats by aliases
$\bStart \dfneq 0$, $\bChainDisq \dfneq 1$, and $\bEnd \dfneq 5$.

\begin{lemma}
\label{lemma:clause-gadget}
\makecommand{\word}{\mainClause(\clause_j)}
Given a CNF clause $\clause_j$ on $n \geq 1$ variables,
a prefix length $k_1 \geq 0$ (in measures),
and a suffix length $k_2 \geq 1$,
the following statements hold
for every
chain automaton
$\designAlign \backslash A_0 A_1 A_2 \in \designAlign \backslash \arbpile^{k_1} \arbpile^{n+1} \arbpile^{k_2}$:
\begin{enumerate}

\item
if
$A_1 = \assignEmbed(\assignvec) \, \qsym$
for some $\assignvec \in \SAT(\clause_j)$, then
$\word \in \bStart \backslash k_1 \to \bEnd \backslash (k_1 + n)$;

\item otherwise,
$\word \in \bStart \backslash k_1 \to \, \geq \bChainDisq \backslash (k_1 + n + 1)$;

\item
$
\word \in
\bChainDisq \backslash k_1 \to \, \geq \bChainDisq \backslash (k_1 + n + 1)$, unconditionally.

\end{enumerate}

\end{lemma}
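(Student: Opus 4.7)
The plan is to establish the three statements by explicit DFA simulation of the sub-gadgets that compose $\mainClause$, and then assemble the claims by composition using Lemma~\ref{lemma:no-backtrack}. By Lemmas~\ref{label:singleton-compose-lemma} and~\ref{lemma:concat-lemma}, every chain aligned to $\designAlign$ is a concatenation of measures of just two types: the ``true'' measure $\designAlign = \qsym\qsym\qsym\qsym\ssym\ssym$ (when the outer pile is $\qsym = \assignEmbed(\top)$) and the ``false'' measure $\designAlign^\dual = \qsym\qsym\ssym\ssym\ssym\ssym$ (when the outer pile is $\ssym = \assignEmbed(\bot)$). These two measure types agree at beats $0,1,4,5$ and differ only at beats $2$ and $3$; this is the structural feature that each literal gadget exploits to discriminate variable assignments.

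The first step is to compute the trajectory of $\gtStart$ starting from $\bStart \backslash k_1$ and verify that it deposits the trajectory at a specific intermediate beat of measure $k_1$ independent of that measure's type, providing a common hand-off point for the first literal test. Next, I would trace each of $\gtPos$, $\gtNeg$, $\gtDk$ from that hand-off beat under both measure types, showing that each literal gadget sorts its outcome into one of two categories: either it \emph{activates} the clause (encoding that the literal is satisfied) or it \emph{does not activate} while still staying clear of the $\chainDisq$ track, according to the semantics $\gtPos$ activates on a true measure, $\gtNeg$ activates on a false measure, and $\gtDk$ never activates. The end-variants $\gtLastPos$, $\gtLastNeg$, $\gtLastDk$ behave analogously but additionally \emph{commit} the result: an activated trajectory is carried to $\bEnd$, while a non-activated trajectory is pushed into the $\chainDisq$ track of the next measure. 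The trailing $\qsym$ in the hypothesis $A_1 = \assignEmbed(\assignvec)\,\qsym$ is what allows the end-variant's committing tail to actually land on $\bEnd$---its true-typed measure is the launching pad for the commit phase.

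Statements (1) and (2) then follow by induction on the literal index. In the satisfying case, by the time the end-variant is consumed, the trajectory has been activated by at least one literal, so the commit phase lands on $\bEnd \backslash (k_1 + n)$; otherwise, no literal activates the clause and the commit fails, pushing the trajectory into $\chainDisq \backslash (k_1 + n + 1)$ or beyond. Statement (3) follows from the parallel observation that, starting already from $\chainDisq \backslash k_1$, each sub-gadget preserves the invariant ``at or past $\chainDisq$ of one further measure''; applying this across the whole clause and invoking Lemma~\ref{lemma:no-backtrack} to absorb the ``$\geq$'' quantifier on the starting beat completes the argument.

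The main obstacle is the sheer bookkeeping of verifying each sub-gadget's trajectory on both measure types from each relevant entry beat---on the order of three dozen individual traces. The conceptual content is a routine automaton simulation, but correctness rests entirely on the author's chosen constants (the lengths and symbol sequences of each gadget), and identifying which of beats $2$--$4$ encode which semantic role (\emph{activated}, \emph{not-yet-activated}, \emph{chain-disqualified}) will likely emerge only after computing the first few traces. A secondary care-point is handling the ``$\geq$'' arrows in statements (2) and (3): here the non-backtracking lemma is essential to ensure that trajectories which start late, or which overshoot into later measures during the simulation, still obey the advertised lower bound on the ending position.
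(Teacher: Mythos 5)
Your plan mirrors the paper's proof: the paper first establishes per-gadget trajectory lemmas for $\gtStart$, the three $\gtTest$ words, and the three $\gtLastTest$ words by exhaustive simulation on the two measure types $\designAlign$ and $\designAlign^\dual$, then composes them across the clause exactly as you describe, with Lemma~\ref{lemma:no-backtrack} absorbing the $\geq$ quantifiers. The one point needing care is that the ``otherwise'' of statement (2) also covers $A_1 = \assignEmbed(\assignvec)\,\ssym$ with $\assignvec \in \SAT(\clause_j)$ --- activation \emph{can} occur in that case, and it is the end-test word's requirement that the following pile be a queue (your ``launching pad'' observation) that forces the disqualifying outcome there, not the absence of activation as your summary of (2) suggests.
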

We will prove the lemma at the end of the section
after developing additional machinery.

\makecommand{\theClause}{\mainClause(\clause_j)}

Lemma~\ref{lemma:clause-gadget}
%The lemma
is concerned with
segments
of $n+1$ measures in length
($\designAlign \backslash A_1$)
within $\designAlign$-aligned chains,
and with
two starting positions and two ending positions relative to them.
Its three claims are summarized graphically in Fig.~\ref{fig:traj-clause},
where
the word $\theClause$ and any conditions of its membership in a given category are shown
next to the corresponding arrow.
For a segment starting after $k_1$ measures,
the starting positions are $\bStart \backslash k_1$ and $\bChainDisq\backslash k_1$,
shown on the left-hand side of Fig.~\ref{fig:traj-clause}.
These are
the first and second beats, respectively, of the segment's first measure.
The ending positions,
shown on the right-hand side, are
$\bEnd \backslash (k_1 + n)$ and $\bChainDisq\backslash (k_1 + n + 1)$.
$\bEnd \backslash (k_1 + n)$ is the last beat of the last measure---the very last position---of the segment,
while
$\bChainDisq\backslash (k_1 + n + 1)$ is the second beat of the \emph{next} measure,
%\ie, 
beyond the segment and into the suffix.

\tikzset{
  dotnode/.style={
    circle, fill, inner sep=0pt, minimum size=4pt
  }
}

\tikzset{
myarrow/.style={>=latex}
}

\begin{figure}[h!]
\centering
\begin{tikzpicture}

	\makecommand{\theWord}{\gtClause(\clause_j)}
	\makecommand{\theWidth}{7}

  % Define the four corner nodes
  \node[dotnode, label=below left:{$\bStart \backslash k_1$}] (A) at (0, 4) {};
  \node[dotnode, label=below:{$\geq \bChainDisq \backslash k_1$}] (B) at (0, 0) {};
  \node[dotnode, label=below:{$\bEnd \backslash (k_1 + n)$}] (C) at (\theWidth, 4) {};
  \node[dotnode, label=below:{
		$\geq \bChainDisq \backslash (k_1 + n + 1)$
  }] (D) at (\theWidth, 0) {};

  % Draw an arrow from A to D with a label halfway
	\draw[->, myarrow] (A) -- node[midway, above, sloped]
	{$\theWord$ if $A_1 \in \assignEmbed\left( \SAT(\clause_j) \right) \, \qsym$}
	(C)
	;
	\draw[->, myarrow] (A) -- node[midway, below, sloped] {
	} (C);
	\draw[->, myarrow] (A) -- node[midway, above, sloped] {$\theWord$, otherwise} (D);
	\draw[->, myarrow] (B) -- node[midway, above, sloped] {$\theWord$ [unconditionally]} (D);
\end{tikzpicture}
\caption{Types of trajectories of $\theClause$ on $\designAlign \backslash A_0 A_1 A_2$.}
\label{fig:traj-clause}
\end{figure}

According to Lemma~\ref{lemma:clause-gadget} (first claim) if
the segment $\designAlign \backslash A_1$ encodes a satisfying assignment of $\clause_j$, then
a trajectory of $\theClause$ 
starting from $\bStart \backslash k_1$
will end at $\bEnd \backslash (k_1 + n)$.
Otherwise (second claim)
it will end 
no earlier than 
$\bChainDisq\backslash (k_1 + n + 1)$.
An important observation then is that
as a standalone chain,
\ie, with $k_1 = k_2 = 0$,
$\designAlign \backslash A_1$
%(cropped)
accepts $\theClause$
if and only if it
%$A_1$
encodes a satisfying assignment of $\clause_j$.
Meanwhile, the trajectories of $\theClause$
starting from $\bChainDisq\backslash k_1$---%
or any later, given Lemma~\ref{lemma:no-backtrack}---%
reach at least $\bChainDisq\backslash (k_1 + n + 1)$
%or further,
regardless of the segment's composition.

The layout
of claims in Fig.~\ref{fig:traj-clause}
evokes the form of
%illustrates how an instance of $\gtClause$ acts as 
a conjunctive \emph{gate},
shown graphically in Fig.~\ref{fig:conj-gate},
where
a proposition
encoded by the starting position ($A$)
is combined with a proposition
encoded by the action of the word itself ($B$).
In the sequel we will show how to connect instances of $\gtClause$ together in series
to generate a test for an entire formula.
\begin{figure}[h!]
\centering
\begin{tikzpicture}

	\makecommand{\theWord}{clause(\clause_j)}
	
  % Define the four corner nodes
	\node[dotnode, label=left:{$A$}] (A) at (0, 2) {};
	\node[dotnode, label=left:{$\overline A$}] (B) at (0, 0) {};
	\node[dotnode, label=right:{$AB$}] (C) at (2, 2) {};
	\node[dotnode, label=right:{$\overline{AB}$}] (D) at (2, 0) {};

  % Draw an arrow from A to D with a label halfway
	\draw[->, myarrow] (A) -- node[midway, above, sloped] {$B$} (C);
	\draw[->, myarrow] (A) -- node[pos=.6, anchor=south west] {$\overline B$} (D);
	\draw[->, myarrow] (B) -- node[midway, above, sloped] {} (D);
\end{tikzpicture}
\caption{Structure of a conjunctive gate gadget (``AND'')}
\label{fig:conj-gate}
\end{figure}
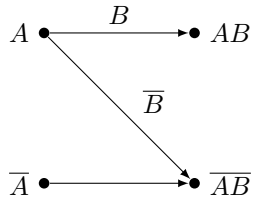

Lemma~\ref{lemma:clause-gadget}
demonstrates how
the starting positions of words in a change profile
influence their trajectories.
% are instrumental
%in passing information forward from word to word along a chain.
%
Moreover, 
it alludes to a semantic significance
to certain beats in certain measures.
%of chains.
We have already introduced the beat aliases
$\bStart$,
$\bEnd$,
and $\bChainDisq$ (or, ``chain disqualified'').
In the sequel we will encounter three more aliases---%
$\bActd$ (``clause activated''),
$\bNActd$ (``clause not activated''),
and $\bClauseDisq$ (``clause disqualified'').
To assist our intuition in the sequel,
we will refer to the aforementioned beats throughout the analysis
by the given aliases, collected in Table~\ref{table:state-semantics} below.
%
%Additionally, 
Table~\ref{table:state-semantics} associates a color with each such beat, 
which enriches some visual aids in the sequel,
such as Fig.~\ref{fig:design-chain} showing the chain $\designAlign$.

\begin{table}[h!]
\centering
\begin{tabular}{|r|l|c|}
\hline
\textbf{Beat}							  & \textbf{Alias}        & \textbf{Color} \\ \hline\hline
$0\backslash\ldots$                       & $\bStart\backslash\ldots$               & blue           \\ \hline
$1\backslash\ldots$                       & $\bChainDisq\backslash\ldots$ & red            \\ \hline
$2\backslash\ldots$                       & $\bActd\backslash\ldots$                & green          \\ \hline
$3\backslash\ldots$                       & $\bNActd\backslash\ldots$     & yellow         \\ \hline
$5\backslash\ldots$                       & $\bEnd\backslash\ldots$, $\bClauseDisq \backslash\ldots$         & gold/orange    \\ \hline
\end{tabular}
\caption{Beat aliases on each measure of a $\designAlign$-aligned chain.}
\label{table:state-semantics}
\end{table}
\begin{figure}[h!]
\centering
\begin{tikzpicture}

	\tikzset{
		->, %makes the edges directed
		>=stealth', %makesthearrowheadsbold
		node distance=1.5cm,
		every state/.style={thick,fill=gray!10,fill opacity=.2, text opacity=1}, 
		initial text=$ $,
	}
	
	\makecommand{\qnode}[1]{
	\node[state, accepting, right of ]
	}

	\definecolor{blue}{HTML}{4682B4}
	\definecolor{red}{HTML}{DC143C}
	\definecolor{green}{HTML}{32CD32}
	\definecolor{yellow}{HTML}{FFD700}
	\definecolor{gold}{HTML}{FFA500}
	
	\node[state,initial,accepting,fill=blue] (q1) [label=below:$\qsym$] {$0$};
	\node[state,accepting,fill=red,right of=q1] (q2) [label=below:$\qsym$] {$1$};
	\node[state,accepting,fill=green,right of=q2] (q3) [label=below:$\qsym$] {$2$};
	\node[state,accepting,fill=yellow,right of=q3] (q4) [label=below:$\qsym$] {$3$};
	\node[state,accepting,right of=q4] (q5) [label=below:$\ssym$] {$4$};
	\node[state,accepting,fill=gold,right of=q5] (q6) [label=below:$\ssym$] {$5$};
	\node[state,right of=q6] (end) {$6$};

	\draw 
		(q1) edge[loop above] node{$a$} (q1) 
		(q1) edge[above] node{$d$} (q2)
		(q2) edge[loop above] node{$a$} (q2) 
		(q2) edge[above] node{$d$} (q3)
		(q3) edge[loop above] node{$a$} (q3) 
		(q3) edge[above] node{$d$} (q4)
		(q4) edge[loop above] node{$a$} (q4) 
		(q4) edge[above] node{$d$} (q5)
		(q5) edge[loop above] node{$d$} (q5) 
		(q5) edge[above] node{$a$} (q6)
		(q6) edge[loop above] node{$d$} (q6) 
		(q6) edge[above] node{$a$} (end)
		(end) edge[loop above] node{$a$, $d$} (end)
;

\end{tikzpicture}
\caption{The chain $\designAlign$. Node color indicates beat significance from Table~\ref{table:state-semantics}.}
\label{fig:design-chain}
\end{figure}

We prepare for the proof of Lemma~\ref{lemma:clause-gadget} 
by examining the trajectories of each word in the clause gadget independently.
We start with the word $\gtStart$,
which prepares a chain to receive tests of variable assignments.
\begin{lemma}[Trajectories of $\gtStart$]
\label{lemma:start-clause}
The following statements hold
for every $n \geq 1$,
every chain $\pi \in \designAlign \backslash \arbpile^{n}$,
and for every measure $k \in [n]$:
\begin{itemize}
\makecommand{\theWord}{{\gtStart}}

\item
$\theWord \in \bStart \backslash k \to \bNActd \backslash k$, \ie,
a start state becomes a clause not activated state;

\item
$\theWord \in \, \geq \bChainDisq \backslash k \to \, \geq \bClauseDisq \backslash k$, \ie,
chain disqualification becomes clause disqualification.
\end{itemize}

\end{lemma}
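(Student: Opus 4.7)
The plan is a direct case analysis on the two possible measure shapes that can appear in any chain $\pi \in \designAlign \backslash \arbpile^n$. By Lemma~\ref{label:singleton-compose-lemma}, each measure is either $\designAlign \backslash \qsym = \designAlign = \qsym\qsym\qsym\qsym\ssym\ssym$ or $\designAlign \backslash \ssym = \designAlign^\dual = \qsym\qsym\ssym\ssym\ssym\ssym$, according to the matching second-round pile type. In both cases beats $0$ and $1$ of the measure are queues; the measure shapes diverge starting at beat $2$.

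For the first claim, I would trace $\gtStart = daaaadda$ starting from $\bStart \backslash k = 0 \backslash k$. The prefix $daaaa$ advances once through the queue at beat $0$ and then self-loops four times on the queue at beat $1$, settling the walker at $1 \backslash k$ in either measure shape. The suffix $dda$ then drives it to $\bNActd \backslash k = 3 \backslash k$ in both cases: if the measure is $\designAlign$, the $dd$ pair advances through the queues at beats $2$ and $3$ and the final $a$ self-loops at beat $3$; if the measure is $\designAlign^\dual$, the first $d$ advances from the queue at beat $1$ to the stack at beat $2$, the second $d$ self-loops on that stack, and the final $a$ advances across it to the stack at beat $3$.

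For the second claim, by the non-backtracking Lemma~\ref{lemma:no-backtrack} it suffices to prove $\gtStart \in \bChainDisq \backslash k \to \, \geq \bClauseDisq \backslash k$. Starting from $1 \backslash k$, the initial $d$ advances to $2 \backslash k$, after which I would again split on the measure type. If the measure is $\designAlign$, beats $2, 3$ are queues and beat $4$ is a stack: the remaining $aaaadda$ first self-loops four times at beat $2$, then $dd$ advances through beats $3$ and $4$, and the final $a$ lands exactly at $\bClauseDisq \backslash k = 5 \backslash k$. If the measure is $\designAlign^\dual$, beats $2$ through $5$ are all stacks, so the four $a$'s each advance, carrying the walker from $2 \backslash k$ to $0 \backslash (k+1)$, already strictly beyond $\bClauseDisq \backslash k$; the remaining $dda$ can only preserve or further advance the position. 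The only bookkeeping subtlety is the edge case where $k$ is the last measure, in which $0 \backslash (k+1)$ coincides with the terminal state $6n$; but the terminal state is still strictly beyond $\bClauseDisq \backslash k$ and absorbs any further input, so the $\geq \bClauseDisq \backslash k$ inequality is preserved. The entire proof thus reduces to a finite, mechanical trajectory check in four small cases.
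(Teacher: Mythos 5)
Your proof is correct and takes essentially the same route as the paper, which establishes the lemma by exhaustively tracing $\gtStart$ over the two measure classes $\designAlign\backslash\qsym=\designAlign$ and $\designAlign\backslash\ssym=\designAlign^\dual$ (the paper via trajectory diagrams, you in prose), with your explicit appeal to the non-backtracking lemma for the $\geq\bChainDisq$ starting positions being a point the paper leaves implicit. The traces check out, so no further comment is needed.
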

The claims of Lemma~\ref{lemma:start-clause} are summarized in the previous fashion in Fig.~\ref{fig:traj-start-clause}.
\begin{figure}[h!]

\makecommand{\theWord}{{\gtStart}}

\centering
\begin{tikzpicture}
  % Define the four corner nodes
  \node[dotnode,
	label=left:{$\bStart \backslash k$}] (A) at (0, 1.5) {};
  \node[dotnode,
	label=left:{$\geq \bChainDisq \backslash k$}] (B) at (0, 0) {};
  \node[dotnode,
	label=right:{$\bNActd \backslash k$}] (C) at (4, 1.5) {};
  \node[dotnode,
	label=right:{$\geq \bClauseDisq \backslash k$}] (D) at (4, 0) {};
  % Draw an arrow from A to D with a label halfway
	\draw[->, myarrow] (A) -- node[midway, above, sloped] {$\theWord$} (C);
	\draw[->, myarrow] (B) -- node[midway, above, sloped] {$\theWord$} (D);
\end{tikzpicture}
\caption{
Types of trajectories of $\theWord$ on $\designAlign \backslash \arbpile^*$.
%Preparing a chain to test variable assignments.
}
\label{fig:traj-start-clause}
\end{figure}

\begin{proof}

We use the opportunity
of proving Lemma~\ref{lemma:start-clause}
to introduce a visual aid used for other more complex gadgets in the sequel.
\begin{figure}[]
	\centering
	\hfill
	\begin{subfigure}[t]{.4\linewidth}
		\includegraphics[width=\linewidth]{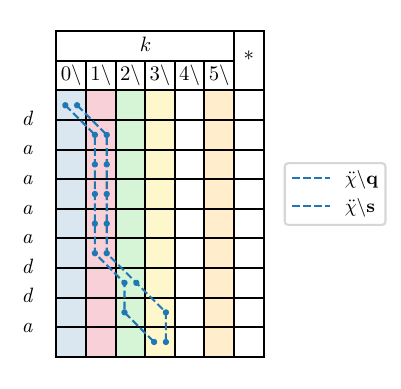}
		\caption{$\gtStart$ trajectories from $\bStart \backslash k$}
		\label{fig:start-clause-0}
	\end{subfigure}
	\hfill
	\begin{subfigure}[t]{.4\linewidth}
		\includegraphics[width=\linewidth]{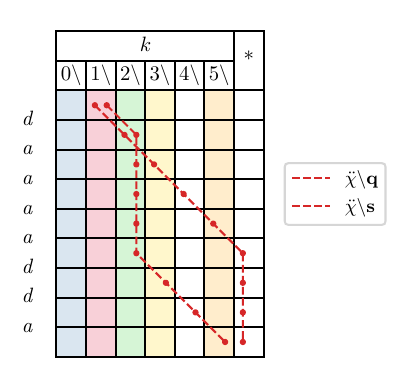}
		\caption{$\gtStart$ trajectories from $\bChainDisq \backslash k$}
		\label{fig:start-clause-1}
	\end{subfigure}
	\hfill
\caption{Trajectories of $\gtStart$
from two starting positions
on the $k$-th measure
of the chain automata in $\designAlign \backslash \arbpile^*$.}
\label{fig:start-clause}
\end{figure}
Figure~\ref{fig:start-clause} shows trajectories created by the word $\gtStart$
on the $k$-th measure
on every permissible chain $\pi \in \designAlign \backslash \arbpile^{*}$.
Fig.~\ref{fig:start-clause-0} shows the trajectories that start from the state $\bStart \backslash k$, while
Fig.~\ref{fig:start-clause-1} shows the trajectories that start from the state $\bChainDisq \backslash k$.
Only the $k$-th measure is shown in each case, because it contains the relevant parts of all trajectories.
(Some later diagrams require additional measures to be framed.)
Over just the $k$-th measure,
every chain looks either like $\designAlign \backslash \qsym$, or like $\designAlign \backslash \ssym$, so
only those two \emph{classes} of trajectories can be distinguished in the two diagrams.

A marker occupying the grid cell in row $t$ and column $\pile$ indicates that
the corresponding trajectories visit state $\pile$ at time $t$, i.e.,
after the first $t$ symbols of the input are consumed.
The letters of $\gtStart$ are written down the left-hand side of the grid,
each next to its associated time step.
The top-left corner of the grid is $(t, \pile) = (0, 0 \backslash k)$.
Each column associated with a beat from Table~\ref{table:state-semantics} has been highlighted in the associated color.
The final column ($*$) represents the union of all states
numbered higher than those in the displayed frame,
including the non-accepting terminal state.
The final row shows the final state reached by the end of the word.

We obtain the lemma since
both kinds of trajectory from $\bStart \backslash k$ (Fig.~\ref{fig:start-clause-0})
terminate at $\bNActd \backslash k$,
while
both kinds of trajectory from $\bChainDisq \backslash k$ (Fig.~\ref{fig:start-clause-1})
terminate at or beyond $\bClauseDisq \backslash k$.
\end{proof}

Next we examine the trajectories of the words $\gtPos$, $\gtNeg$, and $\gtDk$.
The words $\gtPos$ and $\gtNeg$ are meant to activate a clause
if the variable under test is true ($\top$) or false ($\bot$) respectively.
The word $\gtDk$ (``don't care'') simply passes forward the latest activation status of the clause,
regardless of the variable assignment.
\begin{lemma}[Trajectories on variable assignments]
\label{lemma:lit-gadget}
The following statements hold
for every $n \geq 1$,
every chain $\designAlign \backslash \pileTypeSingle_0 \ldots \pileTypeSingle_n \in \designAlign \backslash \arbpile^{n + 1}$,
and
for each measure $k \in [n]-1$:
\begin{itemize}

\item
Activation \emph{may} occur; i.e.,
$\bNActd \backslash k \to \, \bActd \backslash (k+1)$
contains
$\gtPos$ if $\pileTypeSingle_k = \qsym$, and 
$\gtNeg$ if $\pileTypeSingle_k = \ssym$;

\item otherwise, the unactivated state propagates, i.e.,
$\bNActd \backslash k \to \, \bNActd \backslash (k+1)$
contains
$\gtDk$,
$\gtPos$ if $\pileTypeSingle_k \neq \qsym$, and 
$\gtNeg$ if $\pileTypeSingle_k \neq \ssym$%
;

\item
the activated state propagates, i.e.,
$\bActd \backslash k \to \bActd \backslash (k+1) \supset \{ \gtPos, \gtNeg, \gtDk \}$;

\item
disqualification also propagates:
$\bClauseDisq \backslash k \to \, \geq \bClauseDisq \backslash (k+1)
\supset \{ \gtPos, \gtNeg, \gtDk \}$%
.

\end{itemize}
\end{lemma}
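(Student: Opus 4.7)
The plan is a direct case analysis on the pile type at the start of the $k$-th measure. By Lemmas~\ref{label:singleton-compose-lemma} and~\ref{lemma:concat-lemma}, the type sequence of measure $k$ in a $\designAlign$-aligned chain equals $\designAlign \backslash \pileTypeSingle_k$, which reduces to $\qsym\qsym\qsym\qsym\ssym\ssym$ when $\pileTypeSingle_k = \qsym$ and to $\qsym\qsym\ssym\ssym\ssym\ssym$ when $\pileTypeSingle_k = \ssym$; an analogous characterization applies to measure $k+1$ via $\pileTypeSingle_{k+1}$, though both possibilities share the prefix $\qsym\qsym$, so the first two beats of measure $k+1$ are queues regardless. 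In particular, $\bNActd \backslash k$ is a queue if and only if $\pileTypeSingle_k = \qsym$, while $\bClauseDisq \backslash k$ is always a stack.

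With this setup, each claim reduces to a finite trajectory verification, which I would present as pairs of grid diagrams in the style of Fig.~\ref{fig:start-clause}, one per value of $\pileTypeSingle_k$ (and occasionally per value of $\pileTypeSingle_{k+1}$ where a trajectory reaches beat $2 \backslash (k+1)$). The core observation is that the three words are carefully calibrated so that the two leading symbols of each word combine with the type of the starting beat to produce exactly the desired net offset: the $aa$ prefix of $\gtPos$ self-loops (net $0$) on a queue but advances twice (net $+2$) on a stack, the $da$ prefix of $\gtNeg$ behaves oppositely, and the $ad$ prefix of $\gtDk$ advances by exactly one regardless of type. The remaining twelve symbols $ddddadaaadad$ are shared across all three words and trace a parallel walk from the resulting offsets.

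The first and second claims then follow by direct simulation, verifying that the activating scenario (either $\gtPos$ on a queue or $\gtNeg$ on a stack) places the trajectory one beat ``behind'' the non-activating scenarios at the end of measure $k+1$, landing on $\bActd \backslash (k+1)$ versus $\bNActd \backslash (k+1)$ respectively; the $\gtDk$ subcase of the second claim requires checking both values of $\pileTypeSingle_{k+1}$, since its trajectory reaches into measure $k+1$, but the net offset works out identically in both. The third claim follows almost immediately: starting from $\bActd \backslash k$ shifts every simulation of the second claim one beat further right, which puts the endpoint back at $\bActd \backslash (k+1)$. The fourth claim is the simplest~---~a short simulation from $\bClauseDisq \backslash k$ (always a stack) shows each word's trajectory reaches at least $\bClauseDisq \backslash (k+1)$, and Lemma~\ref{lemma:no-backtrack} extends this conclusion to all starting positions $\geq \bClauseDisq \backslash k$.

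I anticipate no conceptual obstacle. The main challenge is clerical: there are roughly a dozen fourteen-symbol trajectories to verify across the four claims, and the principal pitfalls are the interior queue/stack transition at beat $2 \backslash k$ when $\pileTypeSingle_k = \ssym$ and the boundary crossing between beats $5 \backslash k$ and $0 \backslash (k+1)$, where a bookkeeping error would propagate through the remainder of the trajectory.
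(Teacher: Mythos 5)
Your proposal is correct and takes essentially the same route as the paper, whose proof is likewise a purely exhaustive check of the finitely many trajectory classes, presented as grid diagrams that frame two measures (Fig.~\ref{fig:pos-neg-traj} for $\gtPos$/$\gtNeg$ and Fig.~\ref{fig:dk-traj} for $\gtDk$). Two cosmetic slips in your motivating gloss --- the three words do not in fact share their final twelve symbols ($\gtDk$'s third symbol is $a$ where $\gtPos$ and $\gtNeg$ have $d$), and the $da$ prefix of $\gtNeg$ nets $+1$, not $0$, from a stack beat --- but since the argument rests on the full simulations rather than on these summaries, nothing is affected.
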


\begin{proof}
The proof is again exhaustive and can be verified with the diagrams
for $\gtPos$ and $\gtNeg$ in Fig.~\ref{fig:pos-neg-traj}, and
for $\gtDk$ in Fig.~\ref{fig:dk-traj}.
The figures frame two measures of $\designAlign$ instead of one.
\end{proof}

\begin{figure}[p]
	% act=2, nact=3, disq=5
	\makecommand{\theFrac}{.48}
	\centering

	\begin{subfigure}[t]{\theFrac\linewidth}
		\includegraphics[width=\linewidth]{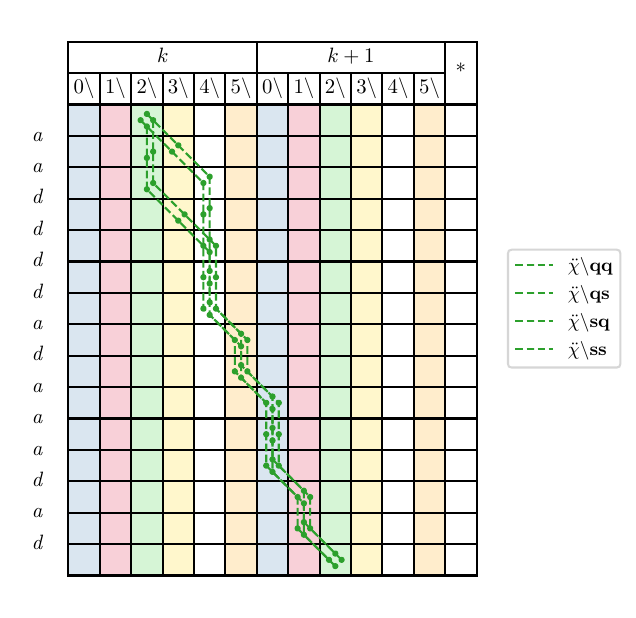}
		\caption{$\gtPos$ starting activated}
	\end{subfigure}
	\begin{subfigure}[t]{\theFrac\linewidth}
		\includegraphics[width=\linewidth]{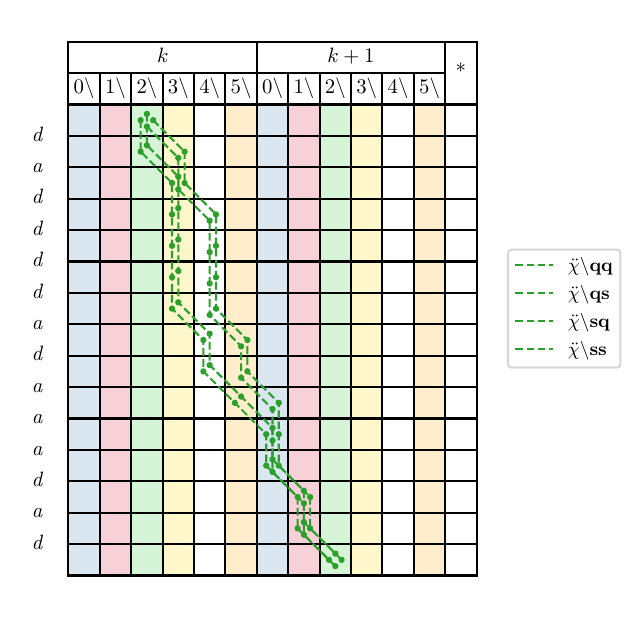}
		\caption{$\gtNeg$ starting activated}
	\end{subfigure}

	\begin{subfigure}[t]{\theFrac\linewidth}
		\includegraphics[width=\linewidth]{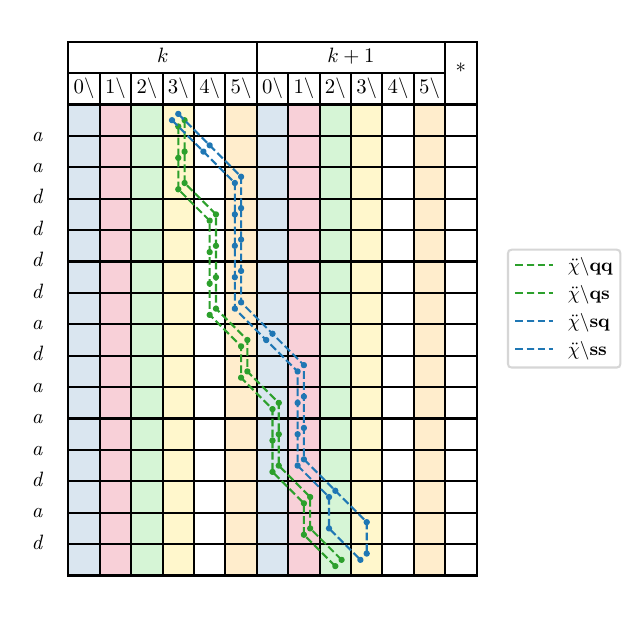}
		\caption{$\gtPos$ starting not activated}
	\end{subfigure}
	\begin{subfigure}[t]{\theFrac\linewidth}
		\includegraphics[width=\linewidth]{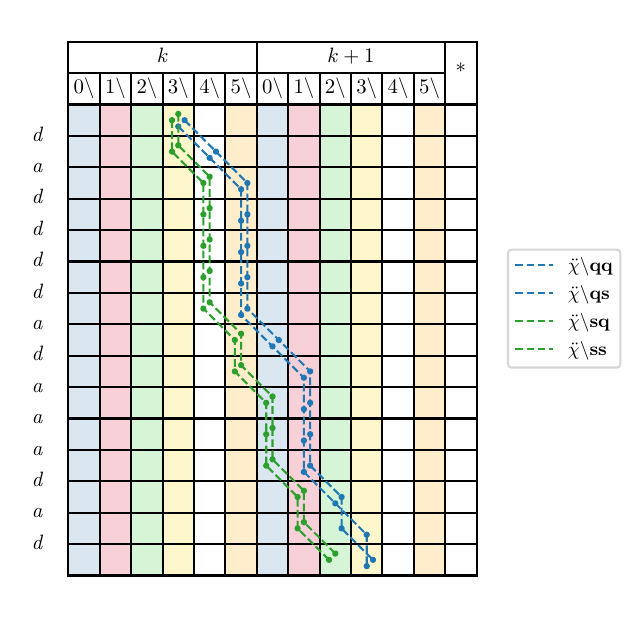}
		\caption{$\gtNeg$ starting not activated}
	\end{subfigure}

	\begin{subfigure}[t]{\theFrac\linewidth}
		\includegraphics[width=\linewidth]{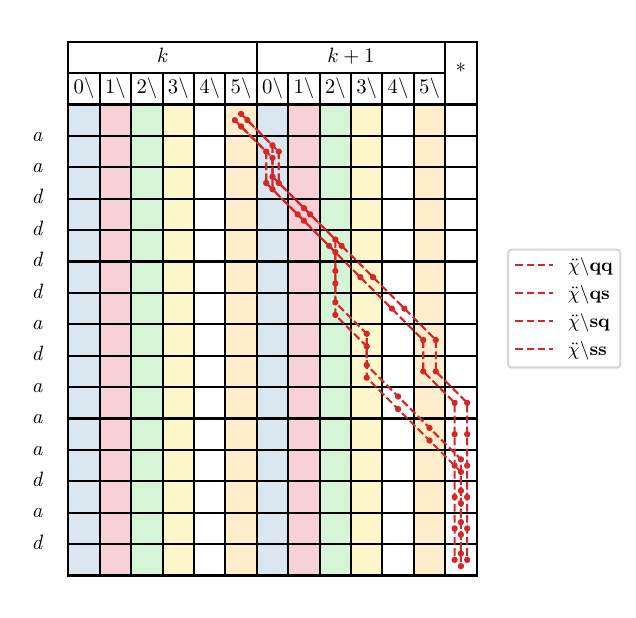}
		\caption{$\gtPos$ starting disqualified}
	\end{subfigure}
	\begin{subfigure}[t]{\theFrac\linewidth}
		\includegraphics[width=\linewidth]{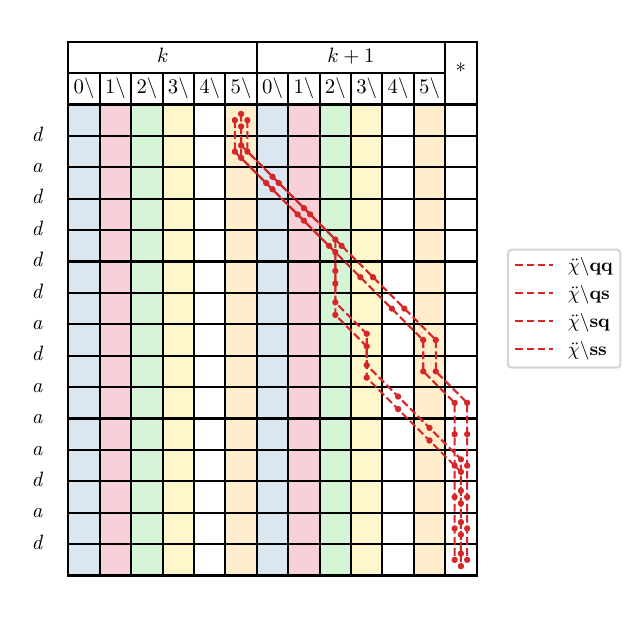}
		\caption{$\gtNeg$ starting disqualified}
	\end{subfigure}

\caption{Trajectories of $\gtPos$ and $\gtNeg$
from three starting positions
on measure $k$
of the chain automata in $\designAlign \backslash
%\arbpile^{n+1}
\arbpile^*
$.
}
\label{fig:pos-neg-traj}
\end{figure}

Once again we summarize the claims of the lemma graphically, in Fig.~\ref{fig:test-variable},
and
Fig.~\ref{fig:gate-lit} shows a corresponding gate structure
in terms of the relations between encoded propositions.
Comparing them demonstrates the role of instances of $\gtTest$
within an instance of $clause$,
of populating the disjunctive clause at the end of a conjunctive formula.

%\clearpage

\begin{figure}[t]
% act=2, nact=3, disq=5
\makecommand{\theFrac}{.3}
\centering
	\begin{subfigure}[t]{\theFrac\linewidth}
		\includegraphics[width=\linewidth]{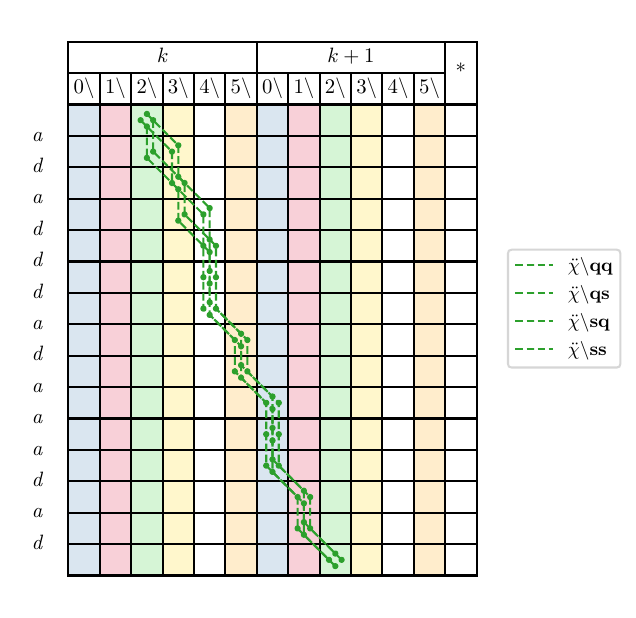}
		\caption{$\gtDk$ starting activated}
	\end{subfigure}
	\begin{subfigure}[t]{\theFrac\linewidth}
		\includegraphics[width=\linewidth]{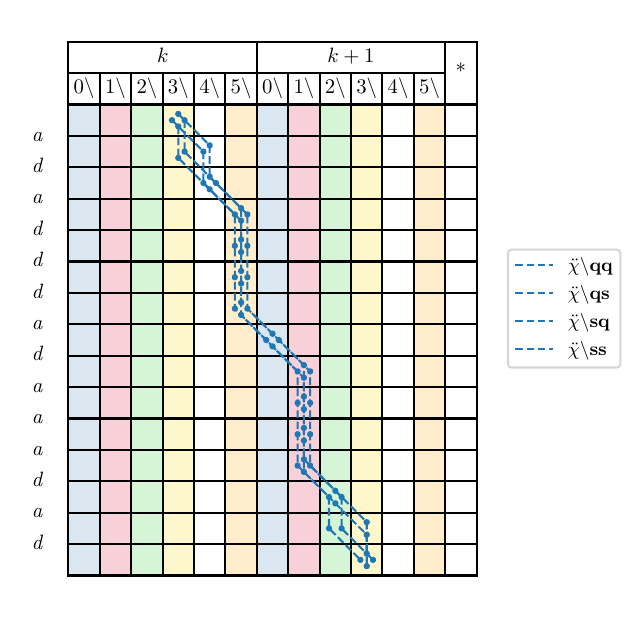}
		\caption{$\gtDk$ starting not activated}
	\end{subfigure}
	\begin{subfigure}[t]{\theFrac\linewidth}
		\includegraphics[width=\linewidth]{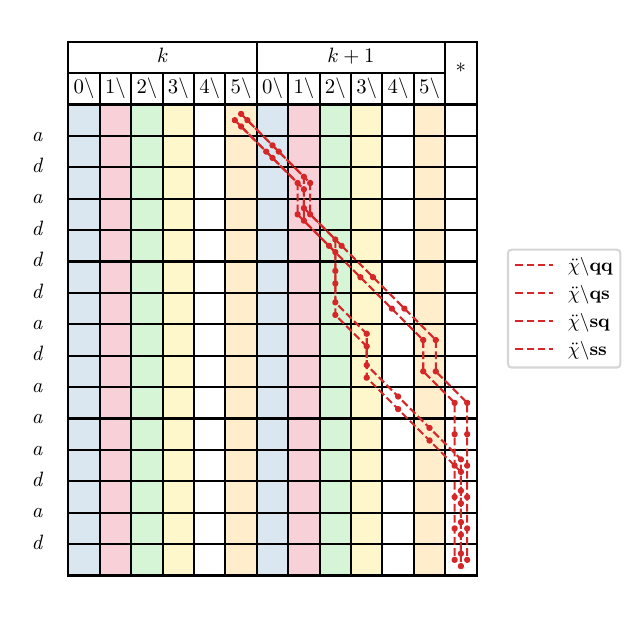}
		\caption{$\gtDk$ starting disqualified}
	\end{subfigure}
\caption{Trajectories of $\gtDk$
from three starting positions
on measure $k$
of the chain automata in $\designAlign \backslash
% \arbpile^{n+1}$.
\arbpile^*$.
}
\label{fig:dk-traj}
\end{figure}

\begin{figure}[t]
\centering
\begin{tikzpicture}
  % Define the four corner nodes

%	\draw[->, myarrow] (A) -- node[midway, above, sloped] {$\theWord$} (C);
	\node[dotnode, label=left:{$\bActd \backslash k$}] (A) at (0, 4) {};
	\node[dotnode, label=left:{$\bNActd \backslash k$}] (B) at (0, 2) {};
	\node[dotnode, label=below:{$\geq \bClauseDisq \backslash k$}] (C) at (0, 0) {};
	
	\node[dotnode, label=right:{$\bActd \backslash (k+1)$}] (D) at (6, 4) {};
	\node[dotnode, label=right:{$\bNActd \backslash (k+1)$}] (E) at (6, 2) {};
	\node[dotnode, label=below:{$\geq \bClauseDisq \backslash (k+1)$}] (F) at (6, 0) {};

  % Draw an arrow from A to D with a label halfway
	\draw[->, myarrow] (A) -- node[midway, above] {
	%$\{ \gtPos, \gtNeg, \gtDk \}$
	$\gtPos, \gtNeg, \gtDk$
	} (D); 
	\draw[->, myarrow] (C) -- node[midway, above] {
	%$\{ 
	$\gtPos, \gtNeg, \gtDk$
	% \}$
	} (F); 
	\draw[->, myarrow] (B) -- node[midway, anchor=north west, pos=.6] {
	$\gtPos$ if $\pileTypeSingle_k = \qsym$, $\gtNeg$ if $\pileTypeSingle_k = \ssym$
	} (D);
	\draw[->, myarrow] (B) -- node[midway, below, align=center] {
	$\gtDk$, 
	$\gtPos$ if $\pileTypeSingle_k \neq \qsym$,
	$\gtNeg$ if $\pileTypeSingle_k \neq \ssym$
	} (E);
\end{tikzpicture}
\caption{
Types of trajectories of $\litSeg$ on $\designAlign \backslash
%\arbpile^{n+1}$.
\arbpile^*$.
%Preparing a chain to test variable assignments.
}
\label{fig:test-variable}
\end{figure}

\begin{figure}[h!]
\centering
\begin{tikzpicture}

  % Define the four corner nodes
	\node[dotnode, label=left:{$AB$}] (A) at (0, 3) {};
	\node[dotnode, label=left:{$A \overline{B}$}] (B) at (0, 1) {};
	\node[dotnode, label=left:{$\overline A$}] (C) at (0, 0) {};
	
	\node[dotnode, label=right:{$A(B+C)$}] (D) at (3, 3) {};
	\node[dotnode, label=right:{$A \overline{B+C}$}] (E) at (3, 1) {};
	\node[dotnode, label=right:{$\overline A$}] (F) at (3, 0) {};

  % Draw an arrow from A to D with a label halfway
	\draw[->, myarrow] (A) -- node[midway, above, sloped] {} (D); 
	\draw[->, myarrow] (C) -- node[midway, above, sloped] {} (F); 
	\draw[->, myarrow] (B) -- node[midway, above] {	$C$	} (D);
	\draw[->, myarrow] (B) -- node[midway, above, sloped] {$\overline C$	} (E);
\end{tikzpicture}
\caption{Structure of a conditioned disjunction gate (``OR'').}
\label{fig:gate-lit}
\end{figure}
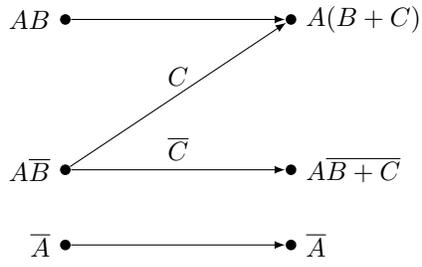

The words $\gtLastPos$, $\gtLastNeg$, and $\gtLastDk$ have the same basic function, but
they consume an additional measure of the chain.
Also they are used only as the final word in a clause,
because---like $\gtStart$---%
they change the testing ``mode'', or
the set of beats that hold significance in the following measure(s).
\begin{lemma}[Trajectories on last assignments]
\label{lemma:endlit-gadget}
\makecommand{\pt}{\pileTypeSingle}
The following hold
for every $n \geq 1$, for every chain $
\designAlign \backslash \pt_0 \ldots \pt_{n+1} \in 
\designAlign \backslash \arbpile^{n+2}$,
and each measure $k \in [n]-1$:
\begin{itemize}

\item
$\{ \gtLastPos, \gtLastNeg, \gtLastDk \}
\subset \bActd \backslash k \to \bEnd \backslash (k+1)$
if $\pt_{k+1} = \qsym$;
otherwise, they are in
$
%\{ \gtLastPos, \gtLastNeg, \gtLastDk \} \subset
\bActd \backslash k \to \bChainDisq \backslash (k+2)$%
;

\item
$\bNActd \backslash k \to \, \bEnd \backslash (k+1)$
contains
$\gtLastPos$ if $\pt_k \pt_{k+1} = \qsym\qsym$, and
$\gtLastNeg$ if $\pt_k \pt_{k+1} = \ssym\qsym$;
the complement are in
$\bNActd \backslash k \to \, \geq \bChainDisq \backslash (k+2)$,
including $\gtLastDk$ unconditionally%
;

\item
%clause disqualification states move to chain disqualification states, i.e.,
$\{ \gtLastPos, \gtLastNeg, \gtLastDk \}
\subset
\bClauseDisq \backslash k \to \, \geq \bChainDisq \backslash (k+2)$%
.

\end{itemize}
\end{lemma}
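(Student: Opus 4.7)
The plan is to prove Lemma~\ref{lemma:endlit-gadget} by the same exhaustive case analysis that established Lemmas~\ref{lemma:start-clause} and~\ref{lemma:lit-gadget}, aided by trajectory tables analogous to those in Figures~\ref{fig:start-clause}, \ref{fig:pos-neg-traj}, and \ref{fig:dk-traj}. Because each of $\gtLastPos$, $\gtLastNeg$, and $\gtLastDk$ consumes an additional measure beyond the starting one, every diagram must frame both the $k$-th and $(k{+}1)$-st measures of a chain in $\designAlign \backslash \arbpile^*$. Within those two measures the chain depends only on the types $\pileTypeSingle_k$ and $\pileTypeSingle_{k+1}$, so there are four chain classes to consider: $\qsym\qsym$, $\qsym\ssym$, $\ssym\qsym$, and $\ssym\ssym$.

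First I would enumerate the nine (word, starting position) pairs in $\{\gtLastPos, \gtLastNeg, \gtLastDk\} \times \{\bActd\backslash k,\, \bNActd\backslash k,\, \bClauseDisq\backslash k\}$, and for each trace the trajectory on all four chain classes. This yields 36 trajectories, each of which is computed mechanically by following the state-update rule of $\dfaFn$: queue beats loop on $a$ and advance on $d$; stack beats loop on $d$ and advance on $a$. The tabular form used earlier—letters of the word down the left, a column per beat in the frame, one row per time step—makes each trajectory a short computation, and the last row reads off the endpoint to be checked against the claim.

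Second, I would collate these endpoints against the three bullets. The first bullet merely distinguishes whether the landing state is $\bEnd\backslash(k+1)$ when $\pileTypeSingle_{k+1}=\qsym$ (the classes $\qsym\qsym$ and $\ssym\qsym$) or $\bChainDisq\backslash(k+2)$ otherwise. The second bullet is the most delicate, since it separates $\gtLastPos$ from $\gtLastNeg$ by their behavior on the specific prefixes $\qsym\qsym$ and $\ssym\qsym$; their initial substrings $aa$ and $da$ are engineered precisely to traverse a queue or stack in the $k$-th measure before entering the $(k{+}1)$-st, while $\gtLastDk$'s leading $daa$ always expends the $\bChainDisq\backslash k$ beat regardless of type. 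The third bullet then follows from Lemma~\ref{lemma:no-backtrack} applied to any single trajectory that reaches $\geq \bChainDisq\backslash(k+2)$ from $\bClauseDisq\backslash k$.

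The main obstacle will be bookkeeping rather than conceptual difficulty. Each of these three words is substantially longer than $\gtStart$, $\gtPos$, or $\gtDk$, and the distinguishing role of $\pileTypeSingle_k$ in the $\bNActd$ case means the analysis cannot collapse to a per-pile check as cleanly as in Lemma~\ref{lemma:lit-gadget}. I would therefore produce a compact figure with one sub-diagram per (word, starting position) pair, each overlaying the four chain classes, and verify by inspection that the tabulated endpoints match the claims without transcription error.
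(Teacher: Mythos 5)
Your proposal is correct and is essentially the paper's own proof: the paper likewise establishes this lemma by exhaustive verification of the trajectories of $\gtLastPos$, $\gtLastNeg$, and $\gtLastDk$ from each of the three starting beats, tabulated in trajectory plots (Figures~\ref{fig:endtest-actd}--\ref{fig:endtest-disq}) framing the relevant measures for both chain classes per measure. The only quibble is that your appeal to Lemma~\ref{lemma:no-backtrack} for the third bullet is unnecessary (the claim starts exactly at $\bClauseDisq \backslash k$, which your 36-case enumeration already covers directly), but this is harmless.
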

\begin{proof}
The proof is again exhaustive and can be verified
with the trajectory plots for $\gtLastPos$, $\gtLastNeg$, and $\gtLastDk$,
which are
in Figures~\ref{fig:endtest-actd}, Fig.~\ref{fig:endtest-nactd}, and~\ref{fig:endtest-disq} in Appendix~\ref{sec:more-traj}.
\end{proof}

The claims of Lemma~\ref{lemma:endlit-gadget} are graphed in Fig.~\ref{fig:endlit-arrows}.
They form a compound gate
whose first stage has the same essential structure as $\gtTest$ in Fig.~\ref{fig:test-variable}.
\begin{figure}[h!]
\centering
\begin{tikzpicture}

	\makecommand{\xOne}{4}
	\makecommand{\xTwo}{5.5}
	\makecommand{\xThree}{7}

	%\makecommand{\allThree}{$\gtLastPos, \gtLastNeg, \gtLastDk$}
	\makecommand{\allThree}{$\gtLastTest$ (any)}
	
  % Define the four corner nodes
    \node[dotnode, label=left:{$\bActd \backslash k$}] (A) at (0, 4) {};
    \node[dotnode, label=left:{$\bNActd \backslash k$}] (B) at (0, 2) {};
    \node[dotnode, label=below:{$\geq \bClauseDisq \backslash k$}] (C) at (0, 0) {};

	\node[dotnode] (AAA) at (\xOne, 4) {};  % actd
	\node[dotnode] (BBB) at (\xOne, 2) {};  % not actd
		
	\node[dotnode] (AA) at (\xTwo, 4) {};  % actd
	\node[dotnode] (BB) at (\xTwo, 0) {};  % not actd

    \node[dotnode,
	label=right:{$\bEnd \backslash (k+1)$}] (D) at (\xThree, 4) {};
    \node[dotnode,
	label=right:{$\geq \bChainDisq \backslash (k+2)$}] (E) at (\xThree, 0) {};

	\draw[->, myarrow] (A) -- node[midway, above, sloped] {
		\allThree
	} (AAA); 
	\draw[->, myarrow] (B) -- node[align=center, anchor=north west, pos=.7
	]{
	$\gtLastPos$ if $\pileTypeSingle_k = \qsym$,
	\\
	$\gtLastNeg$ if $\pileTypeSingle_k = \ssym$
	} (AAA);
	\draw[->, myarrow] (B) -- node[midway, below, align=center] {
	$\gtLastDk$,
	\\
	$\gtLastPos$ if $\pileTypeSingle_k \neq \qsym$,
	\\
	$\gtLastNeg$ if $\pileTypeSingle_k \neq \ssym$
	} (BBB);

	\draw[->, myarrow] (BBB) -- (BB);

	\draw[->, myarrow] (C) -- node[midway, above, sloped] {
		\allThree
	} (BB);
	\draw[->, myarrow] (BB) -- node[midway, above, sloped] {} (E);

	\draw[->, myarrow] (AAA) -- node[midway, above, sloped] {} (AA);
	\draw[->, myarrow] (AA) -- node[midway, above, sloped] {$x_{k+1} = \qsym$} (D);
	\draw[->, myarrow] (AA) -- node[midway, above, sloped] {$x_{k+1} \neq \qsym$} (E);
\end{tikzpicture}
\caption{Types of trajectories of $\lastLitSeg(\clause_j)$ on $\designAlign \backslash
% \arbpile^{n+2}$.}
\arbpile^*$.}
\label{fig:endlit-arrows}
\end{figure}

At last we have the tools needed for the proof of Lemma~\ref{lemma:clause-gadget}.
\begin{proof}[Proof of Lemma~\ref{lemma:clause-gadget}]
We start with trajectories from $\bChainDisq \backslash k_1$:
According to Lemma~\ref{lemma:start-clause},
after $\gtStart$ the state will be $\tilde k_0 \geq \bClauseDisq \backslash k_1$.
Then, according to Lemma~\ref{lemma:lit-gadget},
for each $i \in [n-1]$,
the application of $\litSeg_i(\clause_j)$ will transit from state $\tilde k_{i-1}$ to $\tilde k_{i}$, where
$\tilde k_i \geq \bClauseDisq \backslash (k_1 + i)$.
Finally, according to Lemma~\ref{lemma:endlit-gadget},
$\lastLitSeg_{n}(\clause_j)$ transits
from $\tilde k_{n-1} \geq \bClauseDisq \backslash (k_1 + n - 1)$
to $\tilde k_n
\geq \bChainDisq \backslash (k_1 + n + 1)$.

Next we consider trajectories from $\bStart \backslash k_1$
when $A_1 = \assignEmbed(\assignvec) \qsym$ for some $\assignvec \in SAT(\clause_j)$.
In this case
$\gtStart$ takes us to $\tilde k_0 = \bNActd \backslash k_1$.
Since ${\mathbf x} \in SAT(\clause_j)$, then in at least one place
either
$x_i = \top$ where $x_i \in \clause_j$, or
$x_i = \bot$ where $\neg x_i \in \clause_j$.
Then, respectively, either
$\assignEmbed(x_i) = \qsym$ and $\litSeg_i(\clause_j) = \gtPos$ or
$\assignEmbed(x_i) = \ssym$ and $\litSeg_i(\clause_j) = \gtNeg$.
%and .
We consider the first such $i$.
For each $i'=1 \ldots i-1$,
$\litSeg_{i'}(\clause_j)$ moves
from $\tilde k_{i'-1} = \bNActd \backslash (k_1 + i' - 1)$
to $\tilde k_{i'} = \bNActd \backslash (k_1 + i')$.
Then activation occurs:
If $i < n$, then
$\litSeg_i(\clause_j)$ moves
from $\tilde k_{i-1} = \bNActd \backslash (k_1 + i - 1)$
to $\tilde k_{i} = \bActd \backslash (k_1 + i)$.
Then for each $i' \in (i+1) \ldots n-1$,
$\litSeg_{i'}(\clause_j)$ moves
from $\tilde k_{i'-1} = \bActd \backslash (k_1 + i' - 1)$
to $\tilde k_{i'} = \bActd \backslash (k_1 + i')$.
Finally $\lastLitSeg_n(\clause_j)$ moves
from $\tilde k_{n-1} = \bActd \backslash (k_1 + n - 1)$
to $\bEnd \backslash (k_1 + n)$.
If $i = n$, then 
$\lastLitSeg_n(\clause_j)$ simply moves
from $\tilde k_{n-1} = \bNActd \backslash (k_1 + n - 1)$
to $\tilde k_{n} = \bEnd \backslash (k_1 + n)$.

Finally, we consider trajectories from
$\bStart \backslash k_1$
when $A_1$ is \emph{not} of the form $\assignEmbed(\assignvec) \qsym$
for some $\assignvec \in SAT(\clause_j)$.
Then either 
$A_1 = \assignEmbed(\assignvec) \qsym$ for ${\mathbf x} \notin SAT(\clause_j)$,
or else
$A_1 = \assignEmbed(\assignvec) \ssym$, \ie, it ends on a stack.
In the former case there is never any activation, and so
$\tilde k_{n-1} = \bNActd \backslash (k_1 + n-1)$, and
$\lastLitSeg_n(\clause_j)$ moves
to $\tilde k_n \geq \bChainDisq \backslash (k_1 + n + 1)$.
If $A_1$ ends on a stack, then regardless of whether activation occurred,
since $\tilde k_{n-1} \geq \bActd \backslash (k_1 + n-1)$
then $\lastLitSeg_n(\clause_j)$ moves
to $\tilde k_n \geq \bChainDisq \backslash (k_1 + n + 1)$.
\end{proof}

%\clearpage

\makecommand{\theInstances}{\designAlign \backslash \arbpile^{m_2} \backslash \qsym^{m_3}}
\makecommand{\theProbChain}{\chainScript\mhyphen\theInstances}

\section{Sort on $\theInstances$ is \NPHard/}
\label{sec:Q1}

Now we complete our first reduction of SAT to an accepting-chain problem,
specifically $\probAlias{1}'$,
and
thereby to the associated sort feasibility problem $\probAlias{1}$.
As shown in Table~\ref{table:all-problems},
their permissible sets are of the form $\theInstances$.

We begin by introducing a problem-specific embedding of formulae
as change profiles,
including a new trivial low-level gadget:

\begin{defn}
Given a CNF
formula $\clause$ of $m \geq 1$ clauses, we define
$$\formSeg_{\Rom{1}} (\clause)
\dfneq
\mainClause(\clause_1) \,
\gtNext \, \mainClause(\clause_2)
\ldots
\gtNext \, \mainClause(\clause_m)
,
$$
where
$\gtNext \dfneq a$.
\end{defn}

\begin{lemma}
\label{lemma:next-clause}
For every chain
$\designAlign \backslash x_0 \ldots x_n 
\in \designAlign \backslash \arbpile^{n+1}$,
$n \geq 1$,
for each measure $k \in [n]-1$,
if $x_k = \qsym$, then
$\gtNext \in \bEnd \backslash k \to \bStart \backslash (k+1)$.
\end{lemma}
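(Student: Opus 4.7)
The plan is to verify the lemma by direct computation using the algebraic tools from Section~\ref{sec:algebra} and the dynamics of chain automata from Section~\ref{sec:chains}. First, I would identify the type of the state $\bEnd \backslash k = 5 \backslash k$ within the given chain $\designAlign \backslash x_0 \ldots x_n$. By the hypothesis $x_k = \qsym$ together with Lemma~\ref{label:singleton-compose-lemma} (item 1, $\designAlign \backslash \qsym = \designAlign$), the $k$-th measure of the chain is precisely the word $\designAlign = \qsym\qsym\qsym\qsym\ssym\ssym$. Hence the beat at position $5$ of measure $k$ is of type $\ssym$.

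Next I would apply the dynamics $\dynamicsFn_{\dfaFn(\cdot)}$ from the definition of the chain automaton. Since the state at $5\backslash k$ is a stack and the input symbol is $\gtNext = a$ (an ascent), the update rule increments the state by one, yielding position $5 \backslash k + 1$. Translating via the coordinate identity $5 \backslash k + 1 = 6 \backslash k = 0 \backslash (k+1)$, this is exactly $\bStart \backslash (k+1)$.

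Finally, I would note that the restriction $k \in [n] - 1$ ensures $k + 1 \leq n$, so measure $k+1$ lies within the $n+1$ measures of the chain rather than at the terminal state; the transition is therefore legitimate. The lemma follows.

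There is no real obstacle here: the argument is a one-step application of Lemma~\ref{label:singleton-compose-lemma} followed by a single invocation of the stack transition rule. The only minor subtlety worth flagging explicitly is the coordinate arithmetic $5 \backslash k + 1 = 0 \backslash (k+1)$, which uses the zero-based convention with $\|\sortScenario_1\| = |\designAlign| = 6$ introduced at the end of Section~\ref{sec:proof-strategy}. If desired, a tiny trajectory diagram in the spirit of Figs.~\ref{fig:start-clause-0}--\ref{fig:dk-traj} could make the single-symbol computation entirely visual, but is probably unnecessary given how short the argument is.
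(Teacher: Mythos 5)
Your proof is correct and matches the paper's approach, which simply verifies the single-symbol transition of $\gtNext = a$ by inspection of a trajectory plot; you carry out the same one-step verification symbolically (the beat $5\backslash k$ is a stack because $\designAlign\backslash\qsym=\designAlign$ ends in $\ssym$, and a stack advances on $a$), which is if anything more explicit than the paper's figure-based argument. The only cosmetic omission is that identifying measure $k$ with $\designAlign\backslash x_k$ implicitly uses Lemma~\ref{lemma:concat-lemma} alongside Lemma~\ref{label:singleton-compose-lemma}, but this is the same decomposition the paper uses throughout and is not a gap.
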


\begin{proof}
The proof is again exhaustive and can be verified (easily)
with the trajectory plots for $\gtNext = a$ in Fig.~\ref{fig:next-proof}.
\end{proof}

\begin{figure}[h!]
	\centering
	\includegraphics[width=.75\linewidth]{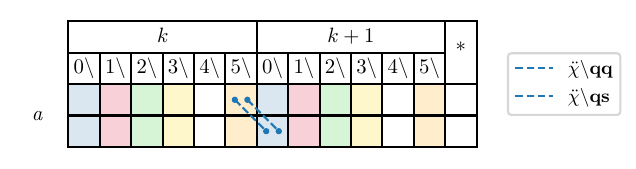}
\caption{
Trajectories of $\gtNext$
on chain automata
of the form $\designAlign \backslash x_0 \ldots x_n$
(only $x_k = \qsym$)
starting from $\bEnd \backslash k$.
}
\label{fig:next-proof}
\end{figure}

The figure demonstrates the function of the word $\gtNext$,
which is to move from the end of one measure to the start of the next one.
It provides us a key adaptation of Lemma~\ref{lemma:clause-gadget}.

\begin{lemma}
\label{lemma:clause-plus-next}
Given a CNF clause $\clause_j$ on $n \geq 1$ variables,
the following statements hold
for every
$k_1 \geq 0$, $k_2 \geq 1$,
and for every chain
$\designAlign \backslash A_0 A_1 A_2 \in \designAlign \backslash \arbpile^{k_1} \arbpile^{n+1} \arbpile^{k_2}$:
\begin{itemize}

\item
if $A_1 = \assignEmbed(\assignvec) \qsym$ for some $\assignvec \in SAT(\clause_j)$, then
$\mainClause(\clause_j) \, \gtNext
\in \bStart \backslash k_1 \to \bStart \backslash (k_1 + n + 1)$%
;

\item
otherwise, it is in
$\bStart \backslash k_1 \to \, \geq \bChainDisq \backslash (k_1 + n + 1)$;

\item
$\mainClause(\clause_j) \, \gtNext
\in \bChainDisq \backslash k_1 \to \, \geq \bChainDisq \backslash (k_1 + n + 1)$.
%, unconditionally.

\end{itemize}
\end{lemma}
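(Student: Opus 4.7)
The plan is to prove each of the three cases of Lemma~\ref{lemma:clause-plus-next} by composing Lemma~\ref{lemma:clause-gadget} (which characterizes the trajectory of $\mainClause(\clause_j)$) with a short argument about the single additional symbol $\gtNext = a$.

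For the \emph{satisfying} case, where $A_1 = \assignEmbed(\assignvec)\,\qsym$ for some $\assignvec \in \SAT(\clause_j)$, claim~1 of Lemma~\ref{lemma:clause-gadget} places the trajectory starting at $\bStart \backslash k_1$ at exactly $\bEnd \backslash (k_1 + n)$ after consuming $\mainClause(\clause_j)$. The key observation is that the last symbol of $A_1$ is $\qsym$, and this symbol occupies precisely measure $k_1 + n$ of the chain $\designAlign \backslash A_0 A_1 A_2$. Hence Lemma~\ref{lemma:next-clause} applies locally at $k = k_1 + n$ (the subsequent measure $k_1 + n + 1$ is provided by $A_2$, since $k_2 \geq 1$), and $\gtNext$ transits to $\bStart \backslash (k_1 + n + 1)$, as claimed.

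For the remaining two cases --- starting from $\bStart \backslash k_1$ when $A_1$ is not of the satisfying form, or starting from $\bChainDisq \backslash k_1$ unconditionally --- claims~2 and~3 of Lemma~\ref{lemma:clause-gadget} place the trajectory at some state $\tilde k \geq \bChainDisq \backslash (k_1 + n + 1)$ after $\mainClause(\clause_j)$. Because any transition of a chain automaton advances the state by at most one and never decreases it, the appended symbol $\gtNext = a$ cannot move the trajectory below $\tilde k$, and so the final state remains $\geq \bChainDisq \backslash (k_1 + n + 1)$.

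The argument is essentially a routine splicing of previously established lemmas, so I do not expect any significant obstacle. The only care required is a small bookkeeping observation --- that the last letter of $A_1$ (namely $\qsym$) is the type assigned to measure $k_1 + n$ in the composite chain, which is exactly the hypothesis Lemma~\ref{lemma:next-clause} needs --- together with an implicit appeal to chain monotonicity (a special case of Lemma~\ref{lemma:no-backtrack}) to make the $\geq$-style conclusions stable under one more symbol.
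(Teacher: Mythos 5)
Your proposal is correct and matches the paper's approach: the paper simply states that the lemma follows by a "straightforward combination" of Lemmas~\ref{lemma:clause-gadget} and~\ref{lemma:next-clause}, and you have filled in exactly the right details (the last letter of $A_1$ being the $\qsym$ governing measure $k_1+n$, and monotonicity of chain trajectories handling the disqualified cases). One tiny note: the fact that a single extra symbol cannot decrease the state follows directly from the definition of the chain dynamics $\dynamicsFn_{\dfaFn(\pileType)}$ (each step adds $0$ or $1$), rather than from Lemma~\ref{lemma:no-backtrack}, which concerns varying the starting position.
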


\begin{proof}
The proof is a fairly straightforward combination of Lemmas~\ref{lemma:clause-gadget} and~\ref{lemma:next-clause}.
\end{proof}
We omit the graph of the lemma's claims, which would be nearly identical to that of Fig.~\ref{fig:traj-clause},
only with $\bStart \backslash (k_1 + n + 1)$ as the top-right corner.
(Which makes it chainable.)

Now we may consider trajectories of the formula gadget as a whole.
\begin{lemma}
\label{lemma:formula-gadget}
Given a formula $\clause$
of $m$ CNF clauses $\{ \clause_j \}_{j=1}^m$ on $n \geq 1$ variables,
the following statements hold
on all
$\designAlign \backslash A_1 \ldots A_{m} x \in \designAlign \backslash \left( \arbpile^{n+1} \right)^{m} \arbpile$%
:
\begin{enumerate}

\item
if for each $j \in [m]$,
$A_{j} = \assignEmbed(\assignvec_j) \qsym$
for some $\assignvec_j \in SAT(\clause_j)$,
then
$\formSeg_\Rom{1}(\clause)
\in \bStart \backslash 0
\to
\bEnd \backslash (m(n+1) - 1)
$%
;

\item
otherwise, 
$\formSeg_\Rom{1}(\clause)
\in \bStart \backslash 0
\to \, \geq \bChainDisq \backslash m(n+1)$.

%\item
%$\bChainDisq \backslash 0 \backslash 0 
%\to \geq \bChainDisq \backslash 0 \backslash m$
%unconditionally.\todo{do we care though?}
\end{enumerate}
\end{lemma}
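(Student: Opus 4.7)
The plan is to prove the lemma by chaining: repeatedly apply Lemma~\ref{lemma:clause-plus-next} to the first $m-1$ blocks of $\formSeg_\Rom{1}(\clause)$ and then cap off with Lemma~\ref{lemma:clause-gadget} on the trailing $\mainClause(\clause_m)$. First I would write
\[
\formSeg_\Rom{1}(\clause)
=
\bigl(\mainClause(\clause_1)\,\gtNext\bigr)
\bigl(\mainClause(\clause_2)\,\gtNext\bigr)
\cdots
\bigl(\mainClause(\clause_{m-1})\,\gtNext\bigr)\,
\mainClause(\clause_m).
\]
For each $j \in [m-1]$, the segment $A_j$ is precisely of the form required by Lemma~\ref{lemma:clause-plus-next} (with $k_1 = (j-1)(n+1)$ and the remaining $A_{j+1}\ldots A_{m}\,x$ providing the $k_2 \geq 1$ suffix); likewise Lemma~\ref{lemma:clause-gadget} applies to the final block $A_m$ using the trailing $x$ as its suffix.

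Next I would induct on $j$ to show that after consuming the first $j$ blocks $\mainClause(\clause_1)\,\gtNext\,\cdots\,\mainClause(\clause_j)\,\gtNext$ the state $\tilde k_j$ satisfies either $\tilde k_j = \bStart \backslash j(n+1)$ (in the event that $A_i = \assignEmbed(\assignvec_i)\,\qsym$ for some $\assignvec_i \in \SAT(\clause_i)$ for every $i \leq j$) or else $\tilde k_j \geq \bChainDisq \backslash j(n+1)$. The base case $j = 0$ is immediate from the hypothesis $\bStart \backslash 0$. For the step, apply Lemma~\ref{lemma:clause-plus-next}: from $\bStart \backslash (j-1)(n+1)$ with a satisfying $A_j$, the first claim delivers $\bStart \backslash j(n+1)$; from $\bStart \backslash (j-1)(n+1)$ with a non-satisfying $A_j$, the second claim delivers $\geq \bChainDisq \backslash j(n+1)$; and if the inductive state is already $\geq \bChainDisq \backslash (j-1)(n+1)$, then the third claim together with the non-backtracking Lemma~\ref{lemma:no-backtrack} keep us at $\geq \bChainDisq \backslash j(n+1)$.

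Finally, for the trailing $\mainClause(\clause_m)$, I would invoke Lemma~\ref{lemma:clause-gadget} directly: if $\tilde k_{m-1} = \bStart \backslash (m-1)(n+1)$ and $A_m$ encodes a satisfying assignment, the first claim gives exactly $\bEnd \backslash (m(n+1)-1)$; if $A_m$ does not, the second claim gives $\geq \bChainDisq \backslash m(n+1)$; and if $\tilde k_{m-1} \geq \bChainDisq \backslash (m-1)(n+1)$ already, Lemma~\ref{lemma:no-backtrack} and the third claim deliver $\geq \bChainDisq \backslash m(n+1)$. Aggregating cases proves the two claims of the lemma. The only real subtlety, and I expect the main bookkeeping hazard, is keeping the measure-coordinate arithmetic honest across iterations and verifying that the hypotheses $k_1 \geq 0$, $k_2 \geq 1$ of Lemma~\ref{lemma:clause-plus-next} are met at every step, including the last one where the trailing $x$ plays the role of suffix; beyond that the argument is a straightforward composition of the gadget-level lemmas already established.
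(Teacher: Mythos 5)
Your proposal is correct and follows essentially the same route as the paper's proof: chain Lemma~\ref{lemma:clause-plus-next} across the first $m-1$ blocks, finish with Lemma~\ref{lemma:clause-gadget} on the final $\mainClause(\clause_m)$ using the trailing measure $x$ as the $k_2 \geq 1$ suffix, and propagate disqualification via the third claims and Lemma~\ref{lemma:no-backtrack}. The only cosmetic difference is that you phrase the argument as a single induction tracking both cases, whereas the paper splits into the all-satisfying case and the first-failing-clause case; the content is identical.
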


\begin{proof}
We start with the first case.
Due to Lemma~\ref{lemma:clause-plus-next},
for each $j \in [m-1]$,
the segment $\clauseSeg(\clause_j) \, \gtNext$
moves
from state $\tilde k_{j-1} = \bStart \backslash (j-1)(n+1)$
to state $\tilde k_j = \bStart \backslash j(n+1)$.
Then,
due to Lemma~\ref{lemma:clause-gadget},
$\clauseSeg(\clause_m)$
moves
from state $\tilde k_{m-1} = \bStart \backslash (m-1)(n+1)$
to state $\tilde k_m = \bEnd \backslash (m(n+1)-1)$.

In the second case,
there is at least one clause $\clause_j$ for which
$A_{j} \neq \assignEmbed(\assignvec_j) \qsym$ for any $\assignvec_j \in SAT(\clause_j)$.
Then $\clauseSeg(\clause_j)$ moves
from state $\tilde k_{j-1} \geq \bStart \backslash (j-1)(n+1)$
to $\tilde k_j \geq \bChainDisq \backslash j(n+1)$.
Then for each $j' \in (j+1) \ldots m$ (none if $j = m$),
%the segment 
$\gtNext \, \clauseSeg(\clause_j)$ moves from 
state $\tilde k_{j'-1} \geq \bChainDisq \backslash (j'-1)(n+1)$
to state
$\tilde k_{j'} \geq \bChainDisq \backslash j' (n+1)$.
\end{proof}

\begin{lemma}
\label{lemma:red1-language}
Given a formula $\clause$ of $m\geq 1$ CNF clauses on $n \geq1$ variables,
a chain $\pi \in \designAlign \backslash \arbpile^{n+1} \backslash \qsym^m$ accepts
the change profile $\formSeg_{\Rom{1}}(\clause)$
if and only if
$\pi = \designAlign \backslash \assignEmbed(\assignvec) \qsym \backslash \qsym^m$ for some $\assignvec \in SAT(\clause)$.
\end{lemma}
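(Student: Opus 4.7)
My plan is to collapse the three-round chain to a form directly covered by Lemma~\ref{lemma:formula-gadget}, and then match the satisfying-assignment condition to that lemma's success criterion.

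First, fix $A \in \arbpile^{n+1}$ so that $\pi = \designAlign \backslash A \backslash \qsym^m$. Using Lemma~\ref{label:singleton-compose-lemma} (which gives $B \backslash \qsym = B$), Lemma~\ref{lemma:concat-lemma}, and the associativity of $\backslash$ noted in Section~\ref{sec:algebra}, I would simplify $A \backslash \qsym^m = (A \backslash \qsym)(A \backslash \qsym)\cdots(A \backslash \qsym) = A^m$ and hence $\pi = \designAlign \backslash A^m$. That is, every such $\pi$ is a $\designAlign$-aligned chain whose second-round type assignment is the $m$-fold concatenation of $A$, with $m$ phrases each of $n+1$ measures.

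Next, I would apply Lemma~\ref{lemma:formula-gadget} to this chain, taking $A_1 = A_2 = \cdots = A_m = A$. The success branch places the trajectory's final state at $\bEnd \backslash (m(n+1)-1)$, which is pile position $6m(n+1) - 1$ on $\pi$—the last accepting state. The failure branch places it at $\geq \bChainDisq \backslash m(n+1)$, a position beyond the last pile of $\pi$; by the monotonicity of chain trajectories and the absorbing terminal, such a trajectory must land in the non-accepting terminal state of $\pi$. One minor technicality is that Lemma~\ref{lemma:formula-gadget} is stated for an extended chain $\designAlign \backslash A_1 \ldots A_m\, x$ with a trailing measure, whereas $\pi$ has none; I would observe that, on $\pi$, trajectories coincide with those on the extended chain until the latter first reaches position $\geq 6m(n+1)$, at which point $\pi$'s trajectory is absorbed at the terminal. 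So the acceptance dichotomy transfers intact to $\pi$. I expect this truncation bookkeeping to be the main (though still minor) obstacle.

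Finally, I would translate the success criterion. Lemma~\ref{lemma:formula-gadget} requires $A_j = \assignEmbed(\assignvec_j)\,\qsym$ with $\assignvec_j \in \SAT(\clause_j)$ for every $j \in [m]$. Since every $A_j$ equals $A$, injectivity of $\assignEmbed$ forces all $\assignvec_j$ to coincide with a single $\assignvec$ that must simultaneously satisfy every clause, i.e., $\assignvec \in \bigcap_{j=1}^m \SAT(\clause_j) = \SAT(\clause)$. Conversely, any such $\assignvec$ witnesses $A = \assignEmbed(\assignvec)\,\qsym$ and triggers the success branch. Thus $\pi$ accepts $\formSeg_{\Rom{1}}(\clause)$ if and only if $A = \assignEmbed(\assignvec)\,\qsym$ for some $\assignvec \in \SAT(\clause)$, which is exactly the claimed characterization of $\pi$.
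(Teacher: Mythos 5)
Your proposal is correct and follows essentially the same route as the paper's proof: collapse $\pi$ to $\designAlign \backslash A^m$ via the composition algebra, invoke Lemma~\ref{lemma:formula-gadget} with all $A_j = A$, observe that the disqualifying endpoint $\bChainDisq \backslash m(n+1)$ lies past the truncated chain's accepting states so the failure branch lands in the terminal state, and conclude that acceptance is equivalent to $A = \assignEmbed(\assignvec)\,\qsym$ for a single $\assignvec \in \bigcap_j \SAT(\clause_j) = \SAT(\clause)$. Your explicit handling of the truncation bookkeeping is slightly more careful than the paper's one-line remark, but the argument is the same.
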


\begin{varscope}
\makecommand{\word}{A}
\begin{proof}
By definition every chain $\pi \in \designAlign \backslash \arbpile^{n+1} \backslash \qsym^m$ can be expressed as
$\pi = \designAlign \backslash A \backslash \qsym^m$ for some $A \in \arbpile^{n+1}$.
Then by Lemma~\ref{lemma:concat-lemma},
$\pi$ is equal to the concatenation
of $m$ copies of $\designAlign \backslash A \backslash \qsym = \designAlign \backslash A$, \ie,
$\pi
= \left(\designAlign \backslash A\right) \ldots \left(\designAlign \backslash A\right)
= \designAlign \backslash A \ldots A$.
Being one measure shorter than the chains 
of Lemma~\ref{lemma:formula-gadget},
$\pi$ cannot reach 
the state $\bChainDisq \backslash m(n+1)$
of the lemma's second case;
it reaches the terminal state instead.
Therefore
$\pi$ accepts the formula gadget if and only if
the condition of the first case holds.
That is exactly the condition of the present lemma:
$\word = \assignEmbed(\assignvec) \qsym$ for some $\assignvec$
such that $\assignvec \in SAT(\clause_j)$ for all $j$,
\ie, $\assignvec \in SAT(\clause)$.
\end{proof}
\end{varscope}

\makecommand{\theProblem}{\questionSet_{\Rom{1}}}
\makecommand{\theFormula}{\formSeg_{\Rom{1}}(\clause)}

\begin{corollary}
\label{cor:reduction-1}
For every instance $\questionCons_\SAT(\clause) \in \SAT$,
with formula $\clause$ of $m$ clauses on $n$ variables,
the instance
$\questionCons_\chainScript\left(
\theFormula,
\designAlign \backslash \arbpile^{n+1} \backslash \qsym^m
\right)$
is in $\theProblem'$
and has the same answer.
\end{corollary}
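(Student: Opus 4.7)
The plan is to derive the corollary as a near-immediate consequence of Lemma~\ref{lemma:red1-language}, after checking that the constructed instance is syntactically of the right form and that the reduction is polynomial-time computable.

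First, I would verify membership in $\probAlias{1}'$. Since $n\geq 1$ and $m\geq 1$, we have $m_2\dfneq n+1\geq 1$ and $m_3\dfneq m\geq 1$ satisfying the domain constraints of the parametrizing function in row~$\probAlias{1}$ of Table~\ref{table:all-problems}. The change profile $\formSeg_{\Rom{1}}(\clause)$ lies in $\profileAlphabet^*$ by construction from the low-level gadgets of Def.~\ref{def:clause-word}, so the instance $\questionCons_\chainScript\left(\formSeg_{\Rom{1}}(\clause),\,\designAlign\backslash\arbpile^{n+1}\backslash\qsym^{m}\right)$ is a legitimate member of $\chainScript\text{-}f$ for the parametrizing family of $\probAlias{1}'$.

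Next I would argue the answers coincide. The instance of $\probAlias{1}'$ is affirmative precisely when some chain $\pi\in\designAlign\backslash\arbpile^{n+1}\backslash\qsym^{m}$ accepts $\formSeg_{\Rom{1}}(\clause)$. By Lemma~\ref{lemma:red1-language} such a $\pi$ exists if and only if $\pi=\designAlign\backslash\assignEmbed(\assignvec)\,\qsym\backslash\qsym^{m}$ for some $\assignvec\in\SAT(\clause)$. Existence of such a $\pi$ is therefore equivalent to $\SAT(\clause)\neq\varnothing$, which is exactly the condition defining the affirmative answer to $\questionCons_{\SAT}(\clause)$.

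Finally, although the corollary's statement is about answer-preservation only, for use in a Karp reduction from $\SAT$ I would remark that the map $\clause\mapsto\bigl(\formSeg_{\Rom{1}}(\clause),\,\designAlign\backslash\arbpile^{n+1}\backslash\qsym^{m}\bigr)$ is polynomial-time computable: each clause gadget $\mainClause(\clause_j)$ has length bounded by a constant times $n$, there are $m$ of them separated by single-symbol $\gtNext$ gadgets, and the description of the permissible set of chains requires only the two integers $n+1$ and $m$. The main obstacle is essentially already absorbed by Lemma~\ref{lemma:red1-language}; all that remains here is to unpack definitions and confirm that the factored notation on the one hand and the definition of $\chainScript\text{-}f$ on the other align as expected.
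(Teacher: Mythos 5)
Your proposal is correct and follows essentially the same route as the paper: both verify membership in $\probAlias{1}'$ by inspection of Table~\ref{table:all-problems} and then invoke Lemma~\ref{lemma:red1-language}, with the key (implicit in your write-up, explicit in the paper's) observation that the permissible set $\designAlign \backslash \arbpile^{n+1} \backslash \qsym^m$ contains the embedding $\designAlign \backslash \assignEmbed(\assignvec)\,\qsym \backslash \qsym^m$ of every assignment $\assignvec$, so an accepting chain exists iff $\clause$ is satisfiable. Your closing remark on polynomial-time computability is not needed for the corollary itself; the paper defers that point to the subsequent lemma.
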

\begin{proof}
We can see by inspection that
the latter question is in $\theProblem'$ (Table~\ref{table:all-problems}).
We obtain the corollary from Lemma~\ref{lemma:red1-language} because
the permissible schedules 
$\designAlign \backslash \arbpile^{n+1} \backslash \qsym^m$
%by inspection,
contains the embedding
$\designAlign \backslash \assignEmbed(\assignvec) \qsym \backslash \qsym^m$
of every $\assignvec \in \{\top, \bot\}^n$.
Therefore, an accepting chain exists if and only if $\clause$ is satisfiable.
\end{proof}

\begin{lemma}
%Corollary~\ref{cor:reduction-1} allows us to prove that 
$\theProblem$ and $\theProblem'$ are \NPHard/.
\end{lemma}
\begin{proof}
Corollary~\ref{cor:reduction-1}
leaves only to prove that
its reduction is polynomial-time:
The output instance data includes the change profile $\theFormula$
and parameters $m$ and $n$ of the permissible set.
(The latter are negligible in transcription length.)
The change profile is composed of $m$ instances of $\gtClause$,
each of length $O(n)$, and so
can be transcribed in $O(mn)$ time.
The transcription length $|\clause|$ of a CNF formula $\clause$ 
of $m$ clauses on $n$ variables
is at least proportional to $m$ or $n$, whichever is larger,
so $O(mn) \subseteq O(|\clause|^2)$.
\end{proof}

%\clearpage
\section{Sort on $\designAlign \backslash \arbpile^{m_2} \backslash \arbpile^{m_3}$ is \NPHard/}

\label{sec:Q2}

Our first hardness proof, just completed,
depended critically on restricting the last round of shuffle
to be on queues only.
It allowed our proofs to exploit the fact that our shuffle was on the concatenation of $m$ copies of 
$\designAlign \backslash A \backslash \qsym = \designAlign \backslash A$
for a particular $A \in \arbpile^{n+1}$.
In this section we present a reduction that removes this constraint,
allowing the types of piles in the third last round of shuffle to be chosen arbitrarily.

Our next reduction requires a new gadget, which we present below.
\makecommand{\clauseType}[1]{\clauseSeg_\Rom{#1}}
\begin{defn}
\label{def:formula2}
%\makecommand{\theGadget}[1]{\gtQ \, \clauseSeg(#1)}
\makecommand{\theGadget}[1]{\clauseType{2}(#1)}
Let
\[
\formSeg_{\Rom{2}}(\clause) \dfneq 
\theGadget{\clause_1}
\,
\gtNext
\,
\theGadget{\clause_2}
\ldots
\gtNext
\,
\theGadget{\clause_m}
,
\]
where

$\clauseSeg_\Rom{2}(\clause_j) = \gtQ \, \clauseSeg(\clause_j)$,
with
$\gtQ \dfneq dadaaaadaaaaddddaaaa$.

\end{defn}
The word $\gtQ$ has the function
of forcing
its measure to be $\designAlign \backslash \qsym$
in accepting chains,
rather than $\designAlign \backslash \ssym$:
\begin{lemma}%[$\gtQ$]
\label{lemma:traj-force-q}

\makecommand{\pt}{\pileTypeSingle}

The following statements hold 
for every chain $
\designAlign \backslash \pt_0 \ldots \pt_{n} \in 
\designAlign \backslash \arbpile^{n+1}$,
$n\geq 1$,
for each measure $k \in [n]-1$:
\begin{itemize}
\item
if $x_k = \qsym$, then
$\gtQ \in \bStart \backslash k \to \bStart \backslash (k+1)$%
;

\item otherwise,
$\gtQ \in \bStart \backslash k \to \geq \bChainDisq \backslash (k+1)$;

\item
$\gtQ \in \bChainDisq \backslash k \to \, \geq \bChainDisq \backslash (k+1)$, unconditionally.
\end{itemize}
\end{lemma}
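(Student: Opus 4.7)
The plan is to prove Lemma~\ref{lemma:traj-force-q} by exhaustive trajectory simulation, following the template of Lemmas~\ref{lemma:start-clause}, \ref{lemma:lit-gadget}, and~\ref{lemma:endlit-gadget}. By the algebra of Section~\ref{sec:algebra}, every measure of a chain $\pi \in \designAlign \backslash \arbpile^*$ is one of exactly two types: $\designAlign \backslash \qsym = \qsym\qsym\qsym\qsym\ssym\ssym$ (a ``$\qsym$-measure'') or $\designAlign \backslash \ssym = \qsym\qsym\ssym\ssym\ssym\ssym$ (an ``$\ssym$-measure''). Both types begin with the common prefix $\qsym\qsym$, and measure $k+1$ also begins with this same prefix regardless of its type, so the trajectory of $\gtQ$ over the span of interest is determined by just two binary choices: the starting position ($\bStart \backslash k$ versus $\bChainDisq \backslash k$) and the type of measure~$k$. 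Only four cases thus require verification.

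I would organize the verification around the natural decomposition $\gtQ = (da)(daaaa)(daaaa)(dddd)(aaaa)$, which is calibrated to a $\qsym$-measure: the first three chunks each advance one beat through the four queues at beats $0$--$3$; the $dddd$ chunk advances once into the stack at beat~$4$ and then self-loops there; and the final $aaaa$ traverses the two stacks into $\bStart \backslash (k+1)$, looping on the trailing ascents. The trajectory from $\bStart \backslash k$ therefore lands exactly at $\bStart \backslash (k+1)$, yielding claim~(1). On an $\ssym$-measure the stacks at beats~$2$--$5$ absorb $d$'s as self-loops but advance on each $a$, so the early ascents of $\gtQ$ overshoot into measure~$k+1$, where a subsequent descent then lands at $\bChainDisq \backslash (k+1)$, yielding claim~(2). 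Claim~(3), starting from $\bChainDisq \backslash k$, follows from an analogous trace showing that the trajectory reaches $\bChainDisq \backslash (k+1)$ well before $\gtQ$ is exhausted; the non-backtracking property of chain automata (the monotonicity invoked in the proof of Lemma~\ref{lemma:no-backtrack}) then ensures that the final position meets the claim regardless of what the remaining symbols do.

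The anticipated obstacle is purely presentational: the verification, though mechanical, is tedious in prose and is clearest when given as trajectory tableaux analogous to Fig.~\ref{fig:start-clause} or Fig.~\ref{fig:pos-neg-traj}, framing measures $k$ and $k+1$ for each combination of starting position and measure type. I would produce such diagrams and reduce the written proof body to a pointer noting that the four depicted endpoints match exactly those claimed by the lemma.
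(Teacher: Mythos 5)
Your proposal is correct and matches the paper's approach: the paper likewise proves this lemma by exhaustive verification via trajectory plots (Fig.~\ref{fig:force-q-traj}), framing measure $k$ for the two measure types $\designAlign\backslash\qsym$ and $\designAlign\backslash\ssym$ and the two starting beats. Your case reduction, the chunking $(da)(daaaa)(daaaa)(dddd)(aaaa)$, and the traced endpoints all check out.
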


\begin{proof}
The proof is again exhaustive and can be verified
with the trajectory plots for $\gtQ$
in Fig.~\ref{fig:force-q-traj}.
\end{proof}
\begin{figure}[h!]
	\makecommand{\theFrac}{.49}
	\centering
	\begin{subfigure}[t]{\theFrac\linewidth}
		\includegraphics[width=\linewidth]{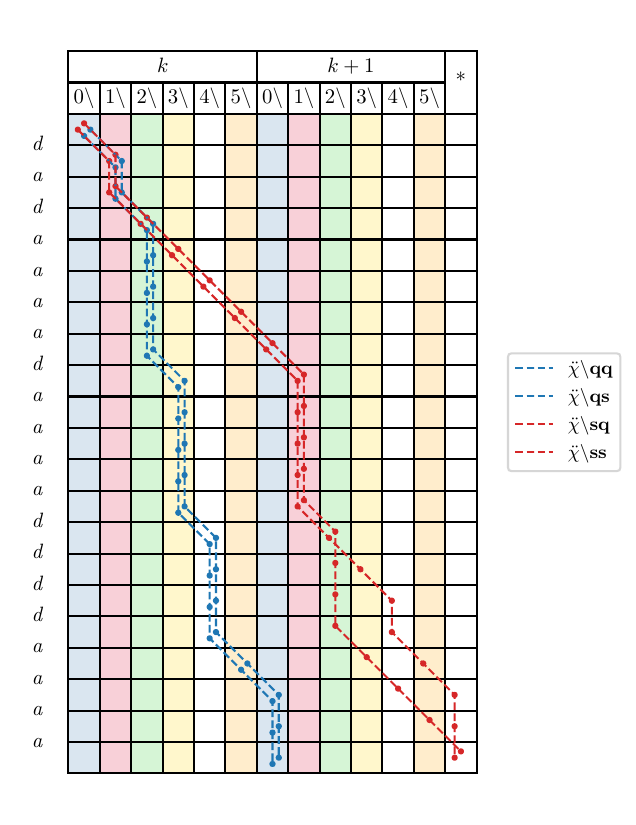}
		\caption{$\gtQ$ starting from $\bStart \backslash k$}
	\end{subfigure}
	\begin{subfigure}[t]{\theFrac\linewidth}
		\includegraphics[width=\linewidth]{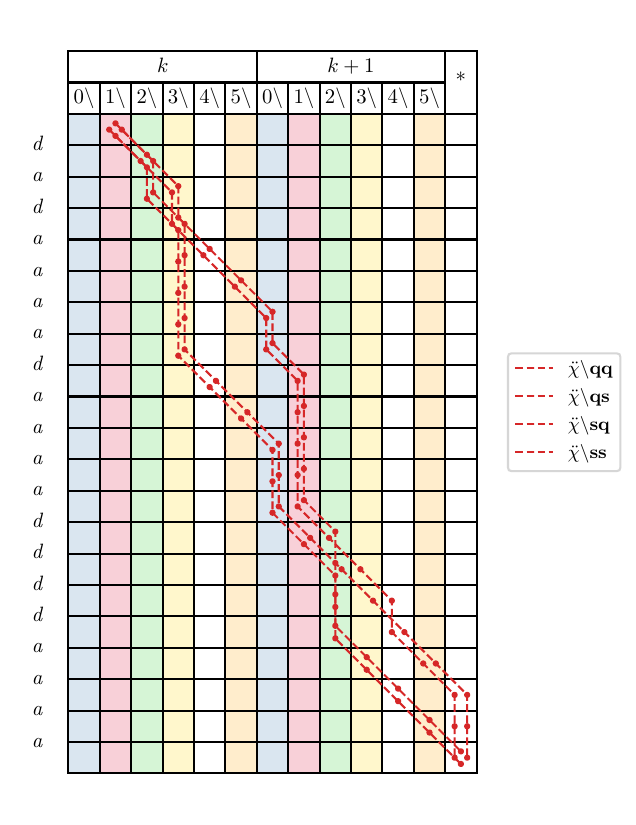}
		\caption{$\gtQ$ starting from $\bChainDisq \backslash k$}
	\end{subfigure}
\caption{
Trajectories of $\gtQ$
from two starting positions
on measure $k$
of the chain automata in $\designAlign \backslash 
%\arbpile^{n+1}$.
\arbpile^*$.
}
\label{fig:force-q-traj}
\end{figure}

\makecommand{\theFormula}{\formSeg_{\Rom{2}}(\clause)}
\makecommand{\theProblem}{\questionSet_\Rom{2}}

Lemma~\ref{lemma:traj-force-q} provides for a minor adaptation of Lemma~\ref{lemma:formula-gadget}
for $\formSeg_{\Rom{2}}$.
\begin{lemma}
\label{lemma:formula2}
Given a formula $\clause$
of $m$ CNF clauses $\{ \clause_j \}_{j=1}^m$
on $n \geq 1$ variables,
the following statements hold
for every chain
$\designAlign \backslash A_1 \ldots A_{m} x \in \designAlign \backslash \left( \arbpile^{n+2} \right)^{m} \arbpile$:
\begin{enumerate}
\item
if for each $j \in [m]$,
$A_j = \qsym \assignEmbed(\assignvec_j) \qsym$
for some $\assignvec_j \in SAT(\clause_j)$,
then
$\theFormula \in \bStart \backslash 0 \to \bEnd \backslash (m(n+2) - 1)$;

\item
otherwise,
$\theFormula \in \bStart \backslash 0 \to \, \geq \bChainDisq \backslash m(n+2)$.

\end{enumerate}
\end{lemma}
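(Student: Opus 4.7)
The plan is to adapt the proof of Lemma~\ref{lemma:formula-gadget} in a straightforward way, replacing each application of Lemma~\ref{lemma:clause-plus-next} with a composition of Lemma~\ref{lemma:traj-force-q} (for the new $\gtQ$ prefix) followed by the same clause argument as before. The key observation is that each $A_j$ now has length $n+2$: one measure consumed by $\gtQ$ and $n+1$ measures consumed by $\clauseSeg(\clause_j)$, matching the $\arbpile^{n+1}$ window of Lemma~\ref{lemma:clause-gadget}.

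For the first case, I would proceed inductively across $j = 1, \ldots, m$. At the start of clause $j$, the state is $\bStart \backslash (j-1)(n+2)$. Since $A_j = \qsym \, \assignEmbed(\assignvec_j) \, \qsym$ has $\qsym$ as its first measure, Lemma~\ref{lemma:traj-force-q} sends $\gtQ$ to $\bStart \backslash ((j-1)(n+2)+1)$. The remaining $n+1$ measures of $A_j$ then form $\assignEmbed(\assignvec_j) \, \qsym$, and because $\assignvec_j \in SAT(\clause_j)$, Lemma~\ref{lemma:clause-gadget} carries $\clauseSeg(\clause_j)$ to $\bEnd \backslash (j(n+2)-1)$. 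For $j < m$, Lemma~\ref{lemma:next-clause} then advances $\gtNext$ to $\bStart \backslash j(n+2)$, restoring the induction hypothesis. For $j = m$ the final endpoint $\bEnd \backslash (m(n+2)-1)$ is already the claimed target.

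For the second case, let $j^\star$ be the smallest $j$ such that $A_j$ is not of the prescribed form. By the argument above, the trajectory reaches $\bStart \backslash (j^\star-1)(n+2)$ after the first $j^\star - 1$ clauses. Now either (a) the first measure of $A_{j^\star}$ is $\ssym$, in which case Lemma~\ref{lemma:traj-force-q} (second bullet) sends $\gtQ$ to $\geq \bChainDisq \backslash ((j^\star-1)(n+2)+1)$, and the remainder of the clause then propagates via the third bullet of Lemma~\ref{lemma:clause-gadget} to $\geq \bChainDisq \backslash j^\star(n+2)$; or (b) the first measure is $\qsym$ but the remaining $n+1$ measures do not encode a satisfying assignment of $\clause_j^\star$, in which case $\gtQ$ still advances to $\bStart \backslash ((j^\star-1)(n+2)+1)$, but the second bullet of Lemma~\ref{lemma:clause-gadget} then sends $\clauseSeg(\clause_j^\star)$ to $\geq \bChainDisq \backslash j^\star(n+2)$.

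Once disqualification is reached, I would propagate it across the remaining clauses using the third bullets of Lemmas~\ref{lemma:traj-force-q} and~\ref{lemma:clause-gadget} (together with Lemma~\ref{lemma:no-backtrack} to handle $\gtNext$ and to combine the ``$\geq$'' bounds), arriving at $\geq \bChainDisq \backslash m(n+2)$ as required. I do not anticipate a genuine obstacle here: the only thing to be careful about is the off-by-one bookkeeping that per-clause slices are now $n+2$ measures wide rather than $n+1$, and that the invariant ``$\bStart$ at the start of each clause'' is preserved precisely under the $\qsym$-first condition enforced by $\gtQ$.
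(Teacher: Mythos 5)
Your proposal is correct and matches the paper's intended argument exactly: the paper leaves this proof as an exercise, stating only that "the logic is essentially the same as for Lemma~\ref{lemma:formula-gadget}, but with the placements of $\gtQ$" forcing each $A_j$ to begin with $\qsym$, which is precisely the adaptation you carry out. Your bookkeeping (each clause slice now spanning $n+2$ measures, with $\gtQ$ consuming the leading measure before handing $\bStart$ to the clause gadget, and disqualification propagated via the unconditional third bullets together with Lemma~\ref{lemma:no-backtrack}) is sound.
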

We leave the proof of the lemma as an exercise for the reader.
The logic is essentially the same as for Lemma~\ref{lemma:formula-gadget},
but 
with the placements of $\gtQ$ ensuring that
$A_j = \qsym \assignEmbed(\assignvec_j) \qsym$
rather than
$A_j = \assignEmbed(\assignvec_j) \qsym$
previously.
(We will soon exploit the symmetry
of having a $\qsym$ on each side
of an assignment embedding.)

For the next result we require an important property of factorable type schedules:
\begin{lemma}[Equivalence classes]
\label{lemma:equivs}
Given any two type schedules $\pileType_1$ and $\pileType_2$,
\[
\pileType_1 \backslash \pileType_2
=
\pileType_1^\dual \backslash {\invert \pileType_2}
.
\]
\end{lemma}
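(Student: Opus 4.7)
The plan is to prove the identity by expanding both sides into sequences of singleton compositions via the distributivity lemma, and then checking that the resulting sequences agree factor by factor using the singleton composition rules.

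First, I would write $\pileType_2 = y_1 y_2 \ldots y_m$ with each $y_i \in \arbpile$ and apply Lemma~\ref{lemma:concat-lemma} to the left-hand side of the claim:
\[
\pileType_1 \backslash \pileType_2 \;=\; (\pileType_1 \backslash y_1)(\pileType_1 \backslash y_2) \ldots (\pileType_1 \backslash y_m).
\]
By Lemma~\ref{label:singleton-compose-lemma}, the $i$-th factor equals $\pileType_1$ when $y_i = \qsym$ and $\pileType_1^\dual$ when $y_i = \ssym$.

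Next I would expand the right-hand side of the claim. Since by definition of the inverse, $\invert{\pileType_2} = \invert{y_1} \invert{y_2} \ldots \invert{y_m}$, a second application of Lemma~\ref{lemma:concat-lemma} gives
\[
\pileType_1^\dual \backslash \invert{\pileType_2} \;=\; (\pileType_1^\dual \backslash \invert{y_1})(\pileType_1^\dual \backslash \invert{y_2}) \ldots (\pileType_1^\dual \backslash \invert{y_m}).
\]
Lemma~\ref{label:singleton-compose-lemma} then simplifies each factor: when $y_i = \qsym$, $\invert{y_i} = \ssym$ and the factor becomes $\pileType_1^\dual \backslash \ssym = (\pileType_1^\dual)^\dual = \pileType_1$; when $y_i = \ssym$, $\invert{y_i} = \qsym$ and the factor becomes $\pileType_1^\dual \backslash \qsym = \pileType_1^\dual$. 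These values match the corresponding factors computed on the left-hand side in both cases, so the two products are identical.

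I do not anticipate any real obstacle here: the argument is a purely symbolic consequence of distributivity and the singleton rules, and the only subtlety is invoking the self-inverting property $(\pileType^\dual)^\dual = \pileType$ so that two applications of the dual cancel. In particular, there is no need to revisit the underlying recurrence~\eqref{eq:type-indicator-recurrence}; the symbolic algebra developed in Section~\ref{sec:algebra} is exactly strong enough to carry the proof.
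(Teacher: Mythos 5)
Your proposal is correct, but it takes a genuinely different route from the paper. The paper's proof is a three-step algebraic manipulation that relies on associativity of the $\backslash$ operator: it inserts the identity element as $\qsym = \ssym \backslash \ssym$ between the two schedules and regroups, giving
$\pileType_1 \backslash \pileType_2
= \pileType_1 \backslash (\ssym \backslash \ssym) \backslash \pileType_2
= (\pileType_1 \backslash \ssym) \backslash (\ssym \backslash \pileType_2)
= \pileType_1^\dual \backslash \invert{\pileType_2}$.
You instead decompose $\pileType_2$ into singletons, distribute via Lemma~\ref{lemma:concat-lemma}, and match the resulting factors case by case using the singleton rules and $(\pileType^\dual)^\dual = \pileType$. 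Both arguments rest only on properties the paper states without detailed proof in Section~\ref{sec:algebra}; the difference is which ones. The paper's version leans on associativity (justified there by the sub-shuffle argument) and is shorter and more conceptual --- it makes visible that the equivalence comes from conjugating by the involution $\ssym$. Your version avoids associativity entirely, needing only distributivity and the singleton compositions, at the cost of a two-case verification; that makes it slightly more elementary and arguably easier to check mechanically. One small point worth making explicit in your write-up is that $\invert{\pileType_2}$ preserves the order of the symbols of $\pileType_2$ (only the dual reverses), which is what licenses the factor-by-factor alignment of the two expansions; you use this correctly but it deserves a word.
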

\begin{proof}
We offer a simple, purely algebraic proof based on previously-discussed properties of the two types of inverses:
\begin{align*}
\pileType_1 \backslash \pileType_2
&= \pileType_1 \backslash \qsym \backslash \pileType_2
= \pileType_1 \backslash (\ssym \backslash \ssym) \backslash \pileType_2
= (\pileType_1 \backslash \ssym) \backslash (\ssym \backslash \pileType_2)
= \pileType_1^\dual \backslash \invert \pileType_2
.
\end{align*}
\end{proof}

\begin{varscope}
\makecommand{\word}{A}
\begin{lemma}
\label{lemma:red2-language}
Given a formula $\clause$
of $m \geq 1$ CNF clauses
on $n \geq 1$ variables,
a chain $\pi \in \designAlign \backslash \arbpile^{n+2} \backslash \arbpile^m$ accepts
the change profile $\formSeg_{\Rom{2}}(\clause)$
if and only if
$\pi = \designAlign \backslash \qsym \assignEmbed(\assignvec) \qsym \backslash \qsym^m$ for some $\assignvec \in SAT(\clause)$.
\end{lemma}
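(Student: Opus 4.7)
The plan is to adapt the strategy of Lemma~\ref{lemma:red1-language} to the unrestricted third round, using the algebra of Section~\ref{sec:algebra} to reduce every chain in $\designAlign \backslash \arbpile^{n+2} \backslash \arbpile^m$ to the canonical form stated. First, I would decompose an arbitrary chain $\pi = \designAlign \backslash A \backslash B$ (with $A \in \arbpile^{n+2}$ and $B = b_1 \ldots b_m$) into its $m$ phrases via Lemmas~\ref{label:singleton-compose-lemma} and~\ref{lemma:concat-lemma}: the $j$-th phrase equals $\designAlign \backslash A$ when $b_j = \qsym$, and equals $(\designAlign \backslash A)^\dual = \designAlign \backslash A^\dual$ when $b_j = \ssym$ (the last identity following from associativity together with $\pileType \backslash \ssym = \pileType^\dual$). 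Thus $\pi = \designAlign \backslash C_1 \ldots C_m$ with $C_j \in \{A, A^\dual\}$, exactly the form analyzed in Lemma~\ref{lemma:formula2}.

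Next I would apply Lemma~\ref{lemma:formula2} with the same length adjustment used in Lemma~\ref{lemma:red1-language}: $\pi$ is one measure shorter than the chains in that lemma's hypothesis, so the reject target $\bChainDisq \backslash m(n+2)$ falls beyond $\pi$'s last measure and the trajectory in the reject case necessarily enters the (non-accepting) terminal state. Hence $\pi$ accepts $\formSeg_{\Rom{2}}(\clause)$ if and only if the lemma's accept condition holds, namely $C_j = \qsym \assignEmbed(\assignvec_j) \qsym$ for some $\assignvec_j \in \SAT(\clause_j)$ for every $j \in [m]$. The main obstacle is then to show that this condition forces $B$ to be homogeneous (either $\qsym^m$ or $\ssym^m$) and forces a common $\assignvec \in \SAT(\clause)$. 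The key observation is a boundary-parity argument: the accept condition requires every $C_j$ to start and end with $\qsym$; but since $A^\dual$ is the reverse-and-invert of $A$, if $A$ starts with $\qsym$ then $A^\dual$ starts with $\ssym$. Hence $A$ and $A^\dual$ cannot simultaneously qualify, so either all $b_j = \qsym$ (so all $C_j = A$, with $A = \qsym \assignEmbed(\assignvec) \qsym$) or all $b_j = \ssym$ (so all $C_j = A^\dual$, with $A^\dual = \qsym \assignEmbed(\assignvec) \qsym$); either way, the clause-specific $\assignvec_j$'s coincide in a single $\assignvec \in \SAT(\clause)$.

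To finish, I would unify the two sub-cases using Lemma~\ref{lemma:equivs}: in the $b_j = \ssym$ sub-case, $\pi = \designAlign \backslash A \backslash \ssym^m = (\designAlign \backslash A)^\dual \backslash \qsym^m = \designAlign \backslash A^\dual \backslash \qsym^m = \designAlign \backslash \qsym \assignEmbed(\assignvec) \qsym \backslash \qsym^m$, which coincides with the $b_j = \qsym$ sub-case. The converse direction is immediate: if $\pi$ has the canonical form with $\assignvec \in \SAT(\clause)$, then every $C_j = \qsym \assignEmbed(\assignvec) \qsym$ satisfies the accept condition of Lemma~\ref{lemma:formula2}, so $\pi$ accepts $\formSeg_{\Rom{2}}(\clause)$.
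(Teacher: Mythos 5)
Your proof is correct and follows essentially the same route as the paper's: decompose $\pi$ into phrases $C_j \in \{A, A^\dual\}$ via Lemma~\ref{lemma:concat-lemma}, invoke Lemma~\ref{lemma:formula2} on the one-measure-shorter chain, rule out mixed third-round types because $A$ and $A^\dual$ cannot both have the boundary form $\qsym(\cdots)\qsym$, and normalize with Lemma~\ref{lemma:equivs} (the paper merely applies that normalization at the outset, taking the first third-round symbol to be $\qsym$ w.l.o.g., rather than at the end as you do). One pedantic note: since the dual reverses as well as inverts, the precise statement is that if $A$ \emph{ends} with $\qsym$ then $A^\dual$ \emph{starts} with $\ssym$ --- but as the required form is $\qsym$ at both ends, your conclusion stands.
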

\begin{proof}
By definition every chain $\pi \in \designAlign \backslash \arbpile^{n+2} \backslash \arbpile^m$ can be expressed by
$\pi = \designAlign \backslash A \backslash x_0 \ldots x_{m-1}$
for some $\word \in \arbpile^{n+2}$, $x_0 \ldots x_{m-1} \in \arbpile^m$.
Without loss of generality $x_0 = \qsym$, since otherwise, by Lemma~\ref{lemma:equivs},
$\pi$ can be equivalently expressed as $\designAlign \backslash A^\dual \backslash \invert x_0 \ldots \invert x_{m-1}$.
By Lemma~\ref{lemma:concat-lemma},
$\pi = 
\left( \designAlign \backslash A \backslash x_0 \right)
\ldots
\left( \designAlign \backslash A \backslash x_{m-1} \right)
$,
where
$\designAlign \backslash A \backslash x_j = \designAlign \backslash A$ if $x_j = \qsym$, otherwise, when $x_j = \ssym$,
$\designAlign \backslash A \backslash x_j = \designAlign \backslash A^\dual$.

%Similarly as in the proof of Lemma~\ref{lemma:red1-language},
Only the first case in Lemma~\ref{lemma:formula2} ends in
%a state strictly less than $0 \backslash 0 \backslash m = 6(n+2)m$,
%which is the non-accepting terminal state on the truncation represented by the DFA of $\pi$.
an accepting state on truncated $\pi$.
Therefore, the
acceptance criteria
become:
\begin{itemize}
\item wherever $x_j = \qsym$, $A = \qsym \assignEmbed(\assignvec) \qsym$ for some $\assignvec \in SAT(\clause_j)$; and

\item wherever $x_j = \ssym$, $A^\dual = \qsym \assignEmbed(\assignvec) \qsym$ for some $\assignvec \in SAT(\clause_j)$;
%---i.e., 
equivalently,
$A = \ssym \assignEmbed^\dual(\assignvec) \ssym$.
\end{itemize}
Clearly it is impossible for $A$ to be of 
%a mixture of the pile types among $x_0 \ldots x_{m-1}$ to satisfy 
both forms simultaneously.
Then since $x_0 = \qsym$,
acceptance can only occur if
%they are all $\qsym$.
$x_1 \ldots x_{m-1}$ % = \qsym^{m-1}$.
are all $\qsym$ as well.
In such case
the first condition is used uniformly for all $j$,
which obtains the lemma.
\end{proof}
\end{varscope}

As before, a forthcoming corollary demonstrates that $\probAlias{2}$ is \NPHard/ by reduction from \SAT.
We will omit the proofs of both results, which are almost identical to their $\probAlias{1}$ analogues.
\begin{corollary}
\label{cor:reduction-2}
For every instance $\questionCons_\SAT(\clause) \in \SAT$,
with formula $\clause$ of $m$ clauses on $n$ variables,
the instance
$\questionCons_\chainScript\left(
\theFormula,
\designAlign \backslash \arbpile^{n+2} \backslash \arbpile^m
\right)$
is in $\theProblem'$
and has the same answer.
\end{corollary}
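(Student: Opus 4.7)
The plan is to reproduce, essentially verbatim, the two-line argument used for Corollary~\ref{cor:reduction-1}, but now drawing on Lemma~\ref{lemma:red2-language} rather than Lemma~\ref{lemma:red1-language}.

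First I would check by inspection that the produced instance lies in $\probAlias{2}'$: the permissible set $\designAlign \backslash \arbpile^{n+2} \backslash \arbpile^m$ matches the family listed in Table~\ref{table:all-problems} with parameters $m_2 = n+2 \geq 2$ and $m_3 = m \geq 1$. The change profile is composed of $m$ clause gadgets (each of length $O(n)$ per Definition~\ref{def:formula2}) interleaved with the singleton separator $\gtNext$, so the whole instance is transcribable in $O(mn) \subseteq O(|\clause|^2)$ time, comfortably polynomial in the size of the input formula.

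Then I would apply Lemma~\ref{lemma:red2-language}: a chain in the permissible family accepts $\formSeg_{\Rom{2}}(\clause)$ if and only if it has the form $\designAlign \backslash \qsym \assignEmbed(\assignvec) \qsym \backslash \qsym^m$ for some $\assignvec \in \SAT(\clause)$. Since for every candidate assignment $\assignvec \in \{\top,\bot\}^n$ the string $\qsym \assignEmbed(\assignvec) \qsym$ lies in $\arbpile^{n+2}$ and $\qsym^m$ lies in $\arbpile^m$, each such chain is itself permissible. Hence the chain-acceptance question answers affirmatively exactly when $\SAT(\clause) \neq \emptyset$, which matches the answer to $\questionCons_\SAT(\clause)$.

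The main obstacle, such as it is, has already been discharged inside Lemma~\ref{lemma:red2-language}: that is where the $\gtQ$ prefix of each clause gadget, combined with the equivalence-class identity of Lemma~\ref{lemma:equivs}, forces the third-round types to be effectively all-$\qsym$ despite dealer freedom. What remains in the corollary itself is only bookkeeping about parameter admissibility and the polynomial-time bound, neither of which presents any real difficulty beyond what was already done for $\probAlias{1}$.
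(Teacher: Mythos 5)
Your proposal is correct and matches the paper's intent: the paper explicitly omits this proof as ``almost identical'' to the $\probAlias{1}$ analogue, and your argument is exactly that analogue, substituting Lemma~\ref{lemma:red2-language} and noting the permissible set contains $\designAlign \backslash \qsym\assignEmbed(\assignvec)\qsym \backslash \qsym^m$ for every assignment $\assignvec$. The polynomial-time transcription remarks you include properly belong to the subsequent \NPHard/ness lemma rather than the corollary itself, but they are accurate and harmless here.
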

\begin{lemma}
%Corollary~\ref{cor:reduction-1} allows us to prove that 
$\theProblem$ and $\theProblem'$ are \NPHard/.
\end{lemma}

%\clearpage
\section{%
Problem~\ref{prob:variable-round}
%Variable-round pile shuffle sort 
is \NPHard/ (via $\arbpile^6 \backslash \arbpile^{m_2} \backslash \arbpile^{m_3}$)}

\label{sec:variable-round-feasibility-complexity}
%\label{sec:Q3}

Finally, we relax the first-round constraint to show that
sort on permissible sets
of the form
$\arbpile^6 \backslash \arbpile^{m_2} \backslash \arbpile^{m_3}$
%with $m_2$ and $m_3$ as parameters.
($\probAlias{3}$)
is \NPHard/.
It follows then that
%variable-round sort feasibility 
%(Prob.~\ref{prob:variable-round}) 
Prob.~\ref{prob:variable-round}
is \NPHard/ also.

Unlike our previous reductions, which have remained more or less practical
in terms of the size of instances generated,
our next reduction is relatively impractical.
Nevertheless,
it demonstrates a polynomial-time reduction from SAT,
ultimately proving Theorem~\ref{thm:variable-round-hard}.

We begin by discussing the trajectories of a new gadget,
\[
\gtAlign \dfneq a^{24} d a^{18} d a^{18} d a^{18} d^{21} a d^{5} a^{17} d a^{5} d^{11} a d^{5} a d^{4} a d^{4} a^{4}
.
\]
The function of $\gtAlign$ is to
consume large segments of
(\ie, ``\emph{penalize}'')
permissible chains not contained by $\designAlign \backslash \arbpile^*$,
%(\ie, non- $\designAlign$-aligned chains),
so that they cannot accept any formula embedding.
Meanwhile, it must \emph{not} penalize any of the properly $\designAlign$-aligned assignment-embedding chains.

Unfortunately,
a single instance of $\gtAlign$ can only guarantee a fairly small penalty,
of a single extra beat per two measures consumed:

\begin{lemma}
\label{lemma:align-singleton}
\makecommand{\pt}{\pileTypeSingle}

The following statements hold
for every chain $\pi \in 
\arbpile^6 \backslash \arbpile^{n+2}$,
$n \geq 1$,
for each measure $k \in [n]-1$:
\begin{itemize}
\item
($\designAlign$-aligned)
if $
\pi = \designAlign \backslash \pt_0 \ldots \pt_{n+1}$, for $x_0 \ldots x_{n+1} \in \arbpile^{n+2}$:
\begin{itemize}
\item
if $x_k x_{k+1} = \qsym\ssym$,
then
$\gtAlign \in \bStart \backslash k \to \bStart \backslash (k + 2)$;

\item
otherwise,
$\gtAlign \in \bStart \backslash k \to \, \geq \bChainDisq \backslash (k + 2)$;

\item
$\gtAlign \in \bChainDisq \backslash k \to \, \geq \bChainDisq \backslash (k + 2)$, unconditionally;

\end{itemize}

\item
otherwise,
for all $k' \in [6] - 1$,
$\gtAlign \in k' \backslash k \to \, \geq (k' + 1) \backslash (k + 2)$.

\end{itemize}

\end{lemma}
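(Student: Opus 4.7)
The plan is to follow the paper's established template for trajectory lemmas (as in the proofs of Lemma~\ref{lemma:start-clause}, Lemma~\ref{lemma:lit-gadget}, Lemma~\ref{lemma:endlit-gadget}, and Lemma~\ref{lemma:traj-force-q}): reduce to a finite case analysis on two-measure segments of the chain, then verify each case with trajectory diagrams of the same style as Fig.~\ref{fig:start-clause}--Fig.~\ref{fig:force-q-traj}.

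By Lemma~\ref{lemma:concat-lemma}, the portion of any chain $\pi \in \arbpile^6 \backslash \arbpile^{n+2}$ over measures $k$ and $k+1$ is exactly $\pileType' \backslash (x_k x_{k+1})$, where $\pileType' \in \arbpile^6$ is the first-round assignment and $x_k, x_{k+1}$ are the two round-2 types in question. I would first dispose of the $\designAlign$-aligned case by applying Lemma~\ref{label:singleton-compose-lemma}: the four possible two-measure segments are $\designAlign\designAlign$, $\designAlign\designAlign^\dual$, $\designAlign^\dual\designAlign$, and $\designAlign^\dual\designAlign^\dual$, corresponding to $x_k x_{k+1} \in \{\qsym\qsym, \qsym\ssym, \ssym\qsym, \ssym\ssym\}$ respectively. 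For each, I would draw the trajectory of $\gtAlign$ from $\bStart \backslash k$ and from $\bChainDisq \backslash k$, verifying that only the $\designAlign\designAlign^\dual$ case admits the round-trip $\bStart \backslash k \to \bStart \backslash (k+2)$, and that in all other subcases the trajectory reaches at least $\bChainDisq \backslash (k+2)$. This is a routine (if gadget-design-dependent) check once the plots are produced.

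The principal obstacle is the non-aligned case, where $\pileType' \in \arbpile^6 \setminus \{\designAlign\}$ gives $63$ alternatives, each paired with $4$ round-2 choices, yielding $252$ distinct two-measure segments; for each we must confirm that the trajectory of $\gtAlign$ from every one of the six beats $k' \backslash k$ advances by at least one extra beat relative to the $\designAlign$-aligned round-trip. To tame the combinatorial explosion I would first use Lemma~\ref{lemma:equivs} to pair $\pileType' \backslash (x_k x_{k+1})$ with $(\pileType')^\dual \backslash (\invert x_k \, \invert x_{k+1})$, roughly halving the work. Next I would partition the remaining cases according to the leftmost index $k^* \in [6]-1$ at which $\pileType'(k^*) \neq \designAlign(k^*)$, so that a specific contiguous sub-word of $\gtAlign$ can be identified as the ``witness'' responsible for penalizing each type of deviation; the Non-backtracking Lemma (Lemma~\ref{lemma:no-backtrack}) then lets me extend any single starting-position argument upward to all six starting beats. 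Even with these reductions, the verification is ultimately the justification of $\gtAlign$'s intricate letter string and will, in keeping with the rest of the paper, be discharged either through a large appendix of trajectory diagrams or an automated chain-simulation check over the $252$ enumerated sub-chains.
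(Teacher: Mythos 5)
Your approach matches the paper's: the paper omits this proof entirely, stating only that it is an exhaustive check over thousands of trajectory classes (too many for the proof-by-plot diagrams used elsewhere) which the author verified by computer, and that is exactly the enumeration-plus-automated-simulation you propose, with your duality and leftmost-deviation groupings serving only as optional bookkeeping on top of the same brute-force verification. One small caution: Lemma~\ref{lemma:no-backtrack} only propagates conclusions of the form $\geq k_1 \to \, \geq k_2$, so it cannot by itself lift a penalty established from beat $0 \backslash k$ to the per-beat claim $k' \backslash k \to \, \geq (k'+1) \backslash (k+2)$ for every $k' \in [6]-1$ --- each starting beat must still be checked separately, as your final fallback to full enumeration implicitly does.
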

The proof of the lemma is exhaustive
as for all of our other low-level gadgets, only now
due to the extra degrees of freedom in the first round,
there are thousands of trajectory classes to check.
Our proof-by-plot approach is not helpful in this case,
especially given
% for a variety of reasons, including
the word's length.
We omit the proof, which the author verified with computer assistance;
all classes can be checked within a few seconds on an M1 MacBook Pro.

We rely on repetition of the word $\gtAlign$ to produce a penalty
that is sufficiently large for our purposes.
In the next lemma
we partition chains 
%conceptually
logically
into four segments:
the \emph{prefix}, the \emph{guard}, the \emph{payload}, and the \emph{suffix}.
\begin{lemma}
\label{lemma:aligned-word}
The following statements hold
for any $k_1 \geq 0$, $n, k_2 \geq 1$, and
for every 
$\pi \in
\arbpile^6 \backslash \arbpile^{k_1} \arbpile^{2(6n+1)} \arbpile^n \arbpile^{k_2}$:

\begin{itemize}
%\makecommand{\nextBlock}{\left( k_1 + 2(6n+1) \right)}
\makecommand{\nextBlock}{( k_1 + 2(6n+1) )}

\item
($\designAlign$-aligned) if $\pi$ can be written as $\designAlign \backslash A_0 A_1 A_2 A_3$:
\begin{itemize}

\item
if $A_1 = (\qsym\ssym)^{6n+1}$, then
${\gtAlign}^{6n+1} \in \bStart \backslash k_1 \, \to \bStart \backslash \nextBlock$;
%\ie, from the start of $A_1$ to the start of $A_2$;

\item otherwise
${\gtAlign}^{6n+1} \in \bStart \backslash k_1 \, \to \, \geq \bChainDisq \backslash \nextBlock$;
%\ie, disqualified on the same measure;

\item
${\gtAlign}^{6n+1} \in \bChainDisq \backslash k_1 \, \to \, \geq \bChainDisq \backslash \nextBlock$;
%, unconditionally;

\end{itemize}

\item otherwise,
${\gtAlign}^{6n+1}
\in \bStart \backslash k_1 \, \to \, \geq \bChainDisq \backslash ( k_1 + 2(6n+1) + n )$.
%skipping $A_2$ entirely.

\end{itemize}
\end{lemma}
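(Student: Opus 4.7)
The plan is to prove the lemma by iterating Lemma~\ref{lemma:align-singleton} $(6n+1)$ times while tracking positions carefully. For the first bullet ($\designAlign$-aligned chains), I would write the payload as $A_1 = x_0 x_1 \ldots x_{2(6n+1)-1}$ (re-indexing the measures relative to the start of the payload) and induct on the iteration index $j \in \{0, 1, \ldots, 6n\}$. If $A_1 = (\qsym\ssym)^{6n+1}$, then at every iteration the pair $x_{2j} x_{2j+1} = \qsym\ssym$ triggers the first sub-case of Lemma~\ref{lemma:align-singleton}, carrying the trajectory from $\bStart \backslash (k_1 + 2j)$ to $\bStart \backslash (k_1 + 2(j+1))$; composing these $(6n+1)$ moves yields the first claim. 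Otherwise, letting $j^\star$ be least such that $x_{2j^\star} x_{2j^\star+1} \neq \qsym\ssym$, iterations before $j^\star$ proceed as above, iteration $j^\star$ invokes the second sub-case to leave the trajectory at $\geq \bChainDisq \backslash (k_1 + 2(j^\star+1))$, and subsequent iterations apply the third sub-case together with Lemma~\ref{lemma:no-backtrack} to propagate disqualification. The $\bChainDisq \backslash k_1$ starting case follows by pure propagation via the third sub-case and non-backtracking.

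For the non-$\designAlign$-aligned bullet, I would translate per-iteration beat gains into measure gains via a simple potential function $\phi(k' \backslash k) = k' + 6k$, which is just the integer index of the state along the chain. The second bullet of Lemma~\ref{lemma:align-singleton} asserts that each iteration of $\gtAlign$ increases $\phi$ by at least $13$ (one beat of gain on top of the two measures, each worth six beats). Starting at $\phi(\bStart \backslash k_1) = 6 k_1$, after $(6n+1)$ iterations we obtain $\phi \geq 6 k_1 + 13(6n+1) = 1 + 6(k_1 + 13n + 2)$, which equals $\phi(\bChainDisq \backslash (k_1 + 2(6n+1) + n))$ since $2(6n+1) + n = 13n + 2$. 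This is precisely the claimed lower bound.

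The main obstacle will be formalizing the iterative application of Lemma~\ref{lemma:align-singleton} when intermediate beat coordinates spill past the two-measure window assumed by a single application, especially at the boundary $k' = 5$ where the guaranteed endpoint $6 \backslash (k+2)$ is the terminal rather than a pile state. This is handled uniformly by Lemma~\ref{lemma:no-backtrack}: a later starting position of $\gtAlign$ only produces a later (or equal) ending position, so at each step one may apply Lemma~\ref{lemma:align-singleton} with the minimum admissible starting position to obtain a valid lower bound on the true trajectory; once the trajectory has entered the terminal the potential is already beyond any finite target, so the lower bound trivially propagates.
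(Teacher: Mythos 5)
Your proposal is correct and follows essentially the same route as the paper: the aligned cases are handled by iterating Lemma~\ref{lemma:align-singleton} across the $6n+1$ two-measure blocks (with Lemma~\ref{lemma:no-backtrack} to propagate disqualification), and the unaligned case accumulates the one-beat-per-application penalty to land at position $(\bStart + 6n + 1)\backslash(k_1 + 2(6n+1)) = \bChainDisq\backslash(k_1 + 2(6n+1) + n)$, which is exactly the paper's computation expressed via your potential function $\phi$. Your write-up is in fact more explicit than the paper's (which dismisses the aligned cases as ``fairly easy to reason''), and the boundary/terminal-state concern you raise is correctly resolved by the monotonicity argument you give.
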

\begin{proof}
For each of the three aligned cases,
the claims are fairly easy to reason by repeated application of Lemma~\ref{lemma:align-singleton}.
For the final case we observe that
starting from $\bStart \backslash k_1$ and 
applying $6n+1$ copies of $\gtAlign$,
we find ourselves at a state
\begin{align*}
\geq \, (\bStart+6n+1) \backslash (k_1 + 2(6n+1))
&= (\bStart+1) \backslash \left( k_1 + 2(6n+1) + n \right)
\\ &= \bChainDisq \backslash \left( k_1 + 2(6n+1) + n \right)
.
\end{align*}
\end{proof}

The lemma demonstrates that if the candidate chain is aligned
and the guard has the necessary form, then
the trajectory of the input sequence moves 
from the start of the guard to the start of the payload.
If the guard is not of the correct form, however, then
the trajectory ends in a disqualified state relative to the payload instead.
Meanwhile, the trajectory
from a disqualified state
on the guard of any aligned chain
reaches a disqualified state with respect to the payload,
irrespective of the guard's composition.
Finally, if the chain is not $\designAlign$-aligned at all, then
a trajectory from the guard skips the payload entirely, landing in the suffix instead.

By prepending a sufficiently strong penalty to our clause gadget,
we can obtain our new reduction's version of Lemma~\ref{lemma:clause-gadget}:
\begin{lemma}
\label{lemma:force-align-block}

\makecommand{\theWord}{{\gtAlign}^{n''} \gtClause(\clause_j)}

Given a CNF clause $\clause_j$
on $n \geq 1$ variables,
let
$n' = n+1$ and
$n'' = 6n'+1$.
The following statements hold
for every
%$k_1 \in \nats_0$, $k_2 \in \nats_1$,
$k_1 \geq 0$, $k_2 \geq 1$,
and for every chain
$\pi \in
\arbpile^6 \backslash \arbpile^{k_1} \arbpile^{(2n'' + n')} \arbpile^{k_2}$:
\begin{itemize}
\item
($\designAlign$-aligned)
if $\pi$ can be written as $\designAlign \backslash A_0 A_1 A_2$,

\begin{itemize}

\item
if $A_1 = (\qsym\ssym)^{n''} \, \assignEmbed(\assignvec_j) \, \qsym$
for some $\assignvec_j \in \SAT(\clause_j)$, then
$\theWord \in \bStart \backslash k_1 \, \to \bEnd \backslash \left( k_1 + 2n'' + n' - 1 \right)$;
%\ie, from the start of $A_1$ to the very end of it;

\item otherwise
$\theWord \in \bStart \backslash k_1 \, \to \, \geq \bChainDisq \backslash (k_1 + 2n''+n')$;
%\ie, disqualified on the first measure of $A_2$ instead;

\item
$\theWord \in \bChainDisq \backslash k_1 \, \to \, \geq \bChainDisq \backslash (k_1 + 2n''+n')$, unconditionally;

\end{itemize}

\item otherwise,
$\theWord \in \bStart \backslash k_1 \, \to \, \geq \bChainDisq \backslash (k_1 + 2n''+n')$.

\end{itemize}

\end{lemma}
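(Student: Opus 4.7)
The plan is to decompose $\gtAlign^{n''} \gtClause(\clause_j)$ into its guard $\gtAlign^{n''}$ and payload $\gtClause(\clause_j)$, apply Lemma~\ref{lemma:aligned-word} to the guard and Lemma~\ref{lemma:clause-gadget} to the payload, then track the trajectory across the split. Specifically, I would sub-divide the middle segment $A_1 \in \arbpile^{2n''+n'}$ as $A_1 = A_1^{(g)} A_1^{(p)}$, with $A_1^{(g)} \in \arbpile^{2n''}$ and $A_1^{(p)} \in \arbpile^{n'}$, so that in the $\designAlign$-aligned case the chain $\designAlign \backslash A_0 A_1^{(g)} A_1^{(p)} A_2$ fits the frame of Lemma~\ref{lemma:aligned-word} with that lemma's $n$ parameter set to $n'$ (since $n'' = 6n' + 1$), and simultaneously fits the frame of Lemma~\ref{lemma:clause-gadget} with its $k_1$ shifted by $2n''$ and its $n$ equal to our variable count.

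First I would invoke Lemma~\ref{lemma:aligned-word} to locate the state reached after consuming the guard. In the aligned case: if $A_1^{(g)} = (\qsym\ssym)^{n''}$ the guard moves $\bStart \backslash k_1$ to $\bStart \backslash (k_1 + 2n'')$; otherwise it reaches at least $\bChainDisq \backslash (k_1 + 2n'')$; and starting from $\bChainDisq \backslash k_1$ it unconditionally reaches at least $\bChainDisq \backslash (k_1 + 2n'')$. In the non-aligned case, the guard alone already transports $\bStart \backslash k_1$ past the payload region, landing at a position of at least $\bChainDisq \backslash (k_1 + 2n'' + n')$ by the final clause of Lemma~\ref{lemma:aligned-word}.

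Next I would feed each of these intermediate states into Lemma~\ref{lemma:clause-gadget}, noting that $A_1^{(p)}$ has length exactly $n' = n + 1$ as that lemma requires. In the aligned subcase where both the guard is $(\qsym\ssym)^{n''}$ and $A_1^{(p)} = \assignEmbed(\assignvec_j) \qsym$ for some $\assignvec_j \in \SAT(\clause_j)$, the clause gadget completes the trajectory from $\bStart \backslash (k_1 + 2n'')$ to $\bEnd \backslash (k_1 + 2n'' + n) = \bEnd \backslash (k_1 + 2n'' + n' - 1)$. Every other aligned subcase falls into the second or third clause of Lemma~\ref{lemma:clause-gadget}, producing an endpoint of at least $\bChainDisq \backslash (k_1 + 2n'' + n + 1) = \bChainDisq \backslash (k_1 + 2n'' + n')$. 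In the non-aligned case the trajectory is already at or past $\bChainDisq \backslash (k_1 + 2n'' + n')$ after the guard, and the subsequent $\gtClause(\clause_j)$ cannot retreat from that coordinate since chain dynamics admit only forward or self-loop transitions (formally, by Lemma~\ref{lemma:no-backtrack}).

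The main obstacle will be primarily mechanical---careful bookkeeping to keep the parameters $n$, $n' = n + 1$, and $n'' = 6n' + 1$ aligned between the two invoked lemmas, and in particular confirming that the ``$+1$'' encoded in $n' = n + 1$ is correctly absorbed by the trailing $\qsym$ in $\assignEmbed(\assignvec_j)\qsym$ and matches the ``$+1$'' that Lemma~\ref{lemma:clause-gadget} appends to its endpoint coordinate in the unsatisfied cases. No genuinely new ideas seem to be required beyond the content already supplied by Lemmas~\ref{lemma:aligned-word}, \ref{lemma:clause-gadget}, and~\ref{lemma:no-backtrack}.
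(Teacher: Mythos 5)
Your proposal is correct and follows essentially the same route as the paper's proof: split $A_1$ into a $2n''$-measure guard and an $n'$-measure payload, dispatch the guard with Lemma~\ref{lemma:aligned-word} and the payload with Lemma~\ref{lemma:clause-gadget}, and chain the intermediate states case by case (the paper even leaves the "otherwise" aligned subcase to the reader, which you spell out). No substantive differences.
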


\begin{proof}

\makecommand{\gtOne}{\gtAlign^{n''}}
\makecommand{\gtTwo}{\gtClause(\clause_j)}
\makecommand{\stateVar}{{\tilde k}}

We start with the case that $\pi$ is unaligned,
in which by Lemma~\ref{lemma:aligned-word},
the word $\gtOne$ is itself in
$\bStart \backslash k_1 \, \to \, \geq \bChainDisq \backslash (k_1 + 2n''+n')$.
Then so are any of its continuations, \eg, by $\gtTwo$ in the present lemma.

Next, we examine the case that $\pi$ is aligned and the word is applied from
$\stateVar_0 \geq \bChainDisq \backslash k_1$.
Then
by Lemma~\ref{lemma:aligned-word},
$\gtOne$ brings the trajectory to
$\stateVar_1 \geq \bChainDisq \backslash (k_1 + 2n'')$, and
by Lemma~\ref{lemma:clause-gadget},
$\gtTwo$ brings it from there to
$\stateVar_2 \geq \bChainDisq \backslash (k_1 + 2n'' + n')$.

Finally, we examine the case that $\pi$ is aligned and the word is applied from
$\stateVar_0 = \bStart \backslash k_1$.
We partition $A_1$ as $A_{11}A_{12} \in \arbpile^{2n''} \arbpile^{n'}$,
so that
$A_1 = (\qsym\ssym)^{n''} \, \assignEmbed(\assignvec_j) \, \qsym$
if and only if
$A_{11} = (\qsym\ssym)^{n''}$ and $A_{12} = \assignEmbed(\assignvec_j) \, \qsym$.
If both are true, then
$\gtOne$ moves from $\stateVar_0$ to $\stateVar_1 = \bStart \backslash (k_1 + 2n'')$,
and
$\gtTwo$ moves from there to $\stateVar_2 = \bEnd \backslash (k_1 + 2n'' + n' - 1)$.
Otherwise, 
%We let the reader verify that, otherwise, the trajectory reaches a disqualifying state instead.
we let the reader verify that the trajectory reaches a disqualifying state instead.
\end{proof}

\begin{lemma}
%
%\makecommand{\clausePrefix}{{\gtAlign}^{n''}}
\makecommand{\clausePrefix}{{\gtAlign}^{6n+7}}
Given a formula $\clause$
of $m \geq 1$ CNF clauses
% $\{ \clause_j \}_{j=1}^m$
on $n \geq 1$ variables,
a chain $\pi \in 
\arbpile^6 \backslash \arbpile^{13n + 15} \backslash \arbpile^m$
accepts the change profile
$$\formSeg_{\Rom{3}}(\clause)
\dfneq
\gtClause_\Rom{3}(\clause_1)
\, \gtNext \, \gtClause_\Rom{3}(\clause_2)
\ldots
\gtNext \, \gtClause_\Rom{3}(\clause_m)
,
$$
with
$\gtClause_\Rom{3}(\clause_j)
\dfneq 
\clausePrefix
 \, \gtClause(\clause_j)
$,
if and only if
$\pi = \designAlign \backslash (\qsym\ssym)^{6n+7} \, \assignEmbed(\assignvec) \, \qsym \backslash \qsym^m$
for some $\assignvec \in SAT(\clause)$.
\end{lemma}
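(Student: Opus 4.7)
\textbf{The plan} is to mirror the blueprint of Lemmas~\ref{lemma:red1-language} and~\ref{lemma:red2-language}, with Lemma~\ref{lemma:force-align-block} subsuming the per-phrase analysis previously supplied by Lemma~\ref{lemma:formula-gadget}/Lemma~\ref{lemma:formula2}; the new per-phrase lemma already internalises the first-round alignment test via the $\gtAlign^{6n+7}$ penalty. First I would establish an intermediate ``formula-gadget'' analogue of Lemma~\ref{lemma:formula2} for $\formSeg_{\Rom{3}}$ on a padded chain $\pi' \in \arbpile^6 \backslash (\arbpile^{13n+15})^m \arbpile$: by iterating Lemma~\ref{lemma:force-align-block} phrase-by-phrase and splicing the results with $\gtNext$ (Lemma~\ref{lemma:next-clause}, whose $x_k = \qsym$ hypothesis is supplied by the trailing $\qsym$ of each payload $(\qsym\ssym)^{6n+7}\assignEmbed(\assignvec_j)\qsym$), one obtains the dichotomy: if every phrase of $\pi'$ is $\designAlign$-aligned with a payload $(\qsym\ssym)^{6n+7}\assignEmbed(\assignvec_j)\qsym$ for some $\assignvec_j \in \SAT(\clause_j)$, then $\formSeg_{\Rom{3}}(\clause) \in \bStart \backslash 0 \to \bEnd \backslash (m(13n+15)-1)$; otherwise $\formSeg_{\Rom{3}}(\clause) \in \bStart \backslash 0 \to \,\geq \bChainDisq \backslash m(13n+15)$. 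Since the actual $\pi$ in the present lemma has no measures beyond $m(13n+15)-1$, in the unfavourable case the trajectory leaves the accepting region entirely and terminates at the unique non-accepting beat, exactly as in the closing argument of Lemma~\ref{lemma:red1-language}; hence $\pi$ accepts if and only if the favourable case holds in every phrase.

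\textbf{For the forward direction}, I decompose $\pi = B \backslash A \backslash x_1 \ldots x_m$ with $B \in \arbpile^6$, $A \in \arbpile^{13n+15}$, $x_j \in \arbpile$. By Lemma~\ref{lemma:equivs} applied at the $A$/$x$ boundary I may assume WLOG that $x_1 = \qsym$, and then by Lemma~\ref{lemma:concat-lemma} the $j$-th phrase of $\pi$ equals $B \backslash A$ if $x_j = \qsym$ and $(B \backslash A)^\dual$ if $x_j = \ssym$. The $j=1$ acceptance condition pins $B \backslash A = \designAlign \backslash (\qsym\ssym)^{6n+7}\assignEmbed(\assignvec_1)\qsym$ for some $\assignvec_1 \in \SAT(\clause_1)$. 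The crucial step is to rule out $x_j = \ssym$: such a $j$ would force $(B \backslash A)^\dual$ also to begin with $\designAlign$, but its first six characters are the dual of the last six of $B \backslash A$, which equal $\designAlign \backslash \qsym = \designAlign$, whose dual is $\designAlign^\dual = \qsym\qsym\ssym\ssym\ssym\ssym \neq \designAlign$. Hence every $x_j = \qsym$; requiring the same $B \backslash A$ to encode a satisfying assignment of each $\clause_j$ forces a single $\assignvec$ with $\assignvec \in \bigcap_j \SAT(\clause_j) = \SAT(\clause)$, and $\pi$ takes the stated form.

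\textbf{The converse direction} is immediate: for $\pi = \designAlign \backslash (\qsym\ssym)^{6n+7}\assignEmbed(\assignvec)\qsym \backslash \qsym^m$ with $\assignvec \in \SAT(\clause)$, the identity $W \backslash \qsym = W$ together with Lemma~\ref{lemma:concat-lemma} decomposes $\pi$ into $m$ identical $\designAlign$-aligned phrases satisfying the favourable case of the intermediate lemma, so $\pi$ accepts. \textbf{The main obstacle} I anticipate is the bookkeeping at the trailing boundary of $\pi$: Lemma~\ref{lemma:force-align-block} is stated with $k_2 \geq 1$, yet the last phrase of $\pi$ has no spare measure following it. I plan to resolve this exactly as in the proof of Lemma~\ref{lemma:red1-language}, by phrasing the intermediate lemma on the padded chain and then observing that on the unpadded $\pi$ any disqualified trajectory has nowhere to go but the unique non-accepting terminal. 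A secondary but rhetorically central subtlety is the inequality $\designAlign \neq \designAlign^\dual$ that powers the $x_j = \ssym$ elimination; this is a direct six-symbol comparison, but underwrites the entire uniqueness clause of the lemma.
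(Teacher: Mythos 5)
Your proposal is correct and follows exactly the route the paper intends: the paper's own proof is only a sketch deferring to the technique of Lemma~\ref{lemma:red2-language} with Lemma~\ref{lemma:force-align-block} substituted for Lemma~\ref{lemma:formula2}, and you have filled in precisely those details (the padded intermediate formula lemma, the WLOG $x_1 = \qsym$ via Lemma~\ref{lemma:equivs}, and the mutual-exclusion argument eliminating $x_j = \ssym$). Your six-symbol check $\designAlign^\dual = \qsym\qsym\ssym\ssym\ssym\ssym \neq \designAlign$ is the correct analogue of the paper's observation in Lemma~\ref{lemma:red2-language} that $A$ cannot simultaneously begin with $\qsym$ and $\ssym$, and your handling of the trailing-boundary issue matches the closing argument of Lemma~\ref{lemma:red1-language}.
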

\begin{proof}[Proof Sketch]
The lemma is proved with the same technique as the proof of Lemma~\ref{lemma:red2-language},
but using Lemma~\ref{lemma:force-align-block}
instead of Lemma~\ref{lemma:formula2}.
%Critically, we observe that the assignment encoding chains of the lemma all start and end with $\qsym$.
\end{proof}

\makecommand{\theFormula}{\formSeg_{\Rom{3}}(\clause)}
\makecommand{\theProblem}{\questionSet_\Rom{3}}

\begin{corollary}
\label{cor:reduction-3}
For every instance $\questionCons_\SAT(\clause) \in \SAT$,
with formula $\clause$ of $m$ clauses on $n$ variables,
the instance
$\questionCons_\chainScript\left(
\theFormula,
\arbpile^6 \backslash \arbpile^{13 n + 15} \backslash \arbpile^m
\right)$
is in $\theProblem'$
and has the same answer.
\end{corollary}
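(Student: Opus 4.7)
The plan is to prove Corollary~\ref{cor:reduction-3} in exact parallel to Corollaries~\ref{cor:reduction-1} and~\ref{cor:reduction-2}, leveraging the preceding lemma about accepting chains of $\formSeg_{\Rom{3}}(\clause)$ as the main workhorse. First I would observe by inspection that the constructed instance $\questionCons_\chainScript\left(\theFormula, \arbpile^6 \backslash \arbpile^{13n+15} \backslash \arbpile^m\right)$ is of exactly the form required to lie in $\probAlias{3}'$: its permissible schedule factors as $\arbpile^6 \backslash \arbpile^{m_2} \backslash \arbpile^{m_3}$ with $m_2 = 13n+15 \geq 1$ and $m_3 = m \geq 1$, matching the third row of Table~\ref{table:all-problems}.

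Next I would use the preceding lemma to establish the equivalence of answers. That lemma tells us an accepting chain exists if and only if there is some $\assignvec$ such that the chain $\designAlign \backslash (\qsym\ssym)^{6n+7} \, \assignEmbed(\assignvec) \, \qsym \backslash \qsym^m$ accepts $\formSeg_{\Rom{3}}(\clause)$, \emph{and} $\assignvec \in \SAT(\clause)$. The key observation is that for every $\assignvec \in \{\top,\bot\}^n$ the candidate chain $\designAlign \backslash (\qsym\ssym)^{6n+7} \, \assignEmbed(\assignvec) \, \qsym \backslash \qsym^m$ lies inside the permissible family, since the middle round has $|{(\qsym\ssym)^{6n+7} \, \assignEmbed(\assignvec) \, \qsym}| = 2(6n+7) + n + 1 = 13n+15$ piles, matching the second factor exactly, and the first factor $\designAlign \in \arbpile^6$ and the third factor $\qsym^m \in \arbpile^m$ fit trivially. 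Consequently an accepting chain exists in the permissible set iff some $\assignvec \in \SAT(\clause)$ exists, \ie iff $\clause$ is satisfiable.

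Finally I would confirm polynomial-time transcription. The output data consists of $\formSeg_{\Rom{3}}(\clause)$ together with the two parameters $13n+15$ and $m$ (which are logarithmic in the input size and hence negligible). The profile $\formSeg_{\Rom{3}}(\clause)$ concatenates $m$ copies of $\gtClause_{\Rom{3}}(\clause_j) = \gtAlign^{6n+7} \, \gtClause(\clause_j)$, plus $m-1$ interleaved copies of $\gtNext$. Since $|\gtAlign|$ is a constant and $|\gtClause(\clause_j)| = O(n)$, each $\gtClause_{\Rom{3}}(\clause_j)$ has length $O(n)$, and therefore $|\formSeg_{\Rom{3}}(\clause)| = O(mn) \subseteq O(|\clause|^2)$, which is constructible in polynomial time.

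The only step I expect to require any real care is the verification that no extraneous chain outside the form $\designAlign \backslash (\qsym\ssym)^{6n+7}\,\assignEmbed(\assignvec)\,\qsym \backslash \qsym^m$ can sneak into the permissible family and accept $\formSeg_{\Rom{3}}(\clause)$; but this is exactly what the preceding lemma was designed to rule out (via the $\gtAlign^{6n+7}$ penalty and the alignment / $\qsym$-forcing arguments underlying Lemmas~\ref{lemma:aligned-word} and~\ref{lemma:force-align-block}). Given that lemma, the present corollary is essentially a one-line specialization followed by a size bound.
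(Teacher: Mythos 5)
Your proposal is correct and matches the paper's (omitted, "stereotypical") proof exactly: membership in $\probAlias{3}'$ by inspection, equivalence of answers via the preceding lemma together with the arithmetic check that $2(6n+7)+n+1 = 13n+15$ so every assignment embedding lies in the permissible family, and an $O(mn)$ transcription bound. The polynomial-time portion technically belongs to the following lemma in the paper rather than the corollary itself, but including it does no harm.
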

As before, we omit the stereotypical proof of the corollary.
\begin{lemma}
$\theProblem$ and $\theProblem'$ are \NPHard/.
\end{lemma}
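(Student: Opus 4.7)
The plan is to follow the template used in the preceding two \NPHard/ness lemmas. Corollary~\ref{cor:reduction-3} already supplies an answer-preserving map from an arbitrary SAT instance $\clause$ (with $m$ clauses on $n$ variables) to an instance of $\theProblem'$, so all that remains for $\theProblem'$ is to verify that this map runs in polynomial time. To transfer hardness to $\theProblem$ itself, I would appeal to the co-reducibility of $\sortScript$- and $\chainScript$-problems noted in Section~\ref{sec:chains}: given any change profile $\profile \in \profileAlphabet^*$, a permutation $\perm$ with $\profileFn(\perm) = \profile$ can be constructed in linear time (for instance, by a simple greedy labeling that mirrors the prescribed ascent/descent pattern).

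For the polynomial-time check, I would bound the transcription length of the output instance. The permissible-set parameters ($6$, $13n+15$, and $m$) contribute only logarithmically in the inputs and so are negligible. The change profile $\formSeg_\Rom{3}(\clause)$ decomposes as $m$ copies of $\gtClause_\Rom{3}(\clause_j) = \gtAlign^{6n+7} \, \gtClause(\clause_j)$, interleaved with singleton $\gtNext$ symbols. Since $\gtAlign$ is a fixed constant-length string (its definition does not depend on any parameter), $|\gtAlign^{6n+7}| \in O(n)$; combined with the bound $|\gtClause(\clause_j)| \in O(n)$ already used in the analysis of $\probAlias{1}$, each $\gtClause_\Rom{3}(\clause_j)$ has length $O(n)$, so the overall profile has length $O(mn)$. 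Because $|\clause| \geq \max(m,n)$ for any CNF formula of this shape, we get $mn \in O(|\clause|^2)$, and the reduction runs in polynomial time.

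I do not anticipate any genuine obstacle: the substantive structural work is already packaged into the trajectory lemmas of this section and into Corollary~\ref{cor:reduction-3}, and the polynomial-time accounting differs from the analogous argument for $\probAlias{2}$ only by the extra constant-times-linear factor contributed by $\gtAlign^{6n+7}$. The one point to check carefully is that the exponent $6n+7$ is truly linear in the number of variables $n$ (not, for instance, in the second-round pile count), which is immediate from the definitions; otherwise, the penalty guard imposed by $\gtAlign$ could in principle cause super-polynomial blowup in the gadget length.
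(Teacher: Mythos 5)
Your proposal is correct and follows the same route as the paper: Corollary~\ref{cor:reduction-3} supplies the answer-preserving map, and the only remaining work is the polynomial-time accounting, which the paper compresses into the single observation that $\formSeg_{\Rom{3}}(\clause)$ can still be transcribed in $O(mn)$ time (relying on the $O(mn) \subseteq O(|\clause|^2)$ bound and the profile-to-permutation co-reduction already established for $\probAlias{1}$). Your more explicit decomposition of $\gtClause_\Rom{3}(\clause_j)$ into the constant-length $\gtAlign$ repeated $6n+7$ times plus the $O(n)$ clause gadget is exactly the justification the paper leaves implicit.
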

\begin{proof}
Although impractically large from the viewpoint of constructing (or sorting) a physical deck of cards,
$\theFormula$ can still be transcribed in $O(mn)$ time.
\end{proof}

%
%Moreover,
As previously observed,
on this occasion
the proof of Theorem~\ref{thm:variable-round-hard} follows directly:
\begin{proof}[Proof of Theorem~\ref{thm:variable-round-hard}]
Because $\theProblem$ is \NPHard/ and a strict subset of Prob.~\ref{prob:variable-round},
then Prob.~\ref{prob:variable-round} is also \NPHard/.
\end{proof}

%\clearpage

\makecommand{\alignedStaticRepeatedSort}{\designAlign^{k} \backslash \arbpile^{6k} \backslash \arbpile^{6k}}

\section{On Conjecture~\ref{conj:repeated-hard}}
%$\alignedStaticRepeatedSort$---a \NPHard/ static-table sort problem}
%\section{Sort on $\alignedStaticRepeatedSort$---a static table---is \NPHard/}

\label{sec:on-conj}

\makecommand{\staticRepeatedSort}{\arbpile^m \backslash \ldots \backslash \arbpile^m}

Finally
%in the prequel
we have placed variable-round Dealer's choice sort feasibility (Prob.~\ref{prob:variable-round})
in the \NPHard/ complexity class.
While interesting on its own,
this result falls short of fully addressing our motivating problem
of sorting a deck of cards on a physical table.
In such scenarios the problem variables include the deck of cards and the surface area of the table,
where
the latter does not typically change between rounds of shuffle.
Therefore,
we are most interested in the complexity of repeated-round sort feasibility (Prob.~\ref{prob:repeated-round}), \ie,
sortation by $\staticRepeatedSort$
with the number of rounds and the number of piles $m$ as problem parameters.

Unfortunately we leave it as an open problem, with conjecture only,
that repeated-round sort feasibility is \NPHard/.
However, in this section we consider a partial proof strategy
which adapts the machinery developed toward variable-round sort feasibility, so that
each round of shuffle has the same number of piles.
Sadly the strategy is incomplete, but
it is enough to demonstrate that
sort on $\alignedStaticRepeatedSort$
($\probAlias{5}$ of Table~\ref{table:all-problems})
is \NPHard/.
$\probAlias{5}$ is \emph{nearly} repeated---%
each round has $6k$ piles---%
but has the forced types $\designAlign^k$ in the first round.

To begin let us recall that 
we may reduce any instance of $\SAT$---%
a CNF formula 
of $m$ clauses on $n$ variables---%
to sort feasibility on $\designAlign \backslash \arbpile^{n+2} \backslash \arbpile^m$
in polynomial time
(Corollary~\ref{cor:reduction-2}).
It is fairly straightforward then to augment the second and third rounds of sort:
We may easily increase the number of piles in the second round
by introducing arbitrarily many unused dummy variables.
Similarly, we may increase the number of piles in the third round
by introducing ineffectual clauses, \eg,
adding additional copies of any clauses already in the formula.
All we need then is a strategy to augment the first round.

Suppose we widen the first round by repetition, \ie, using $\designAlign^k$ for some $k \geq 1$.
As we do this,
a key insight is that
$$\designAlign^k \backslash x_0 \ldots x_{n-1}
=
\underbrace{
\left( \designAlign \backslash x_0 \right) \ldots \left( \designAlign \backslash x_0 \right)
}_{
\text{$k$ times}
}
\quad \ldots \quad
\underbrace{
\left( \designAlign \backslash x_{n-1} \right) \ldots \left( \designAlign \backslash x_{n-1} \right)
}_{
\text{$k$ times}
}
.
$$
The transitions 
between the indicated $k$-measure blocks present the same sequence of test sites we have used previously, however,
we must now augment our change profile to consume the regular repetitions of each measure between those transitions.

We require one more low-level gadget for the task:
Let
$\gtPass \dfneq daaaadadadadddda$.
The purpose of this gadget is to consume exactly one measure from the $\bStart$ or $\bEnd$ beats
on any $\designAlign$-aligned chain,
and \emph{at least} one measure from $\bChainDisq$:
(Note $\gtDk$ suits the same purpose for $\bActd$ or $\bNActd$, and $\bClauseDisq$, respectively.)
\begin{lemma}
The following statements hold on all
$\pi 
%\in \designAlign \backslash x_0 \ldots x_{n} 
\in \designAlign \backslash \arbpile^{n+1}
$, for each measure $k \in [n]-1$:
\begin{itemize}

\item
$\gtPass \in \bStart \backslash k \to \bStart \backslash k+1$;

\item
$\gtPass \in \bChainDisq \backslash k \to \, \geq \bChainDisq \backslash k+1$; and 

\item
$\gtPass \in \bEnd \backslash k \to \bEnd \backslash k+1$.

\end{itemize}
\end{lemma}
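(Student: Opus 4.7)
My plan is to follow the same pattern established for every other low-level gadget in Section~\ref{sec:gadgets}: an exhaustive case analysis on the shape of the $k$-th measure, presented via trajectory plots in the style of Fig.~\ref{fig:start-clause} or Fig.~\ref{fig:force-q-traj}. The preliminary observation is that, by Lemmas~\ref{label:singleton-compose-lemma} and~\ref{lemma:concat-lemma}, every $\pi \in \designAlign \backslash \arbpile^{n+1}$ can be written $\designAlign \backslash x_0 \ldots x_n$, so the $k$-th measure has exactly one of two shapes: $\designAlign \backslash \qsym = \designAlign$ when $x_k = \qsym$, and $\designAlign \backslash \ssym = \designAlign^\dual$ when $x_k = \ssym$. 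Thus, for each of the three starting positions ($\bStart \backslash k$, $\bChainDisq \backslash k$, $\bEnd \backslash k$), only two trajectory classes need to be distinguished.

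I would then produce six trajectory plots, organized as three subfigures, one per starting position. Each plot frames one or two measures (two are needed whenever the trajectory can cross the measure boundary, as was done for $\gtPos$/$\gtNeg$ in Fig.~\ref{fig:pos-neg-traj}). With $|\gtPass| = 16$ symbols and the two candidate measure shapes, each column of each plot is determined by straightforward symbol-by-symbol simulation. The three claims are then read off by inspection:
\begin{enumerate}
\item from $\bStart \backslash k$, both classes land at $\bStart \backslash (k+1)$;
\item from $\bEnd \backslash k$, both classes land at $\bEnd \backslash (k+1)$;
\item from $\bChainDisq \backslash k$, both classes land at or beyond $\bChainDisq \backslash (k+1)$.
\end{enumerate}

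For the $\bChainDisq$ case, the conclusion is only a lower bound on the ending beat, so the shape of the adjacent measure does not matter for the weaker claim: any trajectory landing at least at $\bChainDisq \backslash (k+1)$ on both of the two candidate continuations extends by Lemma~\ref{lemma:no-backtrack} to the same conclusion on any $\pi$ in the family. In particular, only local information about the $k$-th and $(k+1)$-th measures is consumed by the lemma.

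The main obstacle is purely clerical: tracing a 16-symbol input on six trajectory classes without arithmetic slip. Since no new conceptual ingredient beyond what already appears in the proofs of Lemmas~\ref{lemma:start-clause}, \ref{lemma:lit-gadget}, \ref{lemma:endlit-gadget}, and~\ref{lemma:traj-force-q} is needed, I would rely on the same computer-assisted check used to verify Lemma~\ref{lemma:align-singleton} to eliminate transcription errors, and then draw the final plots for inclusion in the figure alongside a one-sentence proof pointing to them, exactly as done throughout Section~\ref{sec:gadgets}.
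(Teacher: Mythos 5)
Your proposal matches the paper's proof exactly: the paper simply states that the proof is exhaustive and can be verified by the trajectory diagrams in Fig.~\ref{fig:traj-pass}, which show $\gtPass$ from the three starting positions over the two possible measure shapes $\designAlign\backslash\qsym$ and $\designAlign\backslash\ssym$. Your additional observations about word length, measure classes, and the non-backtracking argument for the $\bChainDisq$ case are all consistent with the paper's conventions for these low-level gadget lemmas.
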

\begin{proof}
The proof is exhaustive and can be verified by the diagrams in Fig.~\ref{fig:traj-pass}.
\end{proof}

\begin{figure}[h!]
	\centering
	\begin{subfigure}[t]{.3\linewidth}
		\includegraphics[width=\linewidth]{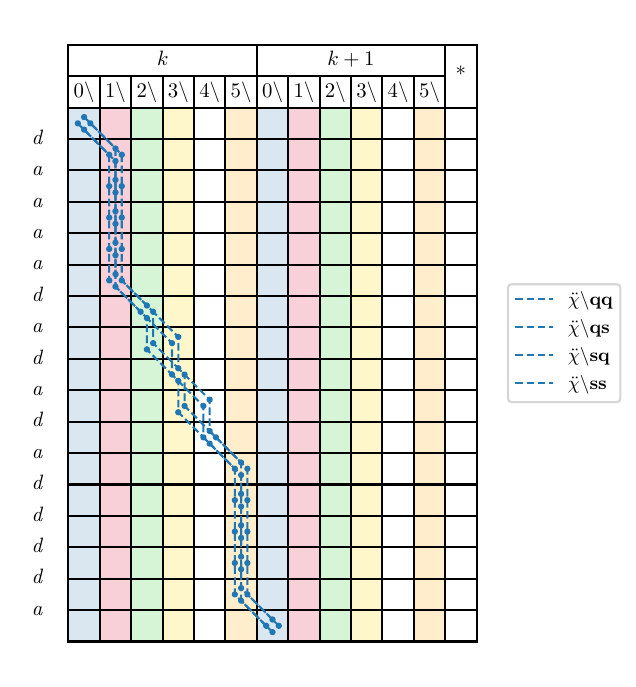}
		\caption{$\gtPass$ from $\bStart \backslash k$}
	\end{subfigure}
	\begin{subfigure}[t]{.3\linewidth}
		\includegraphics[width=\linewidth]{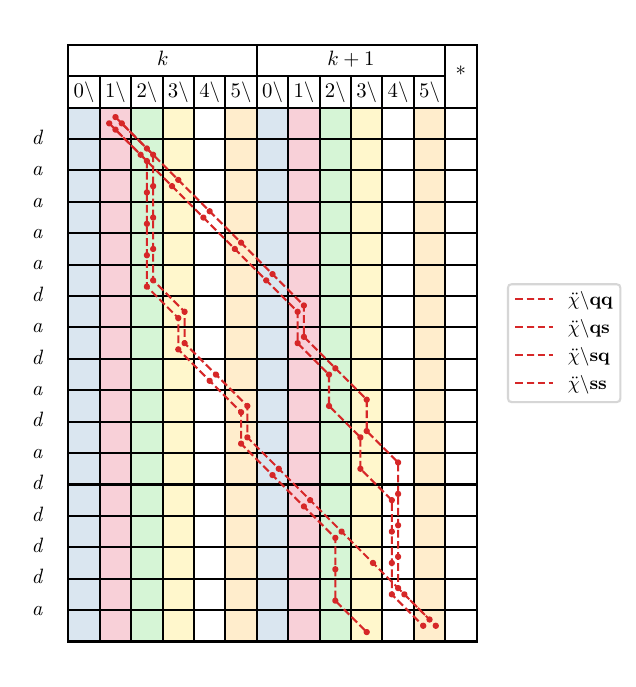}
		\caption{$\gtPass$ from $\bChainDisq \backslash k$}
	\end{subfigure}
	\begin{subfigure}[t]{.3\linewidth}
		\includegraphics[width=\linewidth]{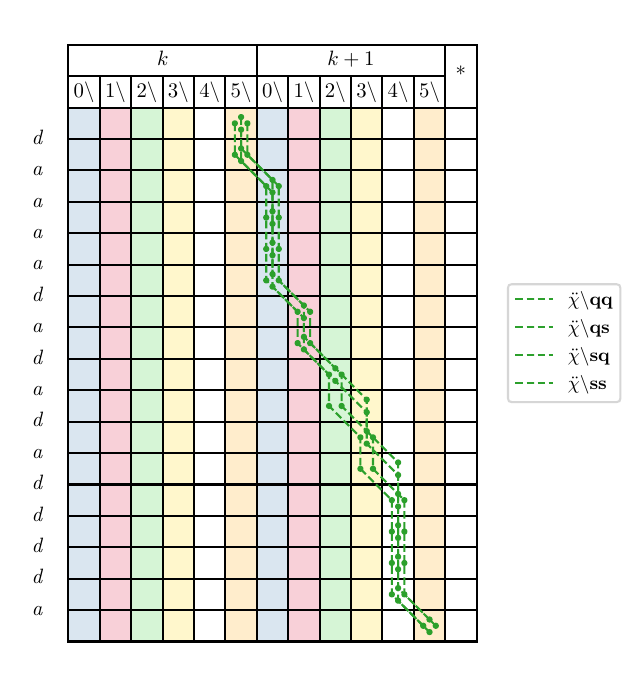}
		\caption{$\gtPass$ from $\bEnd \backslash k$}
	\end{subfigure}
\caption{
Trajectories of $\gtPass$
from three starting positions
on measure $k$
of the chain automata in $\designAlign \backslash 
%\arbpile^{n+1}$.
\arbpile^*$.
}
\label{fig:traj-pass}
\end{figure}

\begin{lemma}
\label{lemma:red4-language}

\makecommand{\profile}{\delta}
Given a CNF formula $\clause$,
let the change profile $\profile$
be formed from $\formSeg_{\Rom{2}}(\clause)$
by the following substitutions:
\begin{itemize}

\item We replace each $\litSeg_i(\clause_j)$ with ${\gtDk}^{k-1} \, \litSeg_i(\phi_j)$;

\item
and each $\lastLitSeg_i(\clause_j)$ with ${\gtDk}^{k-1} \, \lastLitSeg_i(\clause_j) \, {\gtPass}^{k-1}$.

\item We replace each instance of $\gtQ$ with $\gtQ \, {\gtPass}^{k-1}$.

\item (We leave instances of $\gtStart$ and $\gtNext$ alone.)

\end{itemize}
A chain $\pi \in \designAlign^k \backslash \arbpile^{n+2} \backslash \arbpile^m$ accepts $\profile$
if and only if
$\pi = \designAlign^k \backslash \qsym \assignEmbed(\assignvec) \qsym \backslash \qsym^m$ for some $\assignvec \in \SAT(\clause)$.
\end{lemma}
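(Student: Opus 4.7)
The plan is to show that the four substitutions effectively lift the measure-by-measure analysis of Section~\ref{sec:gadgets} to a block-by-block analysis on the widened chain $\designAlign^k \backslash A \backslash B$, where $A \in \arbpile^{n+2}$ and $B \in \arbpile^m$. The structural observation behind this plan, which follows from Lemma~\ref{lemma:concat-lemma}, is that $\designAlign^k \backslash A = (\designAlign \backslash a_0)^k (\designAlign \backslash a_1)^k \ldots$ when $A = a_0 a_1 \ldots$, so the chain partitions into \emph{blocks} of $k$ identical $\designAlign$-aligned consecutive measures, one block per symbol of $A \backslash B$.

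First I would prove block-level analogues of Lemmas~\ref{lemma:lit-gadget}, \ref{lemma:endlit-gadget}, and~\ref{lemma:traj-force-q}, stating that each substituted gadget traverses one full block in the way the original gadget traverses one measure. For example, the analogue of Lemma~\ref{lemma:lit-gadget} for $\gtDk^{k-1} \, \litSeg_i$ takes $\bActd\backslash jk$ to $\bActd\backslash(j+1)k$, $\bNActd\backslash jk$ to either $\bActd\backslash(j+1)k$ or $\bNActd\backslash(j+1)k$ under the conditions of the original lemma, and $\bClauseDisq\backslash jk$ to $\geq \bClauseDisq\backslash(j+1)k$. These follow by iterating the bullets of Lemma~\ref{lemma:lit-gadget} and the preceding $\gtPass$ lemma: within any block of $k$ identically-aligned measures, a padding $\gtDk^{k-1}$ propagates the activation-bearing states ($\bActd$, $\bNActd$, $\bClauseDisq$) across the first $k-1$ measures before the original gadget is applied to the last measure, and a trailing $\gtPass^{k-1}$ propagates the resulting terminal states ($\bStart$, $\bEnd$, $\bChainDisq$) across $k-1$ measures of the following block.

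With these block-level analogues in hand, the proofs of Lemmas~\ref{lemma:formula2} and~\ref{lemma:red2-language} transpose essentially verbatim, with ``measures'' replaced by ``blocks'' throughout. The critical reuse of Lemma~\ref{lemma:equivs} is to argue without loss of generality that the first symbol of $B$ is $\qsym$; then the substituted $\gtQ$ gadgets ensure that every block associated with $B$ is of the $\designAlign \backslash \qsym$ form, and the block-level formula analogue then forces $A$ to have the form $\qsym \, \assignEmbed(\assignvec) \, \qsym$ for some $\assignvec \in \SAT(\clause)$. The main obstacle is not conceptual but entirely notational, amounting to extending the coordinate scheme to three levels (beat within measure, copy within block, block) and tracking trajectories consistently across the padding for each substitution; since all $k$ measures within any block share alignment, each propagation step is a routine iteration of a single-measure lemma.
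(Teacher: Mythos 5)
Your proposal is correct and follows exactly the route the paper intends: the paper leaves this lemma as an ``(involved) exercise,'' describing its proof as a straightforward but tedious alteration of the developments leading to Lemma~\ref{lemma:red2-language}, and your block-by-block lifting---using $\designAlign^k \backslash A = (\designAlign\backslash a_0)^k(\designAlign\backslash a_1)^k\cdots$ together with ${\gtDk}^{k-1}$ and ${\gtPass}^{k-1}$ padding to traverse the $k-1$ redundant copies of each measure within a block---is precisely that alteration. The bookkeeping checks out (each substituted gadget advances exactly $k$ measures per encoded symbol, so the accepting trajectory ends on the final beat of the final measure while a disqualified trajectory overruns into the terminal state), and the reuse of Lemma~\ref{lemma:equivs} to normalize the third-round factor to begin with $\qsym$ transposes verbatim.
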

The proof is by a straightforward but rather tedious alteration of our previous series of developments
toward Lemma~\ref{lemma:red2-language}, and we leave it as an (involved) exercise for the reader.

Once a sufficiently large number $k$ of repetitions of $\designAlign$ is chosen
in the first
%\emph{alignment}
round, then
we can augment the second
%\emph{assignment} 
and third 
%\emph{collection} 
rounds as previously described
to obtain three rounds of the same number of piles.
Therefore, 
Lemma~\ref{lemma:red4-language} leads in the usual way to a proof of the following result.
\begin{lemma}
$\probAlias{5}$ is \NPHard/.
\end{lemma}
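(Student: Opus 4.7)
The plan is to reduce $\SAT$ to the chain-variant $\probAlias{5}'$ by padding the reduction of Lemma~\ref{lemma:red4-language} so that all three rounds of shuffle have width exactly $6k$. Given an arbitrary CNF formula $\clause$ with $n$ variables and $m$ clauses, I would first set $k \dfneq \max\bigl(\lceil (n+2)/6 \rceil, \lceil m/6 \rceil\bigr)$, which is at most linear in $|\clause|$ while ensuring $6k \geq n+2$ and $6k \geq m$ simultaneously. I would then form an augmented formula $\clause'$ by (i) adjoining $6k - 2 - n$ dummy variables that do not appear in any clause, raising the variable count to $n' = 6k-2$, and (ii) adjoining $6k - m$ copies of an arbitrary existing clause (say $\clause_1$), raising the clause count to $m' = 6k$. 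Neither step alters the solution set, so $\clause'$ is satisfiable if and only if $\clause$ is.

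Next, I would apply Lemma~\ref{lemma:red4-language} to $\clause'$ with this $k$ to obtain a change profile $\profile$. Because $n' + 2 = 6k$ and $m' = 6k$, the permissible schedule set $\designAlign^k \backslash \arbpile^{n'+2} \backslash \arbpile^{m'}$ coincides with $\designAlign^k \backslash \arbpile^{6k} \backslash \arbpile^{6k}$, so the constructed question $\questionCons_\chainScript\bigl(\profile, \designAlign^k \backslash \arbpile^{6k} \backslash \arbpile^{6k}\bigr)$ lies in $\probAlias{5}'$ and, by the lemma, has the same answer as the input $\SAT$ instance. A routine count shows that each gadget substituted into $\profile$ has length $O(k)$ and that there are $O(n' m') = O(k^2)$ such substitutions, so the entire reduction runs in time polynomial in $|\clause|$. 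Co-reducibility between $\probAlias{5}'$ and $\probAlias{5}$---via exhibiting any permutation with the prescribed change profile---then transfers the hardness from $\probAlias{5}'$ to $\probAlias{5}$ itself.

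The main obstacle is verifying that the padding does not disturb the trajectory analysis behind Lemma~\ref{lemma:red4-language}. For the dummy variables, the literal-slot $\litSeg_i(\clause_j)$ for any variable appearing in no clause is $\gtDk$ by definition of the clause gadget, and by Lemma~\ref{lemma:lit-gadget} the word $\gtDk$ merely propagates the current activation state while imposing no constraint on the type of the associated second-round pile; hence the dealer remains free to set each dummy-variable pile to $\qsym$, as the acceptance form $\qsym \assignEmbed(\assignvec) \qsym$ of Lemma~\ref{lemma:red4-language} demands. For the duplicate clauses, any satisfying assignment of the original formula satisfies each duplicate identically, so the uniformly-$\qsym$ third-round pile types required by the lemma remain simultaneously consistent across all $m'$ clause slots. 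Modulo these sanity checks, which are routine in light of the gadget machinery of Section~\ref{sec:gadgets}, the padded reduction inherits correctness directly from Lemma~\ref{lemma:red4-language}, establishing that $\probAlias{5}$ is \NPHard/.
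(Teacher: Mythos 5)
Your proposal is correct and follows essentially the same route the paper sketches: apply Lemma~\ref{lemma:red4-language} after padding the formula with unused dummy variables (to widen round two) and duplicated clauses (to widen round three), choosing $k$ so that all three rounds have exactly $6k$ piles, then transfer hardness from the chain problem to the sort problem via the standard co-reduction. Your explicit choice of $k$ and the observation that padding is harmless (dummy variables yield unconstraining $\gtDk$ slots, duplicate clauses preserve the solution set) are exactly the details the paper leaves implicit.
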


Conj.~\ref{conj:repeated-hard}
could perhaps be proved from this point
by an approach
similar to that of Section~\ref{sec:variable-round-feasibility-complexity}
to remove the alignment constraint.
However,
the author has not yet discovered any viable combination of these two approaches.
The author suspects that a successful blending of the techniques might require
each type of gadget to incorporate its \emph{own} penalization of unaligned chains,
rather than relying on a penalizing prefix prepended to each clause.

Another approach the author suspects holds promise would be to comprehend a pattern
for the key type sequence $\designAlign$---\eg, some partition into $\designAlign_1 \ldots \designAlign_n$---%
and likewise for each word gadget
of our $\questionSet_\Rom{3}$ reduction,
so that
it may be transformed
into a reduction
with sufficiently large first round capacity
by repeating certain elements of the pattern.

\section{Conclusions}
\label{sec:conclusion}

We have brought some amount of closure
to a study of the pile shuffle and its capabilities as a sorting device---begun in~\cite{treleaven2025sortingpermutationspileshuffle}---%
in the setting where a dealer may use a combination of face-up and face-down piles in each round of shuffle.
%with humble beginnings in the mere act of shuffling of deck of cards on a table---%
%had snowballed into a curiously deep set of computer science questions.
%
Building on top of a mathematical framework developed in~\cite{treleaven2025sortingpermutationspileshuffle},
for the analysis of shuffle in multiple sequential rounds,
we have proved the existence of \NPHard/ sort feasibility problems,
including (notably) variants like Prob.~\ref{prob:variable-round}
which allow arbitrary assignment of pile facings in all rounds of shuffle.

Our results are by a sequence of reductions
from the famously \NPHard/ Boolean satisfiability problem (SAT)
to a series of pile shuffle sort feasibility problems.
They leverage
a novel framework
which equates the certificates of sort feasibility
to members of a particular class of chain-like deterministic finite automata.
The correspondence allows us to reframe questions of sort feasibility 
in terms of the existence of solution-encoding DFAs that accept the encoding of a CNF formula as input.
Our proof strategy 
%of reducing $\SAT$ to such accepting chain problems
is vaguely reminiscent of genetic coding, where
the change profile of a deck permutation (a genetic code)
admits only specific chains (synthesized proteins).

We leave as an open question whether Prob.~\ref{prob:repeated-round}---%
sort feasibility with a uniform pile capacity bound across all rounds of dealer choice shuffle---%
is \NPHard/.
The author's conjecture is that it \emph{is} hard, and we have discussed potential directions of search for proof.

At the end of the day,
pile shuffle sort feasibility remains tractable in the most common homogeneous cases,
and moreover,
the sorting capacity of a given pile capacity bound scales exponentially with the number of rounds of shuffle.

\emph{Acknowledgements}:
The author would like to acknowledge Dr. Joshua Bialkowski for posing a surprisingly challenging question, a long time ago.

\bibliographystyle{unsrt}
\bibliography{main}

\appendix

\clearpage
\section{Background: Decision Problems and \NP/}
\label{sec:complexity}

In computational complexity theory, a \emph{decision problem}
is a set of questions with yes or no answers;
for our purposes, 
a set $\questionSet$ of possible questions, and
an implicit subset $\questionSet^+ \subseteq \questionSet$
of those which are in the affirmative.
An algorithm \emph{decides} problem $\probVar$ if
for every input $\questionVar\in\questionSet$,
it correctly returns as output whether $\questionVar\in\questionSet^+$.

If two problems,
$\probVar_1$ and $\probVar_2$,
are related by a function $f: \questionSet_1 \to \questionSet_2$,
such that
$\questionVar \in \questionSet_1^+ \iff f(\questionVar) \in \questionSet_2^+$,
then $f$ \emph{reduces} $\probVar_1$ to $\probVar_2$.
That is,
we can obtain the answer to any $\questionVar \in \questionSet_1$
by computing and deciding $f(\questionVar)$ instead.
If $f$ is computable in polynomial time---%
\ie, time that is bounded by a polynomial function of the transcription length of the question---%
then
it is a polynomial-time reduction of $\probVar_1$ to $\probVar_2$.

\NPComp/ is an important class of decision problems.
It is one of the ``easiest'' classes of problems that is considered ``hard''.
It is the set all problems that are in both \NP/ (upper bound) and \NPHard/ (lower bound).
\NP/ is the set of all problems admitting proofs, or \emph{certificates}, which can be verified in polynomial time.
\NPHard/ is the set of problems that all problems in \NP/ can be reduced to in polynomial time.
It is well-known therefore that:
\begin{lemma}
If problem $\probVar_1$ is \NPHard/
and
there is a polynomial-time reduction of $\probVar_1$ to $\probVar_2$,
then $\probVar_2$ is \NPHard/ also.
\end{lemma}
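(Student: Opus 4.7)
The plan is to prove the lemma by composition of polynomial-time reductions, which is the standard argument establishing that polynomial-time reducibility is transitive. The approach exploits the fact that \NPHard/ness of $\probVar_1$ already gives us, for every $\probVar_0 \in \NP/$, a polynomial-time reduction from $\probVar_0$ to $\probVar_1$; chaining this with the hypothesized reduction from $\probVar_1$ to $\probVar_2$ will produce a polynomial-time reduction from $\probVar_0$ to $\probVar_2$.

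First I would fix an arbitrary problem $\probVar_0 \in \NP/$, with question set $\questionSet_0$ and affirmative subset $\questionSet_0^+$. Since $\probVar_1$ is \NPHard/, there exists a polynomial-time computable function $g: \questionSet_0 \to \questionSet_1$ such that $\questionVar \in \questionSet_0^+ \iff g(\questionVar) \in \questionSet_1^+$. By hypothesis, there is a polynomial-time computable function $f: \questionSet_1 \to \questionSet_2$ such that $\questionVar' \in \questionSet_1^+ \iff f(\questionVar') \in \questionSet_2^+$. Composing, $h \dfneq f(g)$ satisfies $\questionVar \in \questionSet_0^+ \iff h(\questionVar) \in \questionSet_2^+$, so $h$ is a reduction from $\probVar_0$ to $\probVar_2$.

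The one step requiring any care is verifying that $h$ runs in polynomial time. The key observation is that since $g$ is computable in time bounded by some polynomial $p_1(|\questionVar|)$, the transcription length $|g(\questionVar)|$ is also bounded by $p_1(|\questionVar|)$ (no computation can output more symbols than the number of steps it takes). Then $f$ applied to $g(\questionVar)$ runs in time bounded by $p_2(|g(\questionVar)|) \leq p_2(p_1(|\questionVar|))$, where $p_2$ is a polynomial bounding $f$'s runtime. Since compositions of polynomials are polynomials, the total runtime for computing $h(\questionVar)$ is polynomial in $|\questionVar|$. Because $\probVar_0$ was an arbitrary element of \NP/, this shows every problem in \NP/ reduces to $\probVar_2$ in polynomial time, \ie, $\probVar_2$ is \NPHard/.

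There is no real obstacle in this proof; it is essentially a bookkeeping exercise in composing polynomial bounds. The only subtlety worth flagging explicitly is the size-versus-time argument noted above, which ensures that the outer reduction $f$ does not blow up when applied to the (potentially larger) intermediate instance $g(\questionVar)$.
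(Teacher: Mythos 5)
Your proof is correct: it is the standard transitivity-of-polynomial-reductions argument, including the one genuinely necessary observation that $|g(\questionVar)|$ is bounded by $g$'s running time, so the composed reduction remains polynomial. The paper states this lemma without proof as a well-known fact, and your argument is exactly the canonical one it implicitly relies on.
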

%\begin{proof}
%Since $\probVar_1 \in$~\NPHard/,
%by definition,
%every problem $\probVar' \in$~\NP/ has a polynomial-time reduction $f'$ to $\probVar_1$.
%Then $f(f')$ is a polynomial-time reduction of $\probVar'$ to $\probVar_2$.
%\end{proof}
%
\NPHard/ is considered hard in the sense that
there is currently no known algorithm for any \NPHard/ problem that produces its answer in polynomial time.

\section{Derivation of the ``virtual shuffle'' (Lemma~\ref{prop:hetero-multi-sort})}
\label{sec:multi-round-derivation}

\makecommand{\index}{\phase}
%\makecommand{\itOne}{\designpile_\index}
%\makecommand{\itTwo}{\vrpile_{\index+1}}
\makecommand{\itOne}{r}
\makecommand{\itTwo}{q}

We summarize the derivation of equations for Lemma~\ref{prop:hetero-multi-sort},
originally presented in~\cite{treleaven2025sortingpermutationspileshuffle}.

Suppose a multi-round pile shuffle $(\typeSchedules, \schedules)$ is in $\numphase \geq 0$ rounds,
with
type schedule $\typeSchedules = \left( \pileType_1, \ldots, \pileType_\numphase \right)$ and
pile assignments
$\schedules = \left( \pile_1, \ldots, \pile_\numphase \right)$.
Let $\numpile_\phase$ denote,
for each $\phase \in [\numphase]$,
the number of piles available  in round $\phase$.
We do not assume the number of piles is the same in each round, however,
we do assume w.l.g. that
each $\pile_\phase$ takes value on the range $[\numpile_\phase]-1$.

Let
$\reverseFn_m(y) \dfneq -y + m - 1$ denote
the interval-reversing function, mapping the interval $[\numpile]-1$ to itself in reverse order, and
let $\reverseFn_m^n$ denote the composition of $\reverseFn_m$ $n$ times.
(Although it has a binary orbit.)
%
%it is easy to check $\reverseFn_m^{2k}$ is identity for all $k \geq 0$, and
%$\reverseFn_m^{2k+1} = \reverseFn_m$.

%
For each $\phase \in [\numphase]$,
let
\begin{equation}
\label{eq:pile-type-indicator}
\pileTypeInd_\phase(\pile)
\dfneq \left[ \pileType_\phase(\pile) = \ssym \right]
= \begin{cases}
0	& \pileType_\phase(\pile) = \qsym
\\
1	& \pileType_\phase(\pile) = \ssym
,
\end{cases}
\end{equation}
and
let $\{ \vrPileTypeFn_\phase \}_{\phase=1}^{\numphase + 1}$
be defined
by the backward recurrence starting from
$\vrPileTypeFn_{\numphase + 1} = 0$
with
\begin{varscope}
\makecommand{\realParity}{ \vrPileTypeFn_{\index+1}(\itTwo) }
\begin{align}
\label{eq:type-indicator-recurrence}
\vrPileTypeFn_{\index}(
	\itOne
	+
	\numpile_\index \itTwo
)
&=
\pileTypeFn_\index\left(
	\reverseFn_{\numpile_\index}^{\realParity}
	(\itOne)
\right)
%, \qquad 1 \leq \phase < \numphase - 1
\xor
\realParity
,
\qquad
\itOne \in [\numpile_\index] - 1
, \,
\itTwo \in [\vrNumPile_{\phase+1}] - 1
,
\end{align}
\end{varscope}%
where
$$ %\[
\vrNumPile_\phase \dfneq \prod_{\phase \leq \phase' \leq \numphase} \numpile_{\phase'}
,
\qquad 1 \leq \phase \leq \numphase + 1
;
$$
the operator $\xor$ denotes modulo-$2$ addition.
Note that
$\vrNumPile_{\numphase + 1} = 1$,
and
\eqref{eq:type-indicator-recurrence} defines
$\vrPileTypeFn_\phase$
over the co-domain $[\vrNumPile_\phase] - 1$
for each $\phase \in [\numphase]$.

The types assignment $\vrPileType_1$
of our virtual shuffle
is the assignment of types to $\vrNumPile_1$ virtual piles which satisfies
$$
\vrPileTypeFn_1(\pile)
= \left[ \vrPileType_1(\pile) = \ssym \right]
,
\qquad
\pile \in [\vrNumPile_1] - 1
.
$$
$\vrpile_1$ in turn falls out of the following reversible backward recurrence:
\begin{align}
\begin{cases}
\vrpile_\numphase = \pile_\numphase,
&
\\
\designpile_\phase(s)
=
\reverseFn_{\numpile_\phase}^{
	\vrPileTypeFn_{\phase+1}(\vrpile_{\phase+1}(s))
}(\pile_\phase(s))
,
&
1 \leq \phase \leq \numphase - 1, \, s \in [n],
\\
\vrpile_\phase
=
	\designpile_\phase
	+
	\numpile_\phase \vrpile_{\phase+1}
	,
&
1 \leq \phase \leq \numphase-1
.
\end{cases}
\label{eq:hetero-multi-embedding}
\end{align}
Each $\vrpile_\phase$ takes value on the range $[\vrNumPile_\phase]-1$.

\section{Trajectory plots from Section~\ref{sec:gadgets}}
\label{sec:more-traj}

\begin{figure}[h!]
	\makecommand{\theFrac}{.48}
	\centering
	
	\begin{subfigure}[t]{\theFrac\linewidth}
		\includegraphics[width=\linewidth]{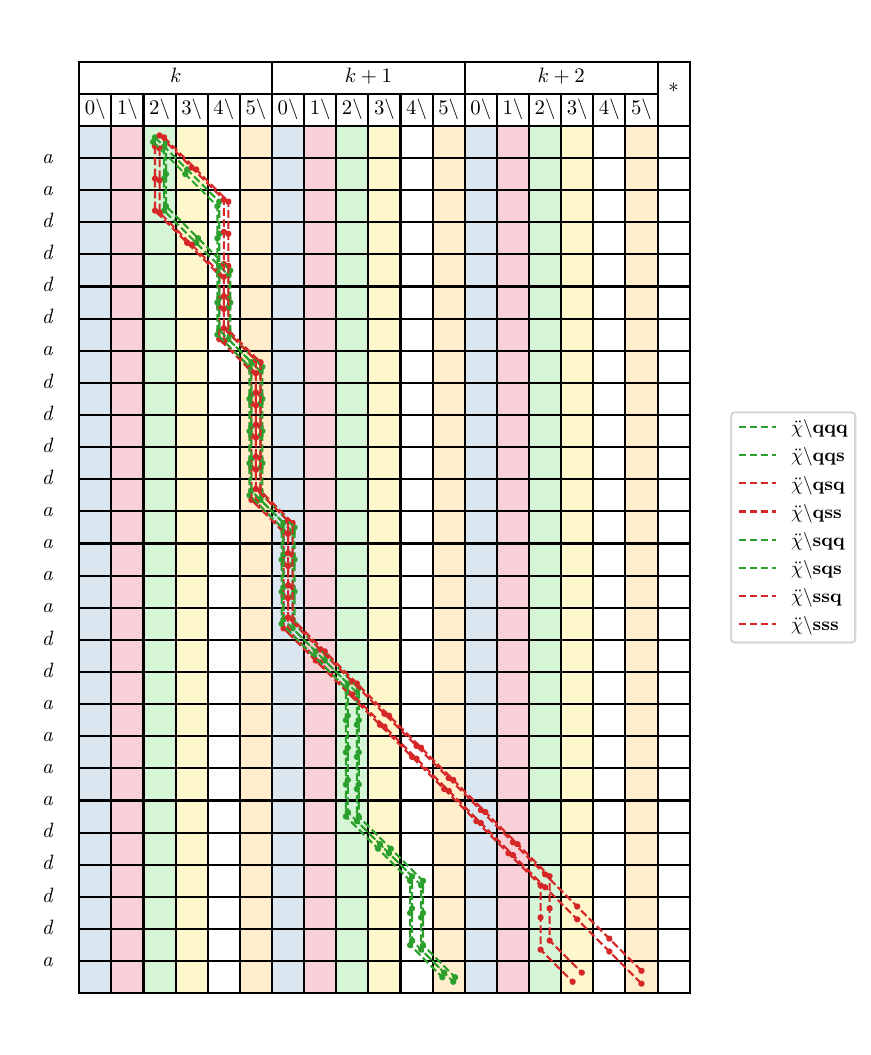}
		\caption{$\gtLastPos$ starting activated}
	\end{subfigure}
	\begin{subfigure}[t]{\theFrac\linewidth}
		\includegraphics[width=\linewidth]{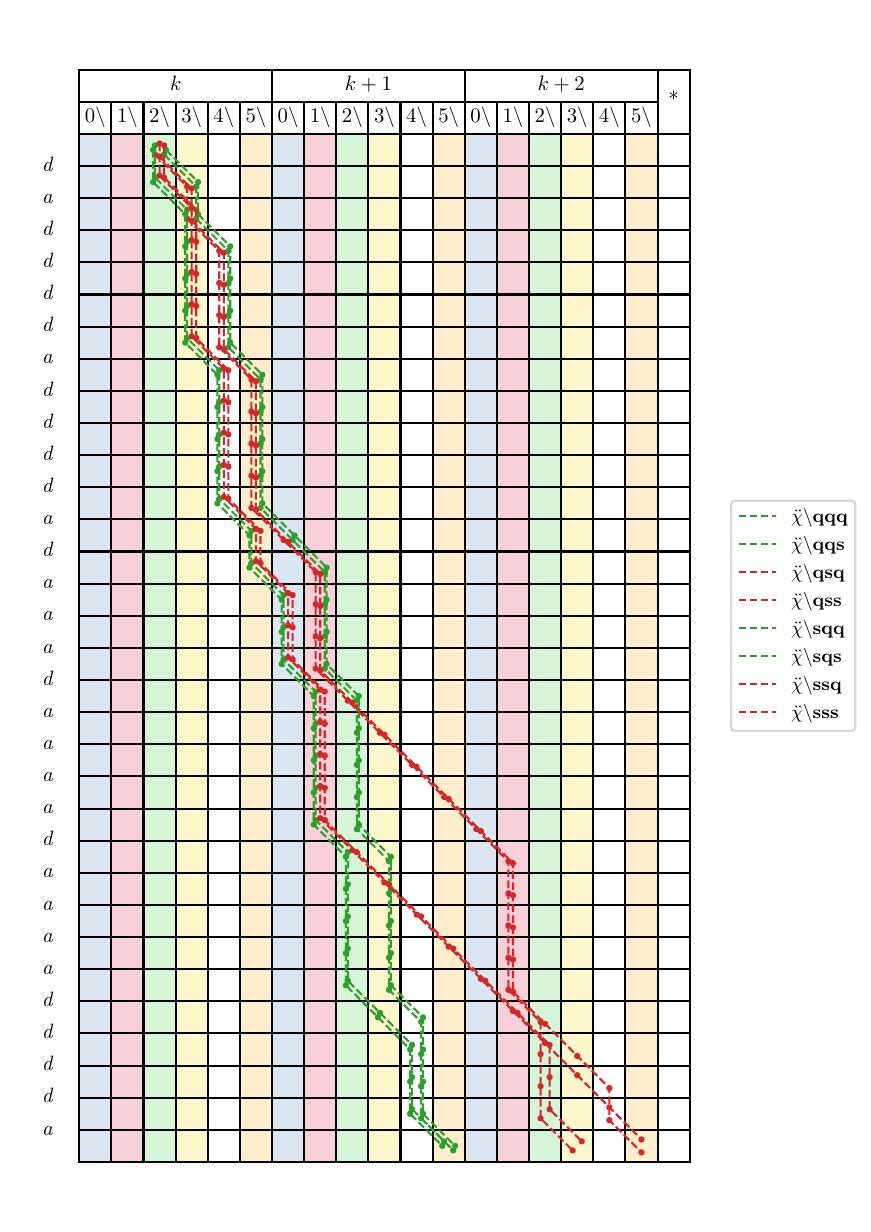}
		\caption{$\gtLastNeg$ starting activated}
	\end{subfigure}
	\begin{subfigure}[t]{\theFrac\linewidth}
		\includegraphics[width=\linewidth]{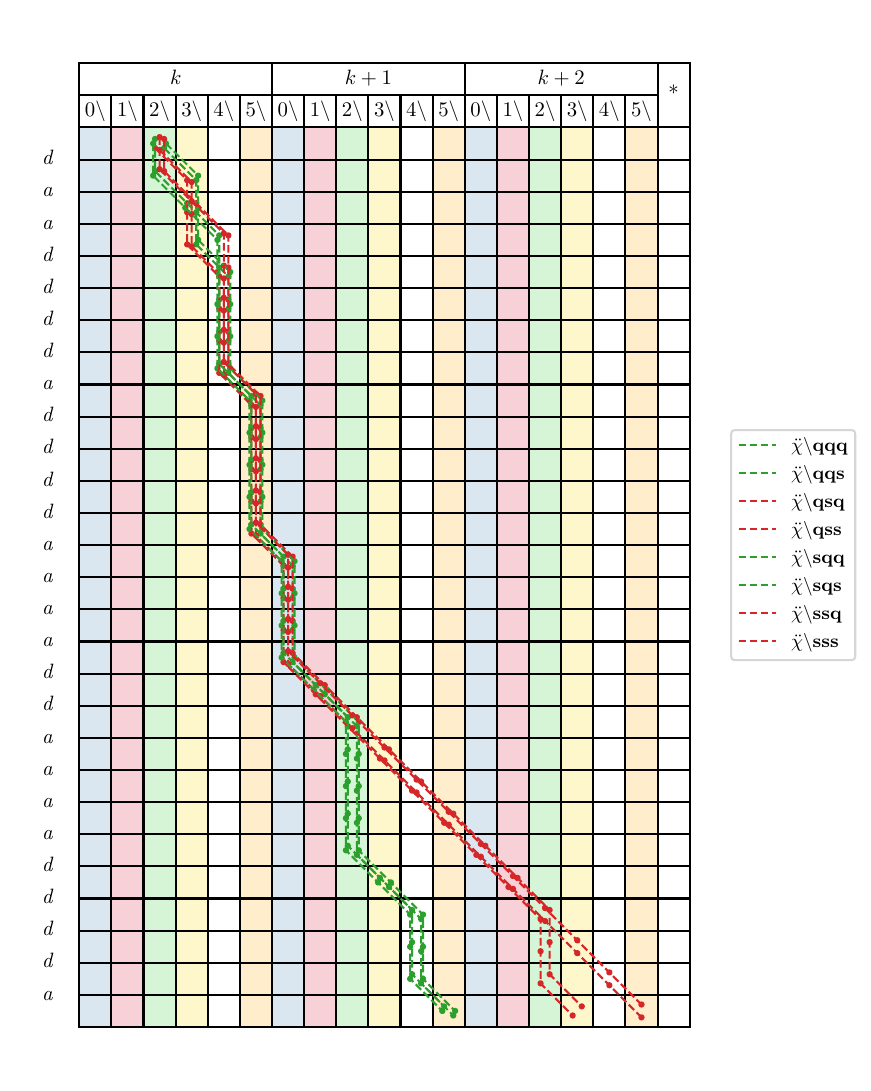}
		\caption{$\gtLastDk$ starting activated}
	\end{subfigure}

\caption{Trajectories of $\gtLastPos$, $\gtLastNeg$, and $\gtLastDk$
starting from an activated state
on the $k$-th measure
of the chain automata in $\designAlign \backslash \arbpile^*$.}
\label{fig:endtest-actd}
\end{figure}

\begin{figure}[h!]
	\makecommand{\theFrac}{.48}
	\centering

	\begin{subfigure}[t]{\theFrac\linewidth}
		\includegraphics[width=\linewidth]{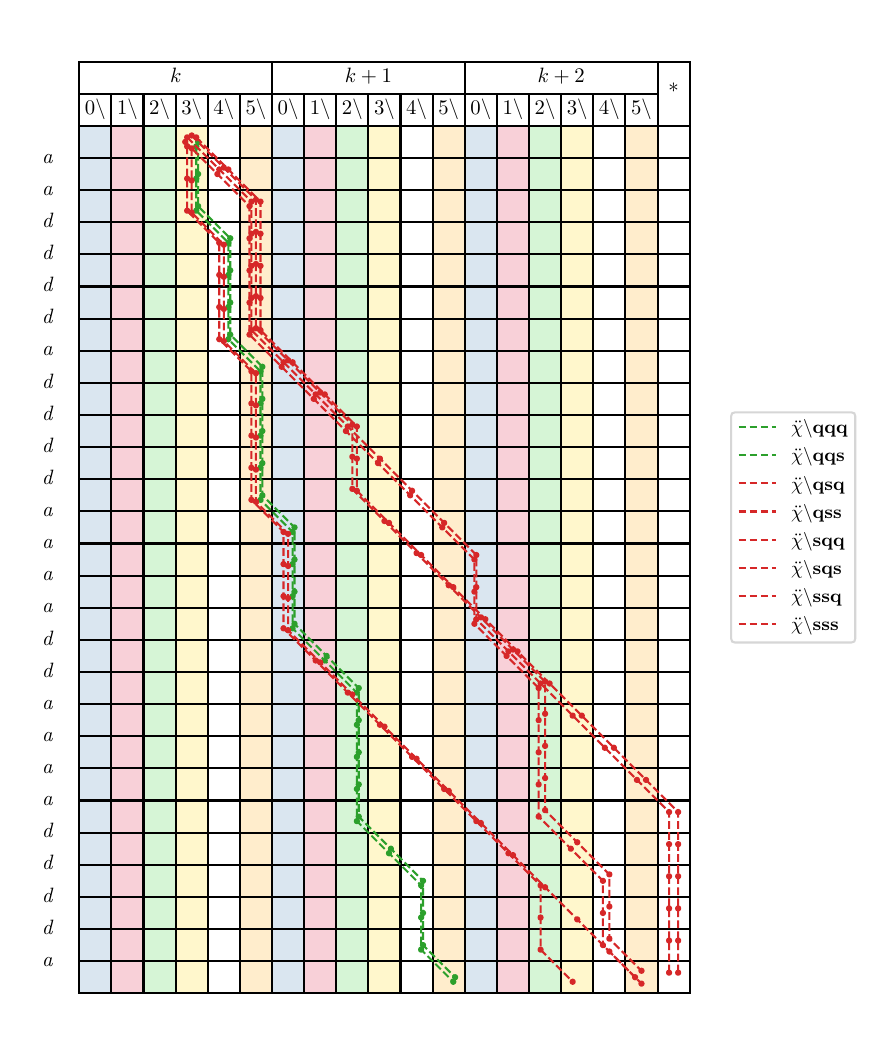}
		\caption{$\gtLastPos$ starting not activated}
	\end{subfigure}
	\begin{subfigure}[t]{\theFrac\linewidth}
		\includegraphics[width=\linewidth]{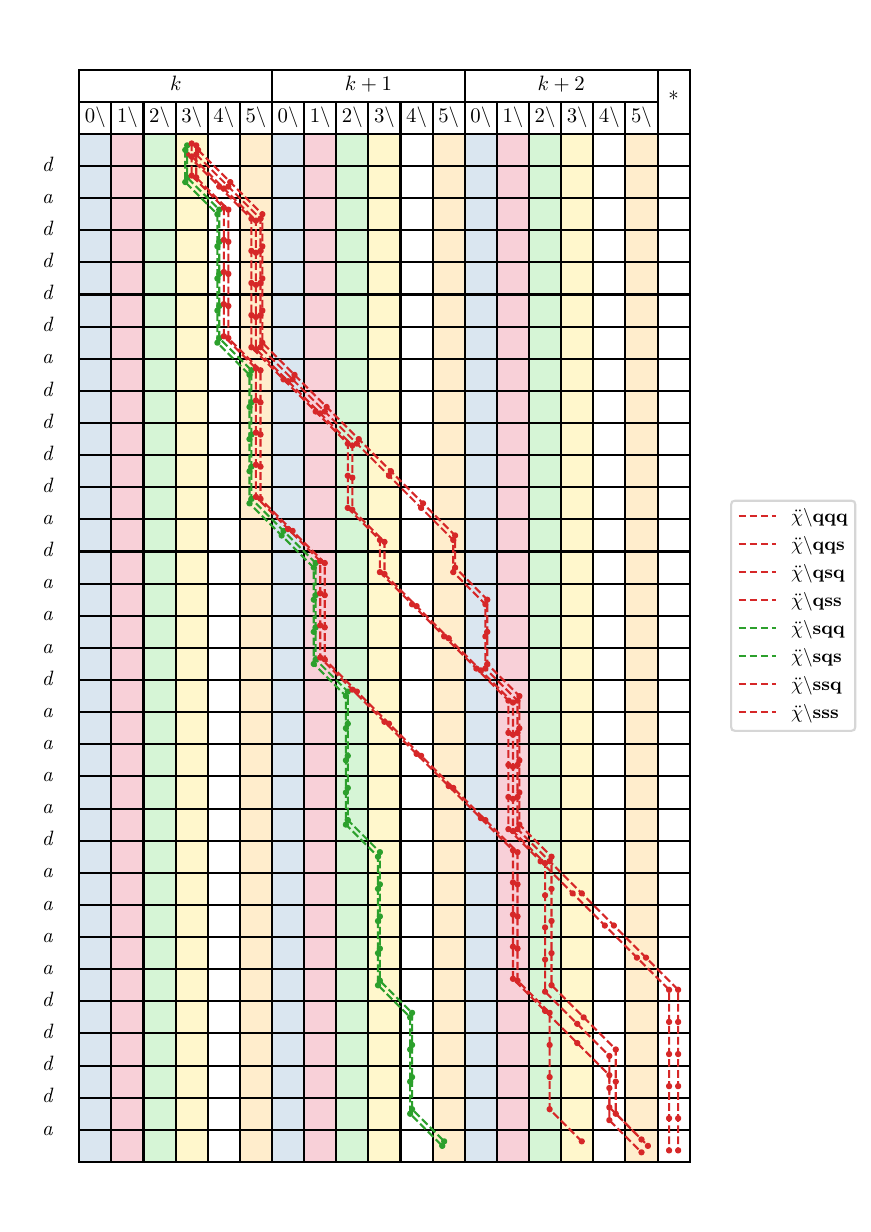}
		\caption{$\gtLastNeg$ starting not activated}
	\end{subfigure}
	\begin{subfigure}[t]{\theFrac\linewidth}
		\includegraphics[width=\linewidth]{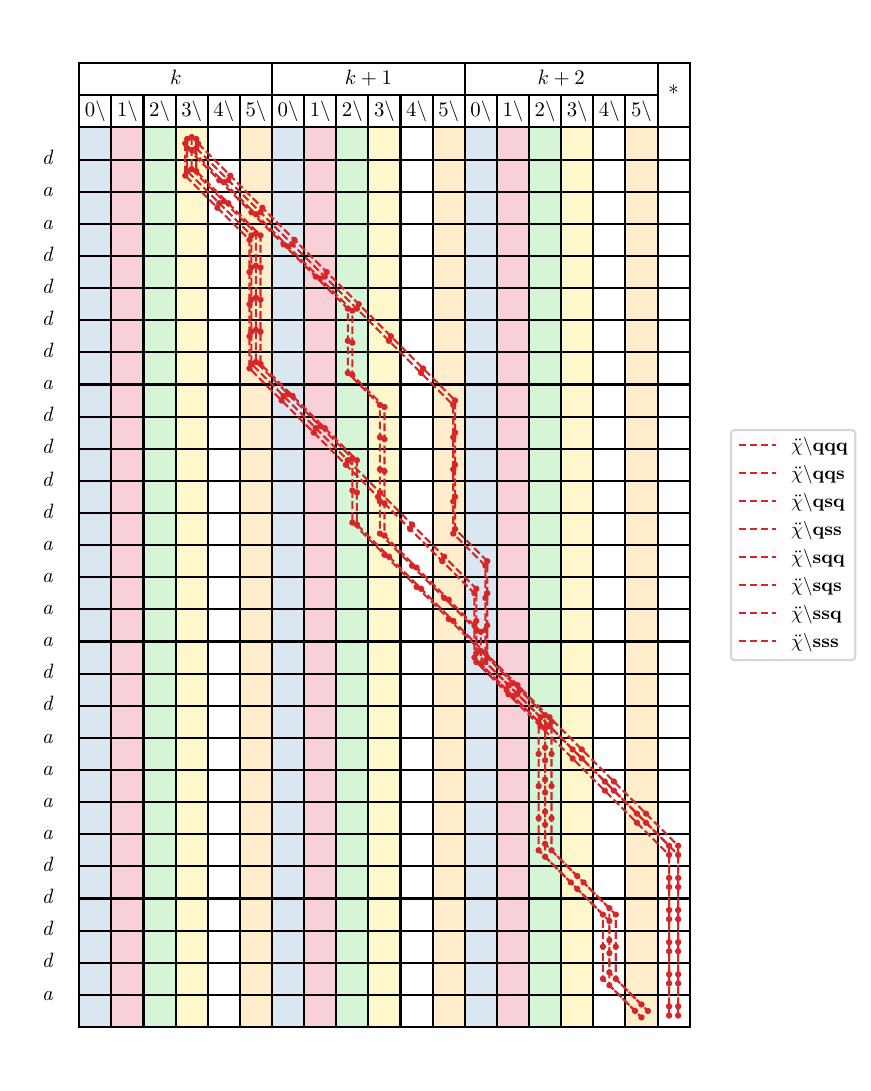}
		\caption{$\gtLastDk$ starting not activated}
	\end{subfigure}

\caption{Trajectories of $\gtLastPos$, $\gtLastNeg$, and $\gtLastDk$
starting from a not-activated state
on the $k$-th measure
of the chain automata in $\designAlign \backslash \arbpile^*$.}
\label{fig:endtest-nactd}
\end{figure}

\begin{figure}[h!]
	\makecommand{\theFrac}{.48}
	\centering

	\begin{subfigure}[t]{\theFrac\linewidth}
		\includegraphics[width=\linewidth]{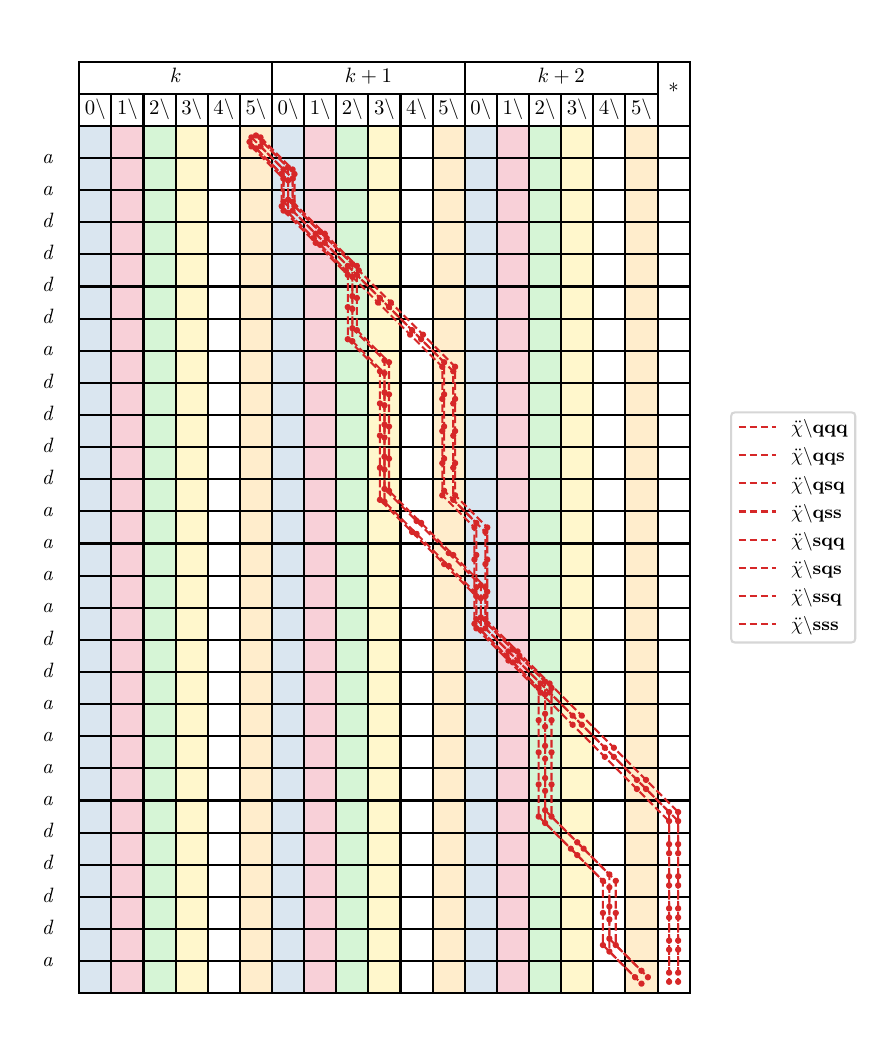}
		\caption{$\gtLastPos$ starting disqualified}
	\end{subfigure}
	\begin{subfigure}[t]{\theFrac\linewidth}
		\includegraphics[width=\linewidth]{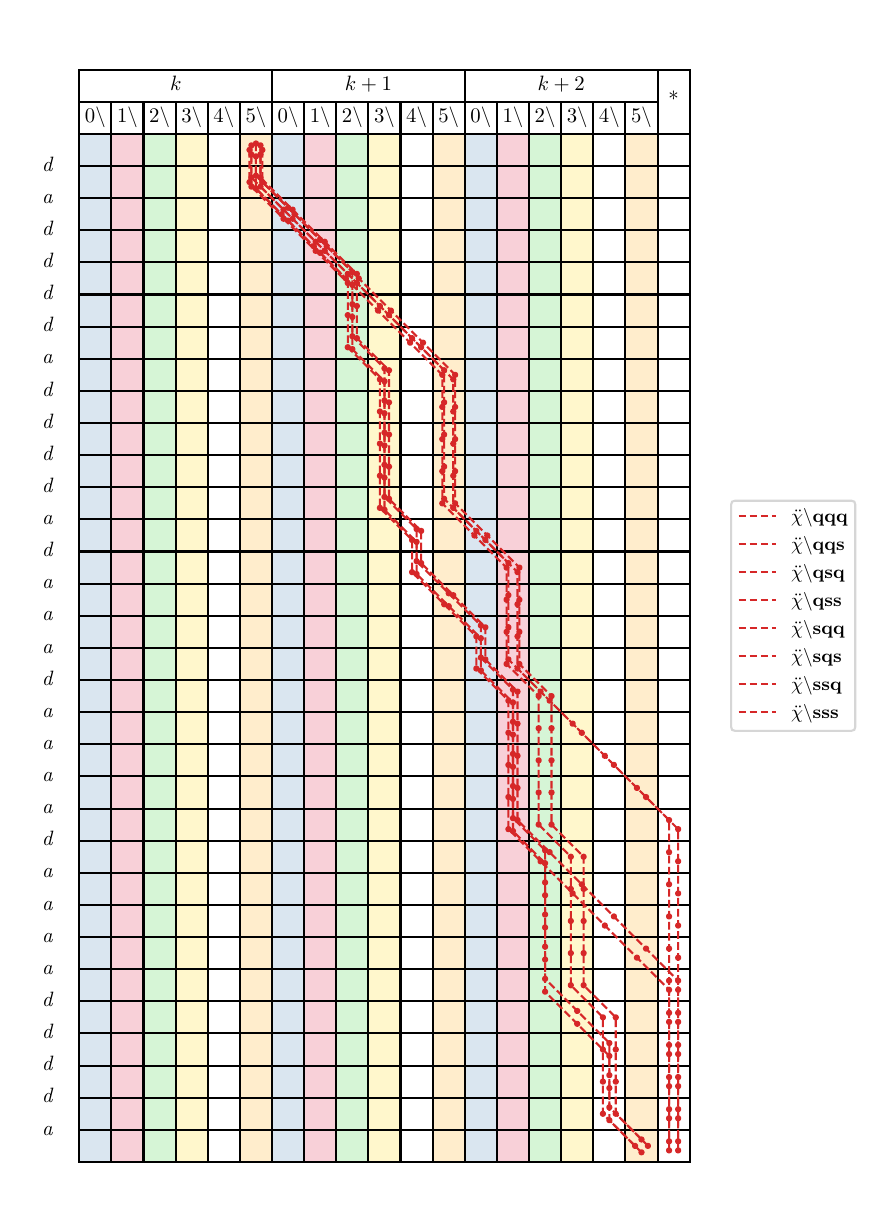}
		\caption{$\gtLastNeg$ starting disqualified}
	\end{subfigure}
	\begin{subfigure}[t]{\theFrac\linewidth}
		\includegraphics[width=\linewidth]{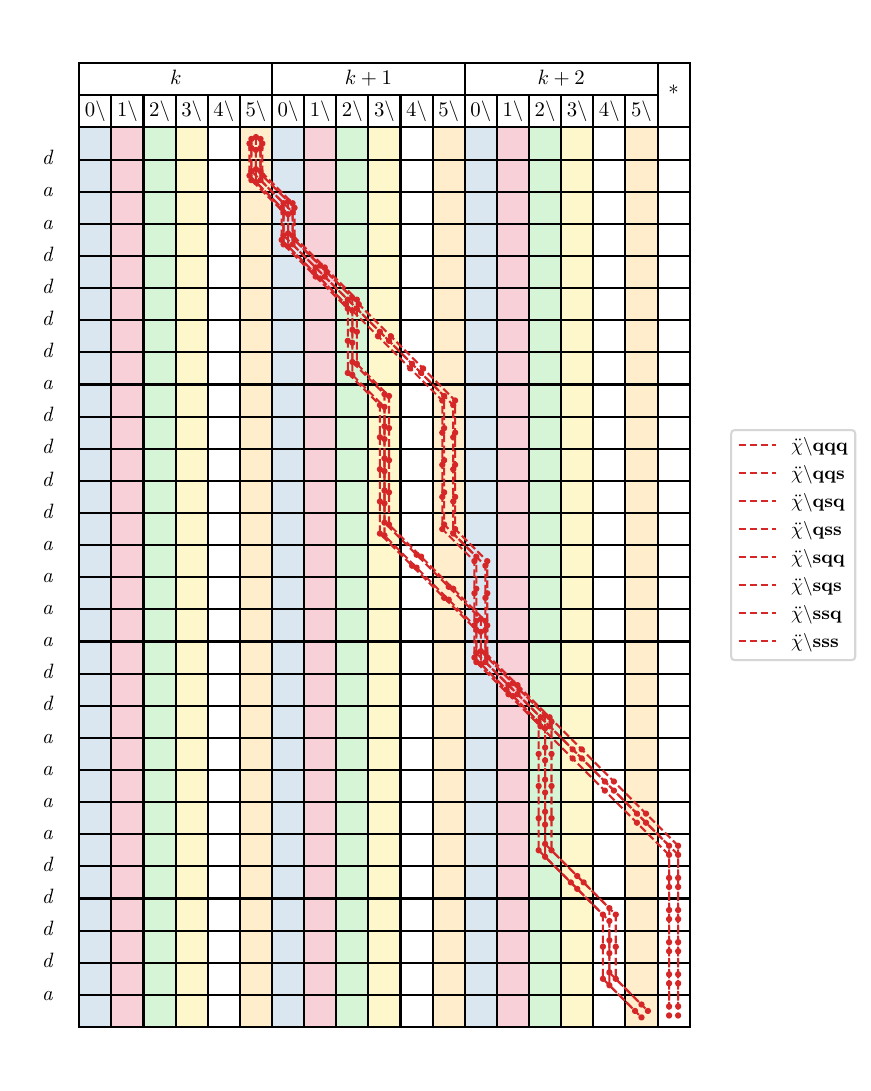}
		\caption{$\gtLastDk$ starting disqualified}
	\end{subfigure}

\caption{Trajectories of $\gtLastPos$, $\gtLastNeg$, and $\gtLastDk$
starting from a disqualified state
on the $k$-th measure
of the chain automata in $\designAlign \backslash \arbpile^*$.}
\label{fig:endtest-disq}
\end{figure}

%\clearpage
%\listoftodos
%\todototoc

\end{document}